\newcommand{\vol}{\mathrm{Vol}}
\newcommand{\I}{\mathrm{i}}
\newcommand{\ep}{\epsilon}
\newtheorem{thm}{Theorem}[section]
\newtheorem{lmm}[thm]{Lemma}
\newtheorem{cor}[thm]{Corollary}
\newtheorem{prop}[thm]{Proposition}
\theoremstyle{remark}
\newcommand{\cc}{\mathbb{C}}
\newcommand{\cov}{\mathrm{Cov}}
\newcommand{\dd}{\mathcal{D}}
\newcommand{\ee}{\mathbb{E}}
\newcommand{\ma}{\mathcal{A}}
\newcommand{\pp}{\mathbb{P}}
\newcommand{\ra}{\rightarrow}
\newcommand{\rr}{\mathbb{R}}
\newcommand{\tr}{\operatorname{Tr}}
\newcommand{\zz}{\mathbb{Z}}
\newcommand{\fpar}[2]{\frac{\partial #1}{\partial #2}}
\newcommand{\ftw}{\mathbb{T}}
\newcommand{\mb}{\mathcal{B}}
\numberwithin{equation}{section}
\renewcommand{\Re}{\operatorname{Re}}
\begin{document}
%\title[Gauge-string duality]{Towards a mathematical understanding of gauge-string duality}
\title{The leading term of the Yang--Mills free energy}
\author{Sourav Chatterjee}
\address{\newline Department of Statistics \newline Stanford University\newline Sequoia Hall, 390 Serra Mall \newline Stanford, CA 94305\newline \newline \textup{\tt souravc@stanford.edu}}
\thanks{Research partially supported by NSF grant DMS-1441513}
\keywords{Lattice gauge theories, quantum Yang--Mills theories, matrix integral, continuum limit}
\subjclass[2010]{70S15, 81T13, 81T25, 82B20}

\begin{abstract}
This article gives an explicit formula for the leading term of the free energy of three-dimensional $U(N)$ lattice gauge theory for any $N$, as the lattice spacing tends to zero. The proof is based on a novel technique that avoids phase cell renormalization. The technique also yields a similar formula for the four-dimensional theory, but only in the weak coupling~limit.
\end{abstract}

\maketitle

%\tableofcontents

\section{Introduction}\label{intro}
Quantum Yang--Mills theories, also called quantum gauge theories, are the basic components of the Standard Model of quantum mechanics. Lattice gauge theories are discrete approximations of quantum Yang--Mills theories. Although lattice gauge theories are well-defined objects, the rigorous mathematical construction of three- and four-dimensional quantum Yang--Mills theories in the continuum is still an unsolved problem. The solution of this problem has been a long-standing goal of the program of constructive quantum field theory, initiated in the Fifties and Sixties. The program saw enormous advances over the years, culminating in the Eighties with Tadeusz Ba\l aban's monumental proof of the ultraviolet stability of three- and four-dimensional lattice gauge theories. %Ultraviolet stability is a form of compactness that suggests the existence of subsequential limits. %The overwhelming complexity of Ba\l aban's work, however, turned out to be problematic for further progress. In spite of a considerable amount of work on two dimensional quantum Yang--Mills theories in the subsequent years, mathematical research on the three and four dimensional problems became essentially non-existent since the mid-nineties.

%, spread out in a sequence of fourteen papers over a period of six years

%The purpose of this paper is to introduce some new ideas and results in this domain.  A concrete outcome of this effort is

The main result of this article gives a formula for the leading term of the free energy of three-dimensional $U(N)$ Yang--Mills theory. A similar formula is also obtained in the weak coupling limit in dimension four.  The proof of the main result involves an interesting interplay of random matrix theory, Selberg-type integrals, properties of Gaussian measures and bare-hands probability theory.

%We will see in Section \ref{litsec} that a variety of formulas are known for two-dimensional quantum Yang--Mills theories. But in dimensions three and four, the existing theorems come mainly in the form of inequalities and other qualitative results. The formulas proved in this manuscript are probably the first explicit expressions of any kind for three- and four-dimensional non-Abelian quantum Yang--Mills theories.  The proof of the main result involves an interesting interplay of random matrix theory, Selberg-type integrals, properties of Gaussian measures and bare-hands probability theory. %The key idea behind the proof involves a new way of showing equivalence with lattice Maxwell theory at small scales without going through the complicated phase cell renormalization process --- or indeed, any explicit renormalization step at all. The formula for the leading term of the free energy reflects this equivalence. The non-Abelian effects are hidden in lower order terms.
  
%The above introduction may be somewhat incomprehensible to a non-expert. 

The next section introduces the mathematical model of lattice gauge theories in a way that should be fully accessible to the casual reader. The results are presented immediately after that. A more detailed introduction that is geared towards the general audience, with an extensive discussion of the background and references to the literature, is given in Section \ref{litsec}.

\section{Results}\label{results}
%The rigorous mathematical construction of non-Abelian quantum Yang--Mills theories in dimensions higher than two is one of the central open problems of mathematical physics. The case of dimension four was famously posed as one of the millennium prize problems by the Clay Institute in the year 2000. The most promising approach to solving this problem involves taking the continuum limit of lattice gauge theories. 

We will begin with a general result about the weak coupling limit of lattice gauge theories. The results about three- and four-dimensional Yang--Mills theories will be subsequently stated as corollaries of this general result. 

Fix two integers $d\ge 2$ and $N\ge 1$. Let $\zz^d$ be the $d$-dimensional integer lattice. Let $e_1,\ldots, e_d$ denote the standard basis vectors of $\rr^d$. Let $U(N)$ be the group of $N\times N$ unitary matrices.  Let $I$ denote the $N\times N$ identity matrix. Define a function $\phi:U(N) \ra \rr$ as
\begin{align}\label{phidef}
\phi(U):= \Re(\tr(I-U))\,.
\end{align}
Let $\Lambda$ be a finite subset of $\zz^d$. Suppose that for any two adjacent  
vertices $x,y\in \Lambda$, we have a unitary matrix $U(x,y)\in U(N)$, with the constraint that $U(y,x)=U(x,y)^{-1}$ for all pairs of adjacent $x,y$. Any such assignment of unitary matrices to edges will be called a configuration. The set of all configurations will be denoted by $U(\Lambda)$. Let $\Lambda'$ be the set of triples $(x,j,k)$ such that $1\le j<k\le d$ and the vertices $x$, $x+e_j$, $x+e_k$ and $x+e_j+e_k$ all belong to $\Lambda$. The elements of $\Lambda'$ will be called plaquettes. The plaquette $(x,j,k)$ may also be viewed as a square in $\zz^d$, with vertices $x$, $x+e_j$, $x+e_k$ and $x+e_j+e_k$. A visual representation of a plaquette is given in Figure~\ref{plaquettefig}. For each plaquette $(x,j,k)\in \Lambda'$ and each configuration $U\in U(\Lambda)$, define
\[
U(x,j,k) := U(x,x+e_j)U(x+ e_j, x+e_j+e_k) U(x+e_j+e_k, x+e_k)U(x+e_k, x)\,.
\]
Given a configuration $U\in U(\Lambda)$, define the Wilson action
\begin{align}\label{hdef}
S_\Lambda(U) := \sum_{(x,j,k)\in \Lambda'} \phi(U(x,j,k))\,,
\end{align}
where $\phi$ is defined in \eqref{phidef}. 
Let $\sigma_{\Lambda}$ be the product Haar measure on $U(\Lambda)$. For any $g_0>0$, let $\mu_{\Lambda, g_0}$ be the probability measure on $U(\Lambda)$ defined as
\[
d\mu_{\Lambda, g_0}(U) := \frac{1}{Z(\Lambda, g_0)}\exp\biggl(- \frac{1}{g_0^2 } S_\Lambda(U)\biggr) d\sigma_{\Lambda}(U)\, ,
\]
where $S_\Lambda$ is defined in \eqref{hdef} and $Z(\Lambda, g_0)$ is the normalizing constant, also called the partition function. This probability measure is called the lattice gauge theory on $\Lambda$ for the gauge group $U(N)$, with coupling strength $g_0$. The logarithm of the partition function is called the free energy of the model. Define %In the literature, the factor in front of $S_\Lambda (U)$ is often written as $1/2g_0^2$ instead of $1/g_0^2$. We use $1/g_0^2$ since it leads to slightly cleaner formulas. Clearly, this does not make any difference. Let 
\begin{equation*}%\label{fndef}
F(\Lambda, g_0) := \frac{\log Z(\Lambda, g_0)}{|\Lambda|}\,,
\end{equation*}
where $|\Lambda|$ denotes the number of vertices in $\Lambda$. This is sometimes called the free energy per site. %We will simply refer to it as the free energy. 

\begin{figure}[t]
%\begin{pdfpic}
\begin{pspicture}(1,.5)(11,5.5)
\psset{xunit=1cm,yunit=1cm}
\psline{*-*}(4.5,1)(7.5,1)
\psline{*-*}(7.5,1)(7.5,4)
\psline{*-*}(7.5,4)(4.5,4)
\psline{*-*}(4.5,4)(4.5,1)
\rput(4.3,.8){$x$}
\rput(8.1, .8){$x+e_j$}
\rput(8.5, 4.2){$x+e_j+e_k$}
\rput(3.9, 4.2){$x+e_k$}
\end{pspicture}
%\end{pdfpic}
\caption{The plaquette $(x,j,k)$.}
\label{plaquettefig}
\end{figure}
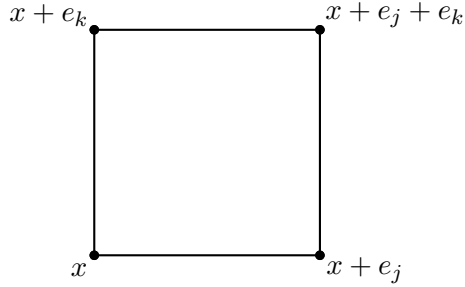

Let $B_{n}$ denote the box $\{0,1,2,\ldots, n-1\}^d$ and let $B_n'$ be the set of plaquettes of $B_n$, as defined above. Consider $U(N)$ lattice gauge theory on $B_n$ with coupling strength $g_0$. Our goal is to give an approximation for $F(B_n, g_0)$ when $n$ is large and $g_0$ is small. This is given in Theorem \ref{mainthm} below. The approximation involves a dimension-dependent constant that takes a few steps to define. We will now define this constant and then state the theorem.

\begin{figure}[t]
%\begin{pdfpic}
\begin{pspicture}(1,.5)(11,7.5)
\psset{xunit=1cm,yunit=1cm}
\psline[linestyle = dashed]{-}(3,1)(9,1)
\psline{-}(3,2)(9,2)
\psline{-}(3,3)(9,3)
\psline{-}(3,4)(9,4)
\psline{-}(3,5)(9,5)
\psline{-}(3,6)(9,6)
\psline{-}(3,7)(9,7)
\psline[linestyle = dashed]{-}(3,1)(3,7)
\psline[linestyle = dashed]{-}(4,1)(4,7)
\psline[linestyle = dashed]{-}(5,1)(5,7)
\psline[linestyle = dashed]{-}(6,1)(6,7)
\psline[linestyle = dashed]{-}(7,1)(7,7)
\psline[linestyle = dashed]{-}(8,1)(8,7)
\psline[linestyle = dashed]{-}(9,1)(9,7)
\end{pspicture}
%\end{pdfpic}
\caption{The box $B_n$ and its edges, for $n=7$ and $d=2$. The dashed lines represent the edges belonging to $E_n^0$. The solid lines are for edges in $E_n^1$.}
\label{en0fig}
\end{figure}
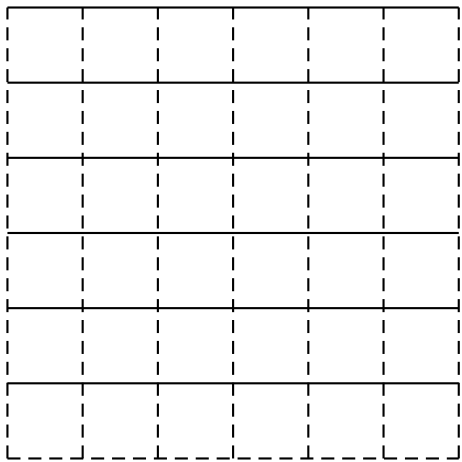

 A nearest-neighbor edge $(x,y)$ of $B_n$ will be called positively oriented if $x$ is smaller than $y$ in the lexicographic ordering. Let $E_n$ be the set of all positively oriented edges of $B_n$. Let $E_n^0$ be the subset of $E_n$ consisting of all edges $(x,y)$ such that for some $1\le j\le d$ and some $x_1,\ldots, x_j$,
\[
x = (x_1,\ldots, x_{j-1}, x_j, 0,0,\ldots,0)
\]
and 
\[
y = (x_1,\ldots, x_{j-1}, x_j+1, 0,0,\ldots, 0)\,.
\]
Let $E_n^1 := E_n \backslash E_n^0$. Expert readers will recognize that the set  $E_n^0$ is related to axial gauge fixing. Pictures of $E_n^0$ and $E_n^1$ in the two-dimensional case are shown in Figure \ref{en0fig}. 

Let $\rr^{E_n}$ be the vector space of all maps from $E_n$ into $\rr$. For $t\in \rr^{E_n}$ and $(x,y)\in E_n$, let $t(y,x):=-t(x,y)$. For any element $t\in \rr^{E_n}$ and any plaquette $(x,j,k)\in B_n'$, let
\begin{align*}
t(x,j,k) &:= t(x,x+e_j) + t(x+ e_j, x+e_j+e_k) \\
&\qquad + t(x+e_j+e_k, x+e_k) + t(x+e_k, x)\,,
\end{align*}
and define
\[
M_n(t) := \sum_{(x,j,k)\in B_n'} t(x,j,k)^2\,.
\]
Then $M_n$ is a quadratic form on $\rr^{E_n}$. Let $M_n^0$ denote the restriction of $M_n$ to the subspace of $\rr^{E_n}$ consisting of all $t$ such that $t(x,y)=0$ for each $(x,y)\in E_n^0$. Since this subspace can be naturally identified with $\rr^{E_n^1}$, $M_n^0$ has a representation as an $|E_n^1| \times |E_n^1|$ matrix. This gives a natural definition of the determinant of $M_n^0$. Define
\begin{equation}\label{kndeq}
K_{n,d} := \frac{-\log \det M_n^0}{2n^d}\,.
\end{equation}
The following theorem gives an asymptotic formula for  $F(B_n,g_0)$ in the limit as $n\ra\infty$ and $g_0\ra 0$, after an appropriate renormalization. This is the main result of this paper. The results about three- and four-dimensional Yang--Mills theories that were mentioned in the abstract and the introduction are consequences of this theorem, stated later in this section.
\begin{thm}\label{mainthm}
Take any $d\ge 2$ and $N\ge 1$. Let $K_{n,d}$ and $F(B_n, g_0)$ be defined as above. Then the limit
\[
K_d := \lim_{n\ra\infty} K_{n,d}
\] 
exists, is finite, and 
\[
\lim_{\substack{n\ra\infty\\ g_0\ra 0}} \biggl(F(B_n, g_0) - \frac{1}{2}\biggl(d-1 -\frac{d}{n}+\frac{1}{n^{d}}\biggr)N^2 \log (g_0^2)\biggr) =  (d-1)\log \biggl( \frac{\prod_{j=1}^{N-1} j!}{(2\pi)^{N/2}}\biggr) + N^2 K_d\,.
\]
\end{thm}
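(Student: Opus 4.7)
Since $\phi(U)\ge 0$ with equality iff $U=I$, the density $\exp(-S_\Lambda(U)/g_0^2)$ concentrates on configurations close to the identity as $g_0\to 0$. My plan is to gauge-fix along $E_n^0$, expand each remaining link in the Lie algebra $\mathfrak{u}(N)$, and carry out a Gaussian integral to leading order; the explicit constants in the theorem emerge from the Macdonald volume formula for $U(N)$ together with the definition of $K_{n,d}$.

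\emph{Step 1 (gauge fixing).} A direct count shows $|E_n^0|=n^d-1=|B_n|-1$, so $E_n^0$ is a spanning tree of $B_n$. By gauge invariance of $S_\Lambda$ and $\sigma_\Lambda$ and the normalization of Haar measure, one may rewrite
\[
Z(B_n,g_0)=\int_{U(N)^{|E_n^1|}}\exp\!\bigl(-S_\Lambda^0(U)/g_0^2\bigr)\prod_{(x,y)\in E_n^1}d\sigma(U(x,y)),
\]
where $S_\Lambda^0$ denotes the Wilson action with $U(x,y)=I$ imposed on every $(x,y)\in E_n^0$.

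\emph{Step 2 (Lie-algebra expansion and Gaussian integration).} Writing each remaining link as $U(x,y)=\exp(ig_0 H(x,y))$ with $H(x,y)$ Hermitian, Haar measure satisfies, near the identity,
\[
d\sigma(U)=\frac{g_0^{N^2}}{\vol(U(N))}\,dH\cdot\bigl(1+O(g_0^2\|H\|^2)\bigr).
\]
Multiplicativity around each plaquette together with $\phi(e^{iA})=\tfrac12\tr(A^2)+O(\|A\|^4)$ for small Hermitian $A$ (only even Taylor powers survive, since $\tr(A^{2k+1})$ is real and annihilated by the outer $\Re$) gives
\[
\frac{1}{g_0^2}S_\Lambda^0(U)=\frac12\sum_{(x,j,k)\in B_n'}\tr\bigl(\tilde H(x,j,k)^2\bigr)+O\bigl(g_0\|H\|^3\bigr),
\]
where $\tilde H(x,j,k)$ is the signed sum of the four $H$'s around the plaquette (the $O(g_0)$ error absorbs both the next Taylor term and the $O(g_0^2)$ Baker--Campbell--Hausdorff correction to the plaquette product). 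Decomposing $H(e)=\sum_{a=1}^{N^2}t_a(e)T_a$ in an orthonormal basis $\{T_a\}$ of the Hermitian matrices, the leading quadratic form in the vectors $t_a\in\rr^{E_n^1}$ is exactly $\tfrac12\sum_{a=1}^{N^2}M_n^0(t_a)$, and Gaussian integration yields
\[
Z(B_n,g_0)\sim\frac{g_0^{N^2|E_n^1|}\,(2\pi)^{N^2|E_n^1|/2}\,(\det M_n^0)^{-N^2/2}}{\vol(U(N))^{|E_n^1|}}.
\]
Taking logarithms, dividing by $n^d$, substituting $|E_n^1|/n^d=d-1-d/n+1/n^d$, and using the Macdonald identity $\vol(U(N))=(2\pi)^{N(N+1)/2}/\prod_{j=1}^{N-1}j!$ produces precisely the right-hand side of the theorem, the $(2\pi)$-powers combining to give the $(2\pi)^{N/2}$ denominator.

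\emph{Step 3 (error control and existence of $K_d$).} The main obstacle is showing that the approximation in Step 2 is valid uniformly as $n\to\infty$ and $g_0\to 0$. Truncating to $\{\|H\|\le R\}$ with $R$ growing slowly (say $R=g_0^{-1/4}$), the cubic and higher Taylor corrections contribute a relative error of size $O(g_0 R^3)=o(1)$, while outside this set $\phi$ is bounded below away from $I$ and Gaussian concentration gives super-polynomially small tail contributions per edge; the delicate point is that these estimates must hold uniformly in $n$, ruling out rare but coordinated large-field excursions, which is presumably where the sharp Gaussian-measure and Selberg-type tools alluded to in the abstract enter. Separately, $K_d=\lim K_{n,d}$ exists by Fourier analysis: on the periodic version of the box, $M_n$ diagonalizes over the discrete Brillouin zone and its per-site log-determinant converges to an explicit integral over $[0,2\pi]^d$, and the axial-gauge correction relating $M_n^0$ to the periodic variant contributes only boundary terms of order $n^{d-1}$, which vanish after dividing by $n^d$.
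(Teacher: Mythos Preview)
Your Steps 1 and 2 are correct and match the paper's formal computation exactly: axial gauge fixing on the tree $E_n^0$, Lie-algebra expansion, and the Gaussian integral produce the stated right-hand side, with the constants combining via the $U(N)$ volume formula as you indicate. The identification $|E_n^1|/n^d = d-1-d/n+1/n^d$ is Lemma~\ref{en1lmm} in the paper.

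The genuine gap is Step 3, and it is the entire content of the theorem. Your truncation $\|H\|\le R=g_0^{-1/4}$ controls the Taylor remainder \emph{on the truncated set}, but you have no argument that the gauge-fixed measure actually concentrates there when $n$ is large. After axial gauge fixing, the only a~priori information is that $S_{B_n}(U)\lesssim n^d(\log\beta)/\beta$ with high probability (this is Corollary~\ref{smallcor}, a direct consequence of the crude lower bound $Z\ge e^{-Cn^d\log\beta}$). A discrete Poincar\'e inequality (Lemma~\ref{p1lmm}) then gives $\|I-U(x,y)\|\le (2|x|_1 S_{B_n}(U))^{1/2}\lesssim n^{(d+1)/2}(\log\beta/\beta)^{1/2}$, which is small only if $n$ grows slower than a fixed power of $\beta$. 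Your appeal to ``Gaussian concentration per edge'' is circular: you cannot invoke Gaussian tails before the Gaussian approximation is justified, and in the limiting Gaussian model $\tau_n$ itself the variance of a single edge variable is of order $n^{d+2}$ (the bottom eigenvalue of $M_n^0$ is only $\gtrsim n^{-(d+2)}$, Lemma~\ref{matrixlmm}), so there is no uniform-in-$n$ per-edge tail bound.

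The paper resolves this by treating the upper and lower bounds with completely different mechanisms. For the \emph{upper bound}, the key observation is the subadditivity $F(B_n,g_0)\le(1-Cm/n)F(B_m,g_0)$ (Lemma~\ref{uplmm2}), obtained by dropping plaquettes that straddle sub-box boundaries; this lets one replace an arbitrarily large $n$ by a small $m$ growing like a tiny power of $\beta$, where the Laplace approximation is honest. For the \emph{lower bound}, one restricts the integral to the good set $A=\{U:\|I-U(x,y)\|\le 2r\ \forall (x,y)\}$ and must show that $\mu_{B_n,g_0}(A)=\exp(o(n^d))$ rather than $\exp(-cn^d)$. This is done by passing to the Maxwell model and proving a small-ball estimate for it (Theorem~\ref{latticemain}): one tiles $B_n$ by boxes of width $m\sim\sqrt{n}$, forces the $O(n^d/m)$ boundary edges to be tiny (Theorem~\ref{latticethm1}, using the eigenvalue bounds on the covariance), and then uses that conditionally the interiors decouple into independent size-$m$ problems each of which lands in the good set with probability $\ge 1/2$ (Theorem~\ref{latticethm2}). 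This multi-scale decoupling is the substantive new idea and is what you are missing.

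Finally, your proposed proof that $K_d$ exists via Fourier diagonalization on the torus is not what the paper does. The paper instead proves convergence of $F_M(B_n)$ by a box-subdivision comparison (Lemmas~\ref{cdlmm1} and~\ref{cdlmm2}), again exploiting Theorem~\ref{latticethm1} to decouple sub-boxes. Your Fourier route may be workable, but the claim that ``the axial-gauge correction contributes only $O(n^{d-1})$'' needs justification: axial gauge breaks translation invariance, and the comparison between $\det M_n^0$ and a torus determinant is not obviously a boundary effect.
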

Note that in the above theorem, $n$ and $g_0$ are allowed to vary independently. The terms $d/n$ and $1/n^d$  on the left can be ignored if $\log g_0$ blows up more slowly than $n$. If, on the other hand, $g_0$ goes to zero so fast that $\log  g_0$ blows up faster than $n^d$, then these terms cannot be dropped. The independence of $n$ and $g_0$ in this result does not, however, preclude the possibility that there may be lower order terms that vanish in the limit, whose behaviors are determined by the relation between $n$ and $g_0$. %The expression displayed in Theorem \ref{mainthm} arises from what physicists would call the free-field approximation. The non-Abelian curvature effects are hidden in lower order terms that vanish in the limit.%The fact that $n$ and $g_0$ are allowed to vary independently reflects that the behavior of the leading term of the free energy is determined locally.

Note also that the constant $K_d$ depends only on the dimension $d$. In particular, $K_d$ has no dependence on $N$. It would be interesting to see if $K_d$ can be written explicitly instead of as a limit. %has an explicit form. %can also be expressed by an explicit formula. %Therefore the role of $N$ is completely explicit in the formula for the limiting free energy given in Theorem~\ref{mainthm}. 

%The above theorem gives an asymptotic formula for the free energy per site. A much more ambitious goal would be to give an asymptotic formula for the partition function itself, instead of its rescaled logarithm. The expression displayed in Theorem \ref{mainthm} arises from what physicists would call the free-field approximation. The division of $\log Z(B_n, g_0)$ by $n^d$ washes away non-Abelian effects. %Comparisons with existing results are given in Section \ref{litsec}.

Very briefly, the intuition behind the proof of Theorem \ref{mainthm} is as follows. When $g_0^2$ is small, the theory behaves like a Gaussian process at small scales. This is the first step of the proof. The second step is to show that the leading term of the free energy is determined wholly by small scale behavior. The main challenge lies in proving that the cumulative effect of the larger scales is of smaller order than the leading term. A more elaborate sketch of the proof is given in Section \ref{sketchsec}.

Let us now consider lattice gauge theory on the scaled lattice $\ep \zz^d$, where we would eventually want to send $\ep$ to zero. The definition of the theory remains exactly the same as for the unscaled lattice, except that $g_0^2$ is replaced by 
\[
g^2 \ep^{4-d}\,,
\]
where $g$ is the new symbol for the coupling strength, and the box $B_n$ is replaced by the scaled box $\ep B_n$. The exponent $4-d$ of the lattice spacing $\ep$ comes from Wilson's prescription for discretizing quantum Yang--Mills theories. The heuristic justification for this exponent will be discussed in Section \ref{litsec}. 

The continuum limit is taken by sending $\ep$ to zero and letting $n$ grow either like $R/\ep$ where $R$ is fixed, which corresponds to the continuum limit in a box of side-length $R$, or letting $n$ grow faster than $1/\ep$, which corresponds to the continuum limit in the whole of $\rr^d$. A key object of interest is the behavior of the free energy of the model under either of the above limits. The following theorem, which is a simple corollary of Theorem \ref{mainthm}, gives a formula for the leading term of the free energy in the continuum limit in dimensions two and three.
\begin{thm}\label{d23thm}
Take any $N\ge 1$ and take $d= 2$ or $3$. Let $Z(n,\ep,g)$ be the partition function of $d$-dimensional $U(N)$ lattice gauge theory with coupling strength $g$ in the box $\{0,\ep,2\ep,\ldots, (n-1)\ep\}^d$, as defined above. Suppose that $n\ra\infty$ and $\ep\ra 0$ simultaneously such that $n\ep$ stays bounded away from zero. Then for any $g>0$,
\begin{align*}
\log Z(n, \ep, g) &= n^d \biggl( \frac{1}{2}(d-1)N^2\log (g^2\ep^{4-d}) + (d-1)\log \biggl( \frac{\prod_{j=1}^{N-1} j!}{(2\pi)^{N/2}}\biggr) + N^2 K_d\biggr) + o(n^d)\,,
\end{align*}
where $K_d$ is as in Theorem \ref{mainthm}, and $o(n^d)$ denotes a term that tends to zero when divided by $n^d$.
\end{thm}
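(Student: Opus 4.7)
The plan is to derive Theorem \ref{d23thm} directly from Theorem \ref{mainthm} by exploiting the rescaling invariance of the lattice model and verifying that the two error terms from Theorem \ref{mainthm} contribute only $o(n^d)$ under the continuum-limit regime.

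First I would observe that the scaled lattice $\{0,\ep,\ldots,(n-1)\ep\}^d$ is combinatorially identical to $B_n$: the Haar measure factorizes over edges and the Wilson action is a sum over plaquettes. Hence, with the coupling replaced by $g_0^2 := g^2\ep^{4-d}$, we have the exact identity
\[
Z(n,\ep,g) = Z(B_n, g_0), \qquad \text{so} \qquad \log Z(n,\ep,g) = n^d\, F(B_n,g_0).
\]
For $d=2,3$ one has $g_0 = g\,\ep^{(4-d)/2}\to 0$ as $\ep\to 0$, and since $n\ep$ is bounded away from zero while $\ep\to 0$, $n\to\infty$ automatically; so the regime $n\to\infty,\ g_0\to 0$ required by Theorem \ref{mainthm} is indeed in force.

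Next I would apply Theorem \ref{mainthm} to write
\[
F(B_n,g_0) = \frac{1}{2}\Bigl(d-1-\frac{d}{n}+\frac{1}{n^d}\Bigr) N^2 \log(g_0^2) + (d-1)\log\!\Bigl(\tfrac{\prod_{j=1}^{N-1} j!}{(2\pi)^{N/2}}\Bigr) + N^2 K_d + o(1),
\]
and multiply through by $n^d$. The target expression appears with an additive error
\[
E_n := \tfrac{1}{2}\bigl(-d\, n^{d-1} + 1\bigr) N^2 \log(g_0^2) + o(n^d).
\]
So the only thing left is to verify $E_n = o(n^d)$, i.e.\ $n^{d-1}\log(g_0^2) = o(n^d)$, i.e.\ $\log(g_0^2)/n \to 0$.

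For this final step, which is the only nontrivial point, I would use the hypothesis $n\ep \ge c > 0$. Since $\log(g_0^2) = \log(g^2) + (4-d)\log \ep$ and $g$ is fixed, it suffices to show $\log\ep/n \to 0$. But $n\ep\ge c$ gives $\ep\ge c/n$, so $|\log\ep| \le \log n + |\log c|$, and thus $|\log\ep|/n \to 0$ as $n\to\infty$. The same bound shows that $\log(g_0^2)/n^d \to 0$, and hence the $n^d\cdot \frac{1}{n^d}\log(g_0^2)$ piece is even smaller. This gives $E_n = o(n^d)$ and completes the proof. The main (very mild) obstacle is simply keeping track of the interplay between the $-d/n$ correction in Theorem \ref{mainthm} and the growth of $|\log\ep|$; the hypothesis $n\ep\not\to 0$ is precisely what keeps this error term subleading, and it is the only place where this hypothesis enters.
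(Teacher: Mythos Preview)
Your proof is correct and is exactly the derivation the paper has in mind: the paper does not give a separate proof of Theorem \ref{d23thm}, remarking only that it is ``clearly evident'' how it follows from Theorem \ref{mainthm}, and your argument makes that deduction explicit. In particular, your observation that the hypothesis $n\ep\ge c>0$ forces $|\log g_0|=O(\log n)=o(n)$ is precisely the paper's comment that ``the terms $d/n$ and $1/n^d$ on the left can be ignored if $\log g_0$ blows up more slowly than $n$.''
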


In dimension four, the notable feature is that  the coupling strength $g^2 \ep^{4-d}$ does not depend on~$\ep$, since $4-d=0$.  As a consequence, the partition function $Z(n,\ep,g)$ has no explicit dependence on~$\ep$. (It seems, however, that nowadays most physicists believe that there should some logarithmic correction.) Theorem~\ref{mainthm} still gives some useful information, but only in the weak coupling limit --- that is, if we send $g$ to zero simultaneously as $n$ is sent to infinity and $\ep$ is sent to zero.
\begin{thm}\label{d4thm}
Take any $N\ge 1$ and take $d=4$. Let $Z(n,\ep,g)$ be as in Theorem~\ref{d23thm} and $K_d$ be as in Theorem~\ref{mainthm}. Suppose that $n\ra\infty$, $\ep\ra 0$ and $g\ra 0$ simultaneously, with no restrictions on how they are related to each other. Then 
\begin{align*}
\log Z(n,\ep, g) &= n^4\biggl(\biggl(\frac{3}{2} -\frac{2}{n}+\frac{1}{2n^{4}}\biggr)N^2\log (g^2) + 3\log \biggl( \frac{\prod_{j=1}^{N-1} j!}{(2\pi)^{N/2}}\biggr) + N^2 K_4\biggr) + o(n^4)\,.
\end{align*}
\end{thm}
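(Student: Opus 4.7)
The plan is to deduce Theorem~\ref{d4thm} directly from Theorem~\ref{mainthm}, exploiting the special feature of dimension four: since $4-d=0$, the prescribed lattice coupling $g_0^2 := g^2\ep^{4-d}$ collapses to $g^2$ and in particular is independent of~$\ep$.

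First I would observe that plaquettes, edges, configurations, the Wilson action and the product Haar measure are all combinatorial in nature, depending only on the graph structure and on the matrices assigned to edges rather than on the geometric coordinates of vertices. The scaled box $\ep B_n$ is graph-isomorphic to $B_n$ via the obvious bijection, and under this bijection the Wilson action $S_{\ep B_n}$ and the Haar measure $\sigma_{\ep B_n}$ are carried to $S_{B_n}$ and $\sigma_{B_n}$. Consequently, for every coupling $g_0>0$, the partition function of $U(N)$ lattice gauge theory on $\ep B_n$ with coupling $g_0$ equals $Z(B_n, g_0)$. Specializing to $g_0 = g$ (the $d=4$ case of the prescription $g_0^2 = g^2\ep^{4-d}$), this gives
\begin{equation*}
\log Z(n,\ep,g) = \log Z(B_n, g) = n^4 \, F(B_n, g).
\end{equation*}

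Next I would apply Theorem~\ref{mainthm} with $d=4$ and $g_0 := g$. Since $n\ra\infty$ and $g\ra 0$ by hypothesis, the joint limit in Theorem~\ref{mainthm} applies and yields
\begin{equation*}
F(B_n, g) = \frac{1}{2}\biggl(3 - \frac{4}{n} + \frac{1}{n^4}\biggr) N^2 \log(g^2) + 3\log\biggl( \frac{\prod_{j=1}^{N-1} j!}{(2\pi)^{N/2}} \biggr) + N^2 K_4 + o(1),
\end{equation*}
where $o(1)$ vanishes in the joint limit $n\ra\infty, g\ra 0$. Multiplying through by $n^4$ and rewriting $\tfrac{1}{2}(3 - 4/n + 1/n^4) = \tfrac{3}{2} - \tfrac{2}{n} + \tfrac{1}{2n^4}$ reproduces exactly the formula in Theorem~\ref{d4thm}, with the error term $n^4 \cdot o(1)$ matching the $o(n^4)$ in the statement.

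There is no serious obstacle here; this is essentially a one-line corollary once the $\ep$-invariance of the partition function in dimension four is noted. The only subtlety worth emphasizing is that the convergence asserted by Theorem~\ref{mainthm} is a genuine joint limit with no prescribed relationship between $n$ and $g_0$, which is precisely what justifies the phrase ``no restrictions on how they are related'' in Theorem~\ref{d4thm}; the parameter $\ep$ may then be sent to zero arbitrarily, as it has simply dropped out of the problem.
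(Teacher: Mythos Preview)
Your proposal is correct and matches the paper's approach exactly; in fact the paper does not even write out a proof of Theorem~\ref{d4thm}, stating only that ``it is clearly evident how Theorem~\ref{d23thm} and Theorem~\ref{d4thm} follow from Theorem~\ref{mainthm},'' so your argument is precisely the intended one-line specialization.
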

Since it is clearly evident how Theorem \ref{d23thm} and Theorem \ref{d4thm} follow from Theorem \ref{mainthm}, we will only prove Theorem \ref{mainthm}.  Section \ref{litsec} contains a review of the physical and mathematical backgrounds and a discussion of existing results.  A sketch of the proof is given in Section \ref{sketchsec}. Section \ref{notations} contains a list of notations and conventions that will be used in the proof. The remaining sections are devoted to the proof of Theorem \ref{mainthm}. These sections contain a number of results that may be of independent interest.

%\vskip.5in
 
\section{Background and literature}\label{litsec}
%The origin of quantum Yang--Mills theories lies in Hermann Weyl's observation that Maxwell's equations for electromagnetism can be written as the Euler--Lagrange equations for minimizing a certain functional that is now known as the Yang--Mills action for the gauge group $U(1)$. Physicists later realized that the governing equations for the strong force and the weak force 
Recall that the Lie algebra $\mathfrak{u}(N)$ of the Lie group $U(N)$ is the set of all $N\times N$ skew-Hermitian matrices. A $U(N)$ connection form on $\rr^d$ is a smooth map from $\rr^d$ into $\mathfrak{u}(N)^d$. If $A$ is a $U(N)$ connection form, its value $A(x)$ at a point $x$ is a $d$-tuple $(A_1(x),\ldots, A_d(x))$ of skew-Hermitian matrices. In the language of differential forms, 
\[
A = \sum_{j=1}^d A_j dx_j\,.
\]
The curvature form $F$ of a connection form $A$ is the $\mathfrak{u}(N)$-valued $2$-form
\[
F = dA + A \wedge A\,.
\]
This means that at each point $x$, $F(x)$ is a $d\times d$ array of skew-Hermitian matrices of order $N$, whose $(j,k)^{\mathrm{th}}$ entry is the matrix
\[
F_{jk}(x) = \fpar{A_k}{x_j} - \fpar{A_j}{x_k} + [A_j(x), A_k(x)]\,,
\]
where $[B, C] = BC - CB$ denotes the commutator of two matrices $B$ and $C$.

Let $\ma$ be the space of all $U(N)$ connection forms on $\rr^d$. The Yang--Mills action on this space is the function
\[
S_{\mathrm{YM}}(A) := -\int_{\rr^d} \tr(F\wedge *F) \,,
\]
where $F$ is the curvature form of $A$ and $*$ denotes the Hodge $*$-operator. Explicitly, this is
\[
S_{\mathrm{YM}}(A) = -\int_{\rr^d}\sum_{j,k=1}^d \tr(F_{jk}(x)^2)\,dx \,.
\]
The Euclidean version of $U(N)$ quantum Yang--Mills theory on $\rr^d$ (henceforth shortened as ``Euclidean Yang--Mills theory'') is informally described as the probability measure
\[
d\mu(A) = \frac{1}{Z}\exp\biggl(-\frac{1}{4g^2}S_{\mathrm{YM}}(A)\biggr) \dd A\,,
\]
where $A$ belongs to the space $\ma$ of all $U(N)$ connection forms, $S_{\mathrm{YM}}$ is the Yang--Mills functional defined above,  
\[
\dd A = \prod_{j=1}^d \prod_{x\in \rr^d} d(A_j(x))
\]
is ``infinite-dimensional Lebesgue measure'' on $\ma$, $g$ a positive coupling constant, and $Z$ is the normalizing constant that makes this a probability measure. 

The above description of Euclidean Yang--Mills theory is not mathematically valid, partly due to the non-existence of an infinite-dimensional Lebesgue measure on $\ma$. The problem of giving a meaning to Euclidean Yang--Mills theory, in a way that makes it possible to extend the definition to the Minkowski version of quantum Yang--Mills theory (where the additional feature is that there is an $\I$ in the exponent) via Wick rotation, is one of the central open questions of mathematical physics. The importance of this question was emphasized by its inclusion in the list of ``millennium prize problems'' posed by the Clay Institute (see \cite{jaffewitten}). The reason behind the importance of this question is that quantum Yang--Mills theories are the building blocks of the Standard Model of quantum mechanics.

The most popular approach to constructing Euclidean Yang--Mills theories was proposed by \cite{wilson74}. Wilson's idea was to define a discrete version of the theory, with the hope of taking a continuum limit as the lattice spacing is sent to zero. This is the lattice gauge theory that was defined in Section \ref{results}. Although we defined it only for the group $U(N)$, it can be similarly defined for other Lie groups. The group $SU(3)\times SU(2)\times U(1)$ is the one that arises in the Standard Model.

Wilson's logic behind the definition of lattice gauge theory goes as follows. For simplicity, we will restrict the discussion to $U(N)$ lattice gauge theory. First, let us discretize the space $\rr^d$ as the scaled lattice $\ep \zz^d$. Take a connection form $A\in \ma$ and a point $x\in \ep\zz^d$. Then $A(x)$ is made up of $d$ components $A_1(x),\ldots, A_d(x)$, each of which is an element of $\mathfrak{u}(N)$. For each $j$, Wilson obtains from $A_j(x)$ a unitary matrix $U_j(x) = e^{\ep A_j(x)}$. The matrix $U_j(x)$ is now rewritten as a matrix $U(x,x+\ep e_j)$ attached to the positively oriented edge $(x,x+\ep e_j)$, with the usual convention that $U(y,x)=U(x,y)^{-1}$. By this prescription, we obtain a configuration of unitary matrices attached to edges from the values of a connection form on the vertices of $\ep\zz^d$. Formally, the Wilson action for a configuration $U$ is given by
\[
S(U) = \sum_{x\in \ep\zz^d} \sum_{1\le j<k\le d}\phi(U(x,j,k))\,,
\]
where $\phi$ and $U(x,j,k)$ are as in Section \ref{results}. Recall that
\begin{align*}
U(x,j,k) &= U(x,x+\ep e_j)U(x+\ep e_j, x+\ep e_j+\ep e_k) U(x+\ep e_j+\ep e_k, x+\ep e_k) U(x+\ep e_k,x)\\
&= e^{\ep A_j(x)} e^{\ep A_k(x+\ep e_j)} e^{-\ep A_j(x+\ep e_k)} e^{-\ep A_k(x)}\,.
\end{align*}
Recall the Baker--Campbell--Hausdorff formula for products of matrix exponentials:
\begin{align*}
e^B e^C &= \exp\biggl(B + C + \frac{1}{2}[B,C] + \text{higher commutators}\biggr)\,.
\end{align*}
Iterating this gives
\begin{align*}
e^{B_1}e^{B_2}\cdots e^{B_n} &= \exp\biggl(\sum_{j=1}^n B_j + \frac{1}{2}\sum_{1\le j<k\le n}[B_j,B_k] + \text{higher commutators}\biggr)\,.
\end{align*}
Next, recall that the eigenvalues of a skew-Hermitian matrix are all purely imaginary, and that the commutator of two skew-Hermitian matrices is skew-Hermitian. Consequently, the term within the exponential on the right side of the above display is skew-Hermitian and therefore has a purely imaginary trace. This implies that if the entries of the matrices $B_1,\ldots,B_n$ are of order $\ep$ and if the entries of $B_1+\cdots+B_n$ are of order $\ep^2$, then
\begin{align*}
\Re(\tr(I- e^{B_1}e^{B_2}\cdots e^{B_n})) &= -\frac{1}{2}\tr\biggl[\biggl(\sum_{j=1}^n B_j + \frac{1}{2}\sum_{1\le j<k\le n}[B_j,B_k]\biggr)^2\biggr] + O(\ep^5)\,,
\end{align*}
where the real part of the trace was replaced by the trace on the right because the square of a skew-Hermitian matrix has real eigenvalues. Writing
\[
A_k(x+\ep e_j) = A_k(x)+\ep \fpar{A_k}{x_j} + O(\ep^2)\,, \ \ \ A_j(x+\ep e_k) = A_j(x) + \ep \fpar{A_j}{x_k} + O(\ep^2)\,,
\]
we see that 
\begin{align*}
A_j(x)+ A_k(x+\ep e_j)- A_j(x+\ep e_k) -A_k(x) = \ep\biggl(\fpar{A_k}{x_j} - \fpar{A_j}{x_k}\biggr) + O(\ep^2)\,.
\end{align*}
Therefore, by the remarks made above,
\begin{align*}
\phi(U(x,j,k)) &= \Re(\tr(I-e^{\ep A_j(x)} e^{\ep A_k(x+\ep e_j)} e^{-\ep A_j(x+\ep e_k)} e^{-\ep A_k(x)}))\\
&= -\frac{1}{2}\ep^4 \tr\biggl[\biggl(\fpar{A_k}{x_j} - \fpar{A_j}{x_k} + [A_j(x), A_k(x)]\biggr)^2\biggr] + O(\ep^5)\\
&= -\frac{1}{2}\ep^4\tr(F_{jk}(x)^2)+O(\ep^5)\,.
\end{align*}
This gives the formal approximation
\begin{align*}
-\frac{1}{g^2\ep^{4-d}}S(U)  &= -\frac{1}{g^2\ep^{4-d}}\sum_{x\in \ep\zz^d}\sum_{1\le j<k\le d} \phi(U(x,j,k))\\
&\approx \frac{1}{4g^2} \sum_{x\in \ep \zz^d}\sum_{j,k=1}^d\ep^d \tr(F_{jk}(x)^2)\\
&\approx \frac{1}{4g^2}\int_{\rr^d}\sum_{j,k=1}^d \tr(F_{jk}(x)^2)\, dx = -\frac{1}{4g^2}S_{\mathrm{YM}} (A)\,.
\end{align*}
The above argument was used by Wilson to justify the approximation of Euclidean Yang--Mills theory by lattice gauge theory, with the $\ep^{4-d}$ term in front of the action. Although the heuristic looks fairly convincing, no one has been able to make rigorous mathematical sense of the convergence of lattice gauge theory to its continuum limit, except in some special cases. The reason for the specific form of the Wilson action, instead of something simpler, is that any discretization of the Yang--Mills action must retain a crucial property known as gauge invariance. The Wilson action has this property, but is not the only one. An action based on the heat kernel on the gauge group, known as the Villain action, is also gauge invariant and quite popular.

The case in which mathematicians have been the most successful in taking the continuum limit is two-dimensional lattice gauge theory. There is now a nearly complete body of work on this topic. The two-dimensional Higgs model, which is $U(1)$ Yang--Mills theory with an additional Higgs field, was constructed by \cite{bfs79, bfs80, bfs81} and further refined by \cite{bs83}. Building on an idea of \cite{bralic80}, \cite{gks89} formulated a rigorous mathematical approach to performing calculations in two-dimensional Yang--Mills theories via stochastic calculus. Somewhat different ideas leading to the same goal were implemented by \cite{driver89a, driver89b} and \cite{kk87}. The papers of \cite{driver89a, driver89b} made precise the idea of using objects called lassos to define the continuum limit of Yang--Mills theories. Explicit formulas for Yang--Mills theories on compact surfaces were obtained by \cite{fine90, fine91} and \cite{witten91, witten92}. All of these results were generalized and unified by \cite{sengupta92, sengupta93, sengupta97} using the stochastic calculus approach. 

More recently, \cite{levy03, levy10} has introduced an abstract framework for constructing  two-dimensional Yang--Mills theories as random holonomy fields. A random holonomy field is  a stochastic process indexed by curves on a surface, subject to boundary conditions, and behaving under surgery as dictated by a Markov property. L\'evy's framework allows parallel transport along more general curves than the ones considered previously, and makes interesting connections to topological quantum field theory. A relatively non-technical description of this body of work is given in the survey of \cite{levy11}. For some very recent developments in rigorously verifying the validity of theoretical physics results for two-dimensional Yang--Mills theories, see \cite{nguyen15}. 

Euclidean Yang--Mills theories in  dimensions three and four have proved to be more challenging to construct mathematically. At sufficiently strong coupling, a number of conjectures about lattice gauge theories --- such as quark confinement and the existence of a positive self-adjoint transfer matrix --- were rigorously proved by \cite{os78}. The strong coupling techniques, unfortunately, do not help in constructing the continuum limit. The most popular approach to showing the existence of a continuum limit of lattice gauge theories is the so-called phase cell renormalization technique.  Phase cell renormalization can be briefly described as follows. Consider $U(N)$ lattice gauge theory on the scaled lattice $\ep\zz^d$. Let $L$ be a fixed positive integer and suppose that $\ep = L^{-k}$ for some $k$.  Let $S$ be the Wilson action, and let
\[
S_{k,k}(U) := \frac{1}{g^2\ep^{4-d}} S(U)\,.
\]
The first step in phase cell renormalization is to generate, given a configuration $U_k$ of unitary matrices attached to edges of $\ep\zz^d$, a configuration $U_{k-1}$ of unitary matrices attached to edges of the coarser lattice $L\ep \zz^d$. A survey of the various ways of producing $U_{k-1}$ from $U_k$ is given in Chapter 22 of the classic monograph of \cite{gj87}. The next step is to understand the probability density function of the configuration $U_{k-1}$. The probability density is written as a function proportional to $$e^{-S_{k,k-1}(U_{k-1})}\,,$$ 
where $S_{k,k-1}$ is called the effective action. Computing the effective action explicitly is usually impossible, but qualitative features and bounds are sometimes possible to obtain.

The final step is to iterate this process. From the configuration $U_{k-1}$ on $L\ep\zz^d$ one obtains a configuration $U_{k-2}$ on $L^2\ep\zz^d$, from $U_{k-2}$ a configuration $U_{k-3}$ on $L^3\ep\zz^d$, and so on. The process is stopped at stage $k$. The effective action at stage $j$ is denoted by $S_{k,k-j}$. The effective action $S_{k,0}$ describes the behavior of parallel transport along macroscopic curves. Presumably, $S_{k,0}$ depends on $k$, but it may so happen that it has a limit as $k\ra\infty$. The goal of the renormalization process is to show that $S_{k,0}$ has a limit as $k\ra \infty$ and to understand this limit. If this objective turns out to be too difficult, which is usually the case, one would at least like to show that the sequence $\{S_{k,0}\}_{k\ge 1}$ can be embedded in a space of functions that is compact in some suitable topology. This would demonstrate the existence of subsequential limits of the macroscopic theory as the lattice spacing $\ep$ is sent to zero. 

The approximation of continuum Yang--Mills theory by a lattice gauge theory on a lattice with spacing $\ep$ is analogous to truncating a Fourier series at a finite frequency. For this reason mathematical physicists often refer to lattice approximations as ultraviolet cutoff (ultraviolet = high frequency). For the same reason, the compactness of $\{S_{k,0}\}_{k\ge 1}$ posited in the previous paragraph is called ultraviolet stability of the effective action, that is, stability with respect to the cutoff frequency. %This notion of ultraviolet stability is stronger than the ultraviolet stability of the partition function that was defined in Section \ref{results}.

A notable success story of phase cell renormalization is the work of \cite{king86a, king86b}, who established the existence of the continuum limit of the three-dimensional Higgs model. The continuum limit of pure $U(1)$ Yang--Mills theory (that is, without the Higgs field) was established earlier by \cite{gross83}, but with a different notion of convergence. 

Ultraviolet stability of three- and four-dimensional non-Abelian lattice gauge theories by phase cell renormalization, as outlined above, was famously established by \cite{balaban83, balaban84a, balaban84b, balaban84c, balaban84d, balaban85a, balaban85b, balaban85c, balaban85d, balaban85e, balaban87, balaban88, balaban89a, balaban89b} in a long series of papers spanning six years.  A somewhat different approach, again using phase cell renormalization, was pursued by \cite{federbush86, federbush87a, federbush87b, federbush88, federbush90} and \cite{fw87}.  %Ba\l aban's work, as mentioned in Section \ref{intro}, is very lengthy and very complex. %Ba\l aban's work, unfortunately, has turned out to be too lengthy and too complex for others to decipher. Indeed, it is probably fair to say that there is no one in the mathematical physics community who can claim to have understood Ba\l aban's proof in its entirety. 
%A somewhat different approach, again using phase cell renormalization and again suffering from similar maladies, was pursued by \cite{federbush86, federbush87a, federbush87b, federbush88, federbush90} and \cite{fw87}.

A completely different idea was worked out by \cite{mrs93}, who started with a Lie algebra action instead of a Lie group action. They broke the gauge invariance of this action by augmenting it with an extra non-gauge invariant quadratic part chosen so that this quadratic part  defined a normalizable Gaussian measure.  Furthermore this Gaussian measure is supported on smooth connections such that the nonlinear terms in the action made sense.   In this theory, which is defined in the continuum, this extra term served the role of an ultraviolet cutoff instead of the lattice.  The extra term must subsequently be removed by taking a limit outside the integral.  The main contribution of \cite{mrs93} was to formulate a program to prove that gauge invariance is recovered in this limit so rapidly that renormalization with finitely many counterterms is still possible.  

In spite of the remarkable achievements surveyed above, the progress on the important question of constructing Euclidean Yang--Mills theories in the continuum (and indeed, the progress of constructive quantum field theory as a whole) has stalled, partly due to the daunting complexity of the renormalization methods employed in the most advanced papers and partly due to the unavailability of gauge-invariant observables that remain well-behaved in the continuum limit in dimensions three and four. In the absence of well-behaved observables, one can try to understand the behavior of the partition function.  Understanding the partition function is only one aspect of constructing Euclidean Yang--Mills theories, but it is an important aspect.  The importance of understanding the Yang--Mills partition function has been highlighted, for example, by \cite{douglas04} in a status report for the millennium prize problem.  A complete solution of the problem would involve understanding the exact asymptotics of the partition function rather than just the leading term in the exponent that is derived in this manuscript. Understanding the exact asymptotics is necessary for passing from Euclidean spacetime to Minkowski spacetime by analytic continuation.

In the context of the previous paragraph, it is important to note one recent development. \cite{cg13, cg15} have proposed a new approach to the problem of construction of gauge invariant observables that are well-behaved in the continuum limit. The method proposed in these papers is based on regularizing a connection form by the Yang--Mills heat equation.

\begin{comment}
How do the results of this paper compare with the goal of computing the continuum limit of the partition function? The comparison is best explained in the context of three-dimensional $U(N)$ Yang--Mills theory. Suppose that we are trying to construct the continuum limit in a box of side-length $R$ in $\rr^3$. Then $R$ is fixed and $n= R/\ep$. In the notation of Theorem \ref{d23thm}, an asymptotic formula for the partition function would give something of the form
\[
Z(R/\ep, \ep, g) = e^{\text{formula} + o(1)}
\]
as $\ep \ra 0$, where the formula depends on $R$, $\ep$, $g$ and $N$. Theorem \ref{d23thm}, on the other hand, gives the weaker statement
\[
Z(R/\ep, \ep, g) = e^{ (1+o(1))\text{formula}}\,,
\]
where the formula has an explicit expression. 
%The ultraviolet stability results of Ba\l aban and others give 
%\[
%Z(R/\ep, \ep, g) = e^{\text{formula} + O(1)}\,,
%\]
%although, unlike Theorem \ref{d23thm}, the formula is not calculated explicitly.
\end{comment}

Incidentally, besides the problem of constructing the continuum limit, there are a number of other important mathematical open problems related to lattice gauge theories. The mass gap problem, quark confinement and the $1/N$ expansion are three examples. For a survey of some of these problems and some recent progress on the $1/N$ expansion, see \cite{chatterjee15}.  A method for non-perturbative constructions of Yang--Mills theories that has received considerable attention in the physics literature in recent years goes by the name of ``resurgence and transseries in QFT''. For surveys of this line of work, see \cite{uy08} and \cite{du16}. For a rigorous mathematical approach to perturbative constructions of quantum Yang--Mills theories and other quantum field theories, see \cite{costello11}. 

%\vskip.5in

\section{The approach of this paper}\label{sketchsec}
The key idea behind the proof of Theorem \ref{mainthm} is that when $g_0$ is small, $U(N)$ lattice gauge theory behaves like a Gaussian theory at small scales, and this behavior determines the leading term of the free energy. The proof is executed by first computing an upper bound for the free energy, and then establishing a matching lower bound. 

The main observation that leads to the upper bound is that $\phi$ is a nonnegative function, which implies that if we drop some plaquettes from the Hamiltonian, the partition function increases. On the other hand, if we represent a large box $B_n$ as a union of non-overlapping small boxes of width $m$, and remove the plaquettes that touch the boundaries of multiple boxes, then the resulting integral breaks up as a product of integrals. Using these observations, we can establish that if $n$ is a multiple of $m$, then $F(B_n, g_0)\le F(B_m, g_0)$. More generally, for any $n\ge m$, $F(B_n, g_0)\le (1-Cm/n)F(B_m, g_0)$ where $C$ depends only on $N$ and $d$. This allows us to reduce the upper bound problem to understanding the behavior of $F(B_m,g_0)$ as $m$ remains fixed (or grows very slowly) and~$g_0\ra 0$. 

If $g_0$ is small, then one can argue that with high probability, $\phi(U(x,j,k))$ must be small for a typical plaquette. This would imply that $U(x,j,k)$ is close to the identity matrix for a typical plaquette. The closeness of $U(x,j,k)$ to $I$ implies that any three of the four unitary matrices attached to the edges of the plaquette approximately determine the fourth. If $g_0$ is small and $m$ is not too large, then the above property can be used to deduce inductively that the matrices attached to the edges in $E_m^0$ approximately determine the matrices attached to all the edges in $E_m$. This allows us to carry out the integration for computing $F(B_m,g_0)$ by first fixing the values of the matrices attached to the edges in $E_m^0$, then using a kind of Laplace approximation --- taking advantage of the fact that the configuration space is now reduced to a small neighborhood of a particular configuration --- and finally noting that the integral does not actually depend on the values of the matrices attached to $E_m^0$ (by axial gauge fixing). Note that the whole argument hinged on the fact that we could let $m$ grow very slowly as $g_0\ra 0$. Otherwise, the matrices on $E_m^0$ do not approximately determine the other matrices. 

For the lower bound, however, we cannot directly replace a large box by a small box, because the inequality goes in the opposite direction. The technique for the lower bound can be roughly described as follows. Take some large $n$ and small $g_0$. If all the $U(x,j,k)$'s were exactly equal to $I$, then the matrices on $E_n^0$ would exactly determine all the other matrices. Since $g_0$ is small, we can deduce as before that a typical $U(x,j,k)$ is close to $I$. However, since $n$ can be arbitrarily large, we can no longer deduce from this that the matrices on $E_n^0$ approximately determine the matrices on all the other edges. 

The main idea that helps us cross this hurdle is that we can {\it pretend} that the matrices on $E_n^0$ approximately determine the matrices on the other edges. That is, given a set of matrices attached to edges in $E_n^0$, let $A$ be the set of all configurations where the matrices attached to the edges in $E_n$ are all close to the matrices that we would get if the $U(x,j,k)$'s were all exactly equal to $I$. Then 
\[
Z(B_n, g_0) = \frac{1}{p}\int_A e^{- S_{B_n}(U)/g_0^2}d\sigma_{B_n}(U)\,,
\]
where
\[
p = \text{the probability of $A$ under the lattice gauge model.}
\]
The integral on the right can be evaluated, as before, by a kind of Laplace approximation. The above representation shows that this integral can be used as a surrogate for $Z(B_n,g_0)$ if $p = \exp(o(n^d))$, since $\log Z(B_n, g_0)$ is of order $n^d$. Now, $p$ is the probability that roughly $n^d$ matrices are close to some prescribed values. It is therefore natural to expect that $p$ behaves like $\exp(Cn^d)$ for some constant $C$ rather than like $\exp(o(n^d))$. The reason why we can show that $p$ behaves like $\exp(o(n^d))$ is that most of the $U(x,j,k)$'s are close to $I$, which allows us to control the set of all matrices by controlling a subset of size $o(n^d)$. For example, if we represent the box $B_n$ as a union of boxes of width $m$, where $m$ is chosen to be large but not too large (depending on $g_0$), then the matrices on the edges that lie on the boundaries of these small boxes approximately determine all the other matrices, and there are $o(n^d)$ of these boundary edges. This allows us to show that $p = \exp(o(n^d))$ and complete the proof of the lower bound.

\section{Notations and conventions}\label{notations}
Throughout the remainder of this article, the dimension $d$ and the order $N$ of the unitary group $U(N)$ will be treated as fixed constants. We will write $C, C_0, C_1,\ldots$ for positive constants that depend only on $N$ and $d$. In some sections, where $N$ is not involved, these constants will only depend on $d$. The values of these constants may change from line to line. %The size of a set $A$ will be denoted by $|A|$.

For notational convenience, we will use the variable
\[
\beta := \frac{1}{g_0^2}
\] 
instead of $g_0$ in our arguments. Sometimes, $\beta$ and $g_0$ may appear within the same line, especially in the statements of theorems and lemmas.

In addition to the above, we will be using a large number of symbols and notations. Many of these have already been defined in Section \ref{results}, and the rest will be defined in the forthcoming sections. For easy reference, we summarize these with short descriptions in the following table.
\vskip.2in
\begin{longtable}{c c p{.5\textwidth}}%{ p{.1\textwidth} p{.2\textwidth} p{.5\textwidth} }
Notation & Defined in section: & Short description\\
\hline
$d$ & \ref{results} & Dimension of the lattice $\zz^d$.\\
$e_1,\ldots, e_d$ & \ref{results} & Standard basis of $\rr^d$.\\
$N$ & \ref{results} & Order of unitary group $U(N)$.\\
$U(N)$ & \ref{results} & Unitary group of order $N$.\\
$\sigma$ & \ref{results} & Haar measure on $U(N)$.\\
$B_n$ & \ref{results} & The box $\{0,1,\ldots, n-1\}^d$.\\
$E_n$ & \ref{results}  & Positively oriented edges of $B_n$.\\
$E_n^0$ & \ref{results}  & A subset of $E_n$, related to axial gauge fixing.\\
$E_n^1$ & \ref{results} & $E_n\backslash E_n^0$.\\
$|\Lambda|$ & \ref{results} & The size of a set $\Lambda$.\\
Plaquette & \ref{results} & A square in $\zz^d$, identified by a triple consisting of a vertex and two coordinate directions.\\
$\Lambda'$ & \ref{results} & The set of all plaquettes with all vertices belonging to a set $\Lambda \subseteq \zz^d$.\\
$U(B_n)$ & \ref{results} & Sets of configurations of unitary matrices attached to edges of $B_n$.\\
$\sigma_{B_n}$ & \ref{results}  & Product Haar measure on $U(B_n)$.\\
$\phi$ & \ref{results}  & $\phi(U)=\Re(\tr(I-U))$.\\
$S_{B_n}$ & \ref{results}  & Action of lattice gauge theory on $U(B_n)$.\\
$g_0$ & \ref{results} & Coupling strength in lattice gauge theory.\\
$\mu_{B_n, g_0}$ & \ref{results} & The probability measure on $U(B_n)$ defined by lattice gauge theory.\\
$Z(B_n, g_0)$ & \ref{results} & Partition function of lattice gauge theory on $B_n$.\\
$F(B_n, g_0)$ & \ref{results} & Free energy per site of lattice gauge theory on $B_n$.\\
$M_n$ & \ref{results} & Quadratic form on $\rr^{E_n}$. Lattice Maxwell action.\\
$M_n^0$ & \ref{results} & Restriction of $M_n$ to a subspace of $\rr^{E_n}$.\\
$K_{n,d}$ & \ref{results} & Constant defined in Section \ref{results}, related to lattice Maxwell theory. \\
$K_d$ & \ref{results} & Limit of $K_{n,d}$ as $n\ra\infty$.\\
$\beta$ & \ref{notations} & $1/g_0^2$.\\
$\|M\|$ & \ref{smallsec} & Hilbert--Schmidt norm of a complex matrix $M$.\\
GUE matrix & \ref{smallsec} & A random matrix from the Gaussian Unitary Ensemble.\\
$U_0(B_n)$ & \ref{gaugesec} & Set of all $U\in U(B_n)$ such that $U(x,y)=I$ for $(x,y)\in E_n^0$.\\
$\sigma^0_{B_n}$ & \ref{gaugesec} & Product Haar measure on $U_0(B_n)$.\\
$G(B_n)$ & \ref{gaugesec} & Set of gauge transforms of $U(B_n)$.\\
$U_0^\beta(B_n)$ & \ref{uppersec} & A subset of $U_0(B_n)$ defined in Theorem \ref{mainstep1}.\\
$|x|_1$ & \ref{uppersec} & $\ell^1$ norm of $x$.\\
$H(N)$ & \ref{liesec} & The set of all $N\times N$ complex Hermitian matrices.\\
$B(U,r)$ & \ref{liesec} & The set of all $V\in U(N)$ such that $\|U-V\|\le r$.\\
$b(H, r)$ & \ref{liesec} & The set of all $G\in H(N)$ such that $\|H-G\|\le r$.\\
$\lambda$ & \ref{liesec} & Lebesgue measure on $H(N)$.\\
$C_N$ & \ref{liesec} & A constant defined in the statement of Theorem~\ref{liethm}.\\
$T(B_n)$ & \ref{maxwellsec} & $\rr^{E_n}$.\\ 
$T_0(B_n)$ & \ref{maxwellsec} & The set of all $t\in T(B_n)$ that are zero on $E_n^0$.\\
$T_E(B_n)$ & \ref{maxwellsec} & The set of all $t\in T(B_n)$ that are zero on some set $E\subseteq E_n$.\\
$M_{E,\theta, n}$ & \ref{maxwellsec} & A quadratic form on $T_E(B_n)$.\\
$\tau_{E,\theta, n}$ & \ref{maxwellsec} & The Gaussian measure defined by the quadratic form $M_{E,\theta, n}$.\\
$\tau_n$ &\ref{maxwellsec} & $\tau_{E,\theta, n}$ with $E=E_n^0$ and $\theta=0$.\\ 
$l^0_n$ & \ref{maxwelllim} & Lebesgue measure on $T_0(B_n)$.\\
$Z_M(B_n)$ & \ref{maxwelllim} & Partition function of lattice Maxwell theory.\\
$F_M(B_n)$ & \ref{maxwelllim} & $n^{-d}\log Z_M(B_n)$.\\
$E_n'$ & \ref{maxwelllim} & The union of $E_n^0$ and the set of all boundary edges of $B_n$.\\
$T'(B_n)$ & \ref{maxwelllim} & $T_{E_n'}(B_n)$.\\
$l_n'$ & \ref{maxwelllim} & Lebesgue measure on $T'(B_n)$.\\
$Z_M'(B_n)$ & \ref{maxwelllim} & A small perturbation of $Z_M(B_n)$ that is used in the proof of Theorem~\ref{cdthm}.\\
$F_M'(B_n)$ & \ref{maxwelllim} & $n^{-d}\log Z_M'(B_n)$.\\
$Z_{M,m}(B_n)$ & \ref{maxwelllim} & Another perturbation of $Z_{M}(B_n)$ used in the proof of Theorem \ref{cdthm}.\\
$F_{M,m}(B_n)$ & \ref{maxwelllim} & $n^{-d}\log Z_{M,m}(B_n)$.\\
$E_n^m$ & \ref{maxwelllim} & A subset of $E_n$ appearing in the proof of Theorem~\ref{cdthm}.\\
$T_m(B_n)$ & \ref{maxwelllim} & $T_{E_n^m}(B_n)$.\\
$l^m_n$ & \ref{maxwelllim} & Lebesgue measure on $T_m(B_n)$.\\
$H(B_n)$ & \ref{wmsec} & Sets of configurations of Hermitian matrices attached to edges of $B_n$.\\
$H_0(B_n)$ & \ref{wmsec} & Set of all $H\in H(B_n)$ that are zero on $E_n^0$.\\
$M_n(H)$ & \ref{wmsec} & Lattice Maxwell action of a configuration $H\in H(B_n)$.\\
\hline
\end{longtable}
\vskip.2in

\section{Small ball probabilities for the unitary group}\label{smallsec}
Recall that the Hilbert--Schmidt norm of an $N\times N$ complex matrix $M = (m_{ij})_{1\le i,j\le N}$  is defined~as
\[
\|M\| := \biggl(\sum_{i,j=1}^N |m_{ij}|^2\biggr)^{1/2} = (\tr(M^*M))^{1/2}\,.
\]
It is easy to check that this satisfies the usual properties of a norm. Moreover, it also satisfies the defining property of a matrix norm, that is, 
\[
\|M_1M_2\|\le \|M_1\|\|M_2\|\,.
\]
This is verified easily by an application of the Cauchy--Schwarz inequality.

Since $U(N)$ is a subset of the space of all $N\times N$ complex matrices, the Hilbert--Schmidt norm induces a metric on $U(N)$. The topology induced by this metric is compact and the left and right multiplication operations are continuous in this topology. Therefore we can consider the Haar measure on $U(N)$ as a measure defined on the Borel sigma-algebra of this topology. Throughout this paper, a Borel measurable function on $U(N)$ will mean a function that is measurable with respect to this sigma-algebra.

The goal of this section is to prove the following theorem, which gives an asymptotic approximation for the Haar measure of small balls of the Hilbert--Schmidt metric on~$U(N)$. %(... Mention Hurwitz ...)
\begin{thm}\label{smallballthm}
Let $\sigma$ denote the Haar measure on $U(N)$. Then
\[
\lim_{\delta \ra 0}\frac{\sigma(\{U: \|I-U\|\le \delta\})}{\delta^{N^2}} = \frac{\prod_{j=1}^{N-1} j!}{(2\pi)^{N/2} 2^{N^2/2}\Gamma(N^2/2+1)}\,.
\]
%where $\Gamma$ is the standard Gamma function.
\end{thm}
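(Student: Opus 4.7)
The plan is to use the Lie-algebra exponential map as a local diffeomorphism from the Hermitian matrices to $U(N)$ near the identity, and reduce the small-ball problem on $U(N)$ to a small-ball computation on Euclidean $\mathbb{R}^{N^2}$.

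First, I would expand $\|I - e^{\I H}\|^2$ for small $H\in H(N)$. Substituting $e^{\I H} = I + \I H - H^2/2 + O(\|H\|^3)$ into $(I-e^{\I H})^*(I-e^{\I H})$, the odd cross-terms $\pm\I H^3/2$ cancel, yielding
\[
\|I-e^{\I H}\|^2 = \tr(H^2) + O(\|H\|^4) = \|H\|^2\bigl(1 + O(\|H\|^2)\bigr).
\]
For $\delta$ sufficiently small, $H\mapsto e^{\I H}$ is therefore a diffeomorphism from a neighborhood of $0$ in $H(N)$ onto a neighborhood of $I$ in $U(N)$ that contains $\{U:\|I-U\|\le\delta\}$, and this ball corresponds under the inverse map exactly to $\{H\in H(N):\|H\|\le\delta(1+O(\delta^2))\}$.

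Second, I would push the Haar measure back to $H(N)$. Fix an orthonormal basis of $H(N)$ for the real inner product $\tr(AB)$; this gives coordinates on $H(N)\cong\mathbb{R}^{N^2}$ in which $\|H\|^2$ is the standard Euclidean norm squared. The differential of $H\mapsto e^{\I H}$ at the origin is the isometry $H\mapsto\I H$ onto the tangent space $\mathfrak{u}(N)$ (both with the Hilbert--Schmidt metric), so near the identity $\sigma$ pulls back to a smooth measure whose density at the origin, relative to Lebesgue $\lambda$, is $c_N:=1/\vol(U(N))$, where $\vol$ is the Riemannian volume induced by the Hilbert--Schmidt metric. Combining this with the Euclidean ball volume $\pi^{N^2/2}\delta^{N^2}/\Gamma(N^2/2+1)$ gives
\[
\sigma(\{U:\|I-U\|\le\delta\}) = \frac{c_N\,\pi^{N^2/2}}{\Gamma(N^2/2+1)}\,\delta^{N^2}\,(1+o(1))
\]
as $\delta\to 0$.

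Finally, I would invoke the classical Macdonald formula $\vol(U(N)) = (2\pi)^{N(N+1)/2}/\prod_{j=1}^{N-1} j!$, so that $c_N = \prod_{j=1}^{N-1} j!/(2\pi)^{N(N+1)/2}$. The splitting $(2\pi)^{N(N+1)/2} = (2\pi)^{N/2}\cdot 2^{N^2/2}\cdot\pi^{N^2/2}$ then cancels the $\pi^{N^2/2}$ above and produces the stated constant. The main obstacle is pinning down $c_N$, i.e.\ the volume of $U(N)$ in the Hilbert--Schmidt normalization; a one-line sanity check at $N=1$ gives $\sigma(|1-e^{\I\theta}|\le\delta)\sim \delta/\pi$ which matches the claimed constant $1/\pi$. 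If one prefers a self-contained derivation of the volume formula rather than citing Macdonald, the natural route is to apply the Weyl integration formula on the maximal torus and evaluate the resulting Vandermonde integral by a Selberg-type identity, techniques already central to this paper.
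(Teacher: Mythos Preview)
Your approach is correct and genuinely different from the paper's. You work on the full $N^2$-dimensional Lie algebra: linearize the exponential map, observe that the Jacobian of $H\mapsto e^{\I H}$ at $0$ is $1$ (since $d\exp_0$ is the isometry $H\mapsto\I H$ for the Hilbert--Schmidt inner product), and thereby reduce the small ball to a Euclidean ball of radius $\delta(1+O(\delta^2))$ in $\rr^{N^2}$ weighted by $1/\vol(U(N))$; the constant then follows from Macdonald's volume formula. The paper instead passes to the $N$-dimensional eigenvalue torus via the Weyl integration formula (its Lemma~\ref{matrixlmm1}), obtains a Vandermonde-weighted integral over a Euclidean ball in $\rr^N$, and evaluates that integral by recognizing it inside the Mehta--Dyson GUE eigenvalue density together with the $\chi^2_{N^2}$ law of $\tr(X^2)$ for a GUE matrix $X$. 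Your route is shorter and more geometric, but outsources the hard constant to the Macdonald formula; the paper's route is more self-contained, computing the constant from standard random-matrix identities rather than quoting $\vol(U(N))$. The paper itself remarks, after Theorem~\ref{liethm}, that $1/C_N$ equals $\vol(U(N))$ and that this might yield an alternative proof --- which is exactly what you have sketched. One small imprecision: the preimage of the $\delta$-ball under $\exp^{-1}$ is not literally a ball of radius $\delta(1+O(\delta^2))$, but is sandwiched between two such balls; this is harmless for the limit.
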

The proof is done in two steps. The first step is the following lemma. The proof of this lemma is based on the joint density of eigenvalues of a random unitary matrix, the formula for which was derived by \cite{weyl39}. For modern discussions of Weyl's formula and its connections to the famous integral formula of \cite{selberg}, see \cite{ds94} and \cite{fw08}.
\begin{lmm}\label{matrixlmm1}
Let $B(r)$ denote the Euclidean ball of radius $r$ around the origin in $\rr^N$. Then 
\[
\lim_{\delta \ra 0}\frac{\sigma(\{U: \|I-U\|\le \delta\})}{\delta^{N^2}} = \frac{1}{(2\pi)^NN!} \int_{B(1)}  \prod_{1\le j<k\le N} (\eta_j-\eta_k)^2\, d\eta_1\,d\eta_2\,\cdots \,d\eta_N\,.
\]
\end{lmm}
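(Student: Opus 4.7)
The plan is to reduce the small-ball probability to an eigenvalue integral by Weyl's integration formula, and then carry out a scaling argument. The function $U \mapsto \|I-U\|^2 = \tr((I-U)^*(I-U))$ depends only on the eigenvalues of $U$ (indeed, if $U$ has eigenvalues $e^{\I \theta_1},\ldots, e^{\I\theta_N}$, then $\|I-U\|^2 = \sum_{j=1}^N |1-e^{\I\theta_j}|^2 = \sum_{j=1}^N 4 \sin^2(\theta_j/2)$), so the indicator $\mathbf{1}_{\{\|I-U\|\le \delta\}}$ is a class function. Weyl's formula then gives
\[
\sigma(\{U: \|I-U\|\le \delta\}) = \frac{1}{(2\pi)^N N!}\int_{A_\delta} \prod_{1\le j<k\le N} |e^{\I \theta_j}-e^{\I\theta_k}|^2\, d\theta_1\cdots d\theta_N\,,
\]
where $A_\delta := \{\theta\in(-\pi,\pi]^N : \sum_j 4\sin^2(\theta_j/2)\le \delta^2\}$.

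Next I would rescale. For $\delta$ small enough that $A_\delta \subseteq (-\pi/2,\pi/2)^N$ (which holds since each $|\theta_j|$ is forced to be $O(\delta)$), substitute $\theta_j = \delta\eta_j$. The Jacobian contributes $\delta^N$, and the region becomes
\[
\tilde A_\delta = \Bigl\{\eta\in \rr^N : \sum_{j=1}^N \tfrac{4}{\delta^2}\sin^2(\delta\eta_j/2) \le 1\Bigr\}\,,
\]
which is a monotonically decreasing family of bounded sets (uniformly contained in a fixed ball of radius $2$, say) and converges pointwise to the closed Euclidean ball $B(1)$ as $\delta\ra 0$ because $\frac{4}{\delta^2}\sin^2(\delta\eta_j/2)\ra \eta_j^2$ uniformly on compacts. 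Meanwhile the integrand transforms via
\[
|e^{\I\delta\eta_j}-e^{\I\delta\eta_k}|^2 = 4\sin^2(\delta(\eta_j-\eta_k)/2) = \delta^2(\eta_j-\eta_k)^2 + O(\delta^4)\,,
\]
uniformly for $\eta$ in a fixed compact set, so
\[
\prod_{j<k} |e^{\I\delta\eta_j}-e^{\I\delta\eta_k}|^2 = \delta^{N(N-1)}\prod_{j<k}(\eta_j-\eta_k)^2 + O(\delta^{N(N-1)+2})\,.
\]

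Combining these, the total factor of $\delta$ is $\delta^N\cdot \delta^{N(N-1)}=\delta^{N^2}$, so after dividing by $\delta^{N^2}$ the statement reduces to showing
\[
\lim_{\delta\ra 0}\int_{\tilde A_\delta}\Bigl(\prod_{j<k}(\eta_j-\eta_k)^2 + O(\delta^2)\Bigr)d\eta = \int_{B(1)}\prod_{j<k}(\eta_j-\eta_k)^2\, d\eta\,.
\]
This follows from the dominated convergence theorem applied to $\mathbf{1}_{\tilde A_\delta}(\eta)\prod_{j<k}(\eta_j-\eta_k)^2$, with the bounded continuous dominating function given on a slightly larger fixed ball.

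The proof is essentially routine once Weyl's formula is in hand; the only real care required is uniform control of the error terms from the $\sin^2$ Taylor expansions on the rescaled domain $\tilde A_\delta$, and the verification that $\tilde A_\delta \ra B(1)$ in a strong enough sense (e.g., that the symmetric difference has measure tending to zero) to justify the interchange of limit and integral. Both are straightforward since all approximations are uniform on any fixed compact subset of $\rr^N$ and $\tilde A_\delta$ stays inside such a set for $\delta$ small.
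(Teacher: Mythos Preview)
Your proposal is correct and follows essentially the same approach as the paper: apply Weyl's integration formula to the class function $\mathbf{1}_{\{\|I-U\|\le\delta\}}$, rescale via $\theta_j=\delta\eta_j$, and pass to the limit. The only difference is in the justification of the limit step: the paper first proves the analogous limit for smooth compactly supported test functions $\psi$ (where the convergence is immediate) and then squeezes the indicator $\mathbf{1}_{[0,1]}$ between two such $\psi$'s, whereas you invoke dominated convergence directly on the indicator using the uniform boundedness of $\tilde A_\delta$. Both work; your route is slightly more direct.
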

\begin{proof}
Suppose that $h:U(N) \ra \cc$ is a function having the property that $h(U)=f(\lambda_1,\ldots,\lambda_N)$, where  $\lambda_1,\ldots,\lambda_N$ are the eigenvalues of $U$ repeated by multiplicities, and  $f$ is some measurable function of $N$ complex variables that is symmetric under permutation of coordinates. In other words, $h(U)$ is a symmetric function of the eigenvalues of $U$. It was proved by \cite{weyl39} that under these circumstances,
\begin{align*}
\int_{U(N)} h(U)\, d\sigma(U) &= \frac{1}{(2\pi)^N N!} \int_{[-\pi,\pi]^N} f(e^{\I\theta_1},\ldots, e^{\I\theta_N}) \prod_{1\le j<k\le N} |e^{\I\theta_j} - e^{\I\theta_k}|^2\, d\theta_1\,d\theta_2\,\cdots \,d\theta_N\,.
\end{align*}
Now let
\[
h(U) := \|I-U\|^2\,.
\]
The eigenvalues of $U^*$ are the complex conjugates of the eigenvalues of $U$, and $U^*U=I$. Therefore, if $\lambda_1,\ldots, \lambda_N$ are the eigenvalues of $U$, then
\begin{align}
h(U) &= \tr((I-U)^*(I-U))\nonumber\\
&= \tr(2I-U^*-U)\nonumber\\
&= \sum_{j=1}^N(2 - \bar{\lambda}_j-\lambda_j)= 2\sum_{j=1}^N (1-\Re(\lambda_j))\,.\label{traceform}
\end{align}
Thus, for any smooth $\psi:[0,\infty) \ra \rr$ with compact support,
\begin{align*}
&\int_{U(N)} \psi(h(U)/\delta^2)\,d\sigma(U)  \\
&= \frac{1}{(2\pi)^NN!} \int_{[-\pi,\pi]^N} \psi\biggl(\frac{2}{\delta^{2}} \sum_{j=1}^N (1-\cos\theta_j)\biggr) \prod_{1\le j<k\le N} |e^{\I\theta_j} - e^{\I\theta_k}|^2\, d\theta_1\,d\theta_2\,\cdots\, d\theta_N\,.
\end{align*}
Making the change of variable $\eta_i = \theta_i/\delta$, this gives
\begin{align*}
&\int_{U(N)} \psi(h(U)/\delta^2)\,d\sigma(U)  \\ &= \frac{\delta^{N^2}}{(2\pi)^NN!} \int_{[-\pi/\delta,\pi/\delta]^N} \psi\biggl(2\sum_{j=1}^N \frac{1-\cos\delta\eta_j}{\delta^2}\biggr) \prod_{1\le j<k\le N} \biggl|\frac{e^{\I\delta\eta_j} - e^{\I\delta\eta_k}}{\delta}\biggr|^2\, d\eta_1\,d\eta_2\,\cdots \,d\eta_N\,.
\end{align*}
Dividing both sides by $\delta^{N^2}$, sending $\delta\ra 0$, and using the compactness of the support of $\psi$ and the smoothness of $\psi$, we get
\begin{align}\label{matrix1}
&\lim_{\delta \ra 0} \delta^{-N^2}\int_{U(N)} \psi(h(U)/\delta^2)\,d\sigma(U)  \nonumber \\  &= \frac{1}{(2\pi)^NN!} \int_{\rr^N} \psi\biggl(\sum_{j=1}^N \eta_j^2\biggr) \prod_{1\le j<k\le N} (\eta_j-\eta_k)^2\, d\eta_1\,d\eta_2\,\cdots\, d\eta_N\,.
\end{align}
Now suppose that $\psi$ is the function that is $1$ in $[0,1]$ and $0$ outside. If the above formula holds for this $\psi$, then this is the same as the claim of the lemma. Note that $\psi$ has compact support but $\psi$ is not smooth. Take two smooth  functions $\psi_1$ and $\psi_2$ with compact support such that $\psi_1\le \psi\le \psi_2$ everywhere. Then note that 
\begin{align*}
&\lim_{\delta \ra 0} \delta^{-N^2}\int_{U(N)}\psi_1(h(U)/\delta^2)\,d\sigma(U) \le \liminf_{\delta \ra 0} \delta^{-N^2}\int_{U(N)}\psi(h(U)/\delta^2)\,d\sigma(U) \\
&\le \limsup_{\delta \ra 0} \delta^{-N^2}\int_{U(N)}\psi(h(U)/\delta^2)\,d\sigma(U) \le \lim_{\delta \ra 0} \delta^{-N^2}\int_{U(N)}\psi_2(h(U)/\delta^2)\,d\sigma(U) \,.
\end{align*}
Applying \eqref{matrix1} with $\psi_1$ and $\psi_2$, and then letting $\psi_1$ and $\psi_2$ approach $\psi$ pointwise almost everywhere, we get the desired identity for $\psi$. 
\end{proof}

We are now ready to carry out the second step in the proof of Theorem \ref{smallballthm} and complete the proof. This step involves the joint density of the eigenvalues of a random matrix from the Gaussian Unitary Ensemble (GUE). Recall that a GUE matrix is a random Hermitian matrix whose entries on and above the diagonal are independent, with the following distributions. The entries on the diagonal are standard real Gaussian random variables. The off-diagonal entries are complex Gaussian, with real and imaginary parts independent, mean zero, and variance $1/2$. In other words, the entries of a GUE matrix $X= (x_{jk})_{1\le j,k\le N}$ may be represented as 
\[
x_{jk} =
\begin{cases}
y_{jj} &\text{ if } j=k,\\
(z_{jk}+\I w_{jk})/\sqrt{2} &\text{ if } j<k,\\
(z_{kj}-\I w_{kj})/\sqrt{2} &\text{ if } j>k,
\end{cases}
\]
where $(y_{jj})_{1\le j\le N}$, $(z_{jk})_{1\le j<k\le N}$ and $(w_{jk})_{1\le j<k\le N}$ are collections of i.i.d.~standard real Gaussian random variables. Since $X$ is a Hermitian matrix, its eigenvalues are real. It is not difficult to show that with probability one, there is no eigenvalue with multiplicity greater than one. Let $\lambda_1<\lambda_2<\cdots< \lambda_N$ be the eigenvalues of $X$. \cite{md63} showed that the joint probability density of $(\lambda_1,\ldots,\lambda_N)$ is 
\begin{equation}\label{guedensity}
\rho(t_1,\ldots, t_N) = \frac{e^{-\|t\|^2/2} \prod_{1\le j<k\le N} (t_j-t_k)^2}{(2\pi)^{N/2} \prod_{j=1}^{N-1} j!} 
\end{equation}
on the domain $\{(t_1,\ldots, t_N): t_1<\cdots< t_n\}$, where $\|t\|^2 := t_1^2+\cdots +t_N^2$. Again, the normalizing constant has a relatively easy derivation via the Selberg integral formula. For more on this topic, see \cite{mehta91} and \cite{fw08}.
\begin{proof}[Proof of Theorem \ref{smallballthm}]
As in Lemma \ref{matrixlmm1}, let $B(r)$ denote the ball of radius $r$ around the origin in $\rr^N$. Then note that
\begin{align}
&\int_{B(1)}  \prod_{1\le j<k\le N} (\eta_j-\eta_k)^2\, d\eta_1\,d\eta_2\,\cdots \,d\eta_N\nonumber \\
&= \lim_{\gamma\ra 0} \int_{B(1)} e^{-\gamma^2\|\eta\|^2/2} \prod_{1\le j<k\le N} (\eta_j-\eta_k)^2\, d\eta_1\,d\eta_2\cdots\, d\eta_N\,.\label{matrixeq1}
\end{align}
By the change of variable $\xi_i = \gamma\eta_i$, we get
\begin{align}
&\int_{B(1)} e^{-\gamma^2\|\eta\|^2/2} \prod_{1\le j<k\le N} (\eta_j-\eta_k)^2\, d\eta_1\,d\eta_2\cdots\, d\eta_N\nonumber \\
&= \gamma^{-N^2}\int_{B(\gamma)} e^{-\|\xi\|^2/2} \prod_{1\le j<k\le N} (\xi_j-\xi_k)^2\, d\xi_1\,d\xi_2\,\cdots\, d\xi_N\,.\label{matrixeq2}
\end{align}
Let $X$ be the random GUE matrix defined above. Then from the formula \eqref{guedensity} for the joint density of the eigenvalues of $X$, we get
\begin{align}\label{matrixeq3}
\pp\biggl(\sum_{j=1}^N \lambda_j^2 \le \gamma^2\biggr) &= \frac{1}{(2\pi)^{N/2} \prod_{j=1}^N j!} \int_{B(\gamma)} e^{-\|\xi\|^2/2} \prod_{1\le j<k\le N} (\xi_j-\xi_k)^2\, d\xi_1\,d\xi_2\,\cdots \,d\xi_N\,.
\end{align}
On the other hand, note that
\begin{align*}
\sum_{j=1}^N \lambda_j^2 &= \tr(X^2) = \tr(X^*X)=\sum_{j,k=1}^N |x_{jk}|^2\\
&= \sum_{j=1}^N y_{jj}^2 + \sum_{1\le j<k\le N} (z_{jk}^2 + w_{jk}^2)\,.
\end{align*}
The last expression, being a sum of $N^2$ i.i.d.~standard real Gaussian random variables, has a $\chi^2$ distribution with $N^2$ degrees of freedom. Therefore, using the well-known formula for the probability density function of such a random variable (see, for example, \cite{jkb95}), we get
\begin{align}\label{matrixeq4}
\pp\biggl(\sum_{j=1}^N \lambda_j^2 \le \gamma^2\biggr) &= \frac{1}{2^{N^2/2} \Gamma(N^2/2)}\int_0^{\gamma^2}x^{N^2/2-1} e^{-x/2}\, dx\,.
\end{align}
Making the change of variable $y = x/\gamma^2$, we get
\begin{align}\label{matrixeq5}
\int_0^{\gamma^2}x^{N^2/2-1} e^{-x/2}\, dx &= \gamma^{N^2} \int_0^1y^{N^2/2-1} e^{-\gamma^2 y/2} \, dy\,.
\end{align}
Combining Lemma \ref{matrixlmm1} with the equations \eqref{matrixeq1}, \eqref{matrixeq2}, \eqref{matrixeq3}, \eqref{matrixeq4} and \eqref{matrixeq5} gives
\begin{align*}
\lim_{\delta \ra 0}\frac{\sigma(\{U: \|I-U\|\le \delta\})}{\delta^{N^2}} &=\lim_{\gamma \ra0} \frac{\gamma^{-N^2}}{(2\pi)^N N!}\int_{B(\gamma)} e^{-\|\xi\|^2/2} \prod_{1\le j<k\le N} (\xi_j-\xi_k)^2\, d\xi_1\,d\xi_2\,\cdots \,d\xi_N\\
&= \lim_{\gamma \ra0} \frac{\gamma^{-N^2}(2\pi)^{N/2}\prod_{j=1}^N j!}{(2\pi)^N N!}\pp\biggl(\sum_{j=1}^N \lambda_j^2 \le \gamma^2\biggr)\\
&= \lim_{\gamma \ra0} \frac{\prod_{j=1}^{N-1} j!}{(2\pi)^{N/2} 2^{N^2/2}\Gamma(N^2/2)}\int_0^1y^{N^2/2-1} e^{-\gamma^2 y/2} \, dy\\
&= \frac{\prod_{j=1}^{N-1} j!}{(2\pi)^{N/2} 2^{N^2/2}\Gamma(N^2/2)(N^2/2)}\,.
\end{align*}
This completes the proof of Theorem \ref{smallballthm}.
\end{proof}
Theorem \ref{smallballthm} has the following corollary, which will be useful in the next section.  Recall that the Hilbert--Schmidt norm of a unitary matrix of  order $N$ is always $\sqrt{N}$, and therefore the diameter of $U(N)$ under the metric induced by the Hilbert--Schmidt norm is $2\sqrt{N}$.
\begin{cor}\label{smallball}
Let $\sigma$ denote the Haar measure on $U(N)$. There are positive constants $C_1$ and $C_2$ depending only on $N$, such that for any  $\delta \in (0,\sqrt{N})$, 
\[
C_1 \delta^{N^2} \le \sigma(\{U: \|I-U\|\le \delta\}) \le C_2 \delta^{N^2}\,.
\]
\end{cor}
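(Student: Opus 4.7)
The plan is to deduce the corollary directly from Theorem \ref{smallballthm} by combining the small-$\delta$ asymptotic with trivial bounds in the non-small regime.

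First I would extract from Theorem \ref{smallballthm} a positive constant
\[
c_N := \frac{\prod_{j=1}^{N-1} j!}{(2\pi)^{N/2} 2^{N^2/2}\Gamma(N^2/2+1)}
\]
and choose some $\delta_0 \in (0,\sqrt{N})$ depending only on $N$ such that
\[
\tfrac{1}{2} c_N \le \frac{\sigma(\{U : \|I-U\| \le \delta\})}{\delta^{N^2}} \le 2 c_N \quad \text{for all } \delta \in (0,\delta_0].
\]
On this range, the corollary holds with $C_1 = c_N/2$ and $C_2 = 2 c_N$.

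Next I would handle $\delta \in [\delta_0,\sqrt{N})$. For the upper bound I just use that $\sigma$ is a probability measure, so
\[
\sigma(\{U : \|I-U\| \le \delta\}) \le 1 = \delta_0^{-N^2} \cdot \delta_0^{N^2} \le \delta_0^{-N^2}\, \delta^{N^2},
\]
since $\delta \ge \delta_0$. For the lower bound I use monotonicity of $\sigma$ in $\delta$:
\[
\sigma(\{U : \|I-U\| \le \delta\}) \ge \sigma(\{U : \|I-U\| \le \delta_0\}) \ge \tfrac{1}{2} c_N \delta_0^{N^2},
\]
and combine with $\delta^{N^2} \le N^{N^2/2}$ to get
\[
\sigma(\{U : \|I-U\| \le \delta\}) \ge \frac{c_N \delta_0^{N^2}}{2 N^{N^2/2}} \cdot \delta^{N^2}.
\]

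Putting the two regimes together, the corollary follows with
\[
C_1 = \min\!\left(\tfrac{1}{2} c_N,\ \frac{c_N \delta_0^{N^2}}{2 N^{N^2/2}}\right), \qquad C_2 = \max\!\left(2 c_N,\ \delta_0^{-N^2}\right),
\]
both of which depend only on $N$. There is no substantive obstacle here; the content is entirely in Theorem \ref{smallballthm}, and the corollary is essentially a matter of patching the asymptotic regime with a compactness/boundedness argument in the remaining range.
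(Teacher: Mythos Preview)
Your proposal is correct and follows essentially the same approach as the paper: invoke Theorem \ref{smallballthm} for small $\delta$, then handle the remaining range using that $\sigma$ is a probability measure (for the upper bound) and monotonicity of the ball measure in $\delta$ (for the lower bound). The only difference is that you carry explicit constants whereas the paper leaves them implicit.
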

\begin{proof}
By Theorem \ref{smallballthm}, the inequalities are true for small enough $\delta$, where the threshold depends only on $N$. If $\delta$ is above this threshold, then the upper bound is true anyway since the Haar measure of any set is bounded by one. To generalize the lower bound to all $\delta\in (0,\sqrt{N})$, simply observe that the Haar measure of the ball of radius $\delta$ increases with $\delta$.  
\end{proof}

%\vskip.5in

\section{A lower bound for the partition function}\label{lowbdsec}
Using the small ball probability lower bound derived in Section \ref{smallsec}, we will now obtain a lower bound for the partition function of $U(N)$ lattice gauge theory in a box. 
\begin{thm}\label{lowlmm}
Suppose that $\beta \ge 2$. Then there is a positive constant $C$ depending only on $N$ and $d$, such that for any $n$,
\[
Z(B_n, g_0)\ge e^{-Cn^d\log \beta}\,.
\]
\end{thm}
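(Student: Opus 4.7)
The plan is to obtain the lower bound by restricting the integral defining $Z(B_n, g_0)$ to a "good set'' of configurations in which every edge matrix is close to the identity, and then balancing a uniform bound on the action against a lower bound on the Haar measure of this set.

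Concretely, for a parameter $\delta \in (0, \sqrt{N})$ to be chosen, define
\[
A_\delta := \{U \in U(B_n) : \|I - U(x,y)\| \le \delta \text{ for every } (x,y) \in E_n\}.
\]
The first step is to bound the action uniformly on $A_\delta$. For any plaquette $(x,j,k)$, the matrix $U(x,j,k)$ is a product of four unitary matrices each within Hilbert--Schmidt distance $\delta$ of $I$; writing each factor as $I + E_i$ with $\|E_i\| \le \delta$ and expanding the product, submultiplicativity gives $\|I - U(x,j,k)\| \le 4\delta + C\delta^2$. From the identity $\|I - U\|^2 = 2\sum_j(1 - \Re\lambda_j) = 2\phi(U)$ already recorded in \eqref{traceform}, this yields $\phi(U(x,j,k)) \le C\delta^2$, and so summing over the at most $Cn^d$ plaquettes in $B_n'$ gives $S_{B_n}(U) \le C n^d \delta^2$ on $A_\delta$.

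The second step is to bound $\sigma_{B_n}(A_\delta)$ from below. Since $\sigma_{B_n}$ is a product Haar measure and the $|E_n| \le d n^d$ constraints defining $A_\delta$ are on different edges, Corollary \ref{smallball} applied edge by edge gives
\[
\sigma_{B_n}(A_\delta) \ge \bigl(C_1 \delta^{N^2}\bigr)^{|E_n|} \ge \exp\bigl(-C_2 n^d - C_3 n^d \log(1/\delta)\bigr),
\]
provided $\delta < \sqrt{N}$. Combining the two estimates,
\[
Z(B_n, g_0) \ge e^{-\beta C n^d \delta^2}\, \sigma_{B_n}(A_\delta) \ge \exp\bigl(-\beta C n^d \delta^2 - C_2 n^d - C_3 n^d \log(1/\delta)\bigr).
\]

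Finally, choose $\delta = \beta^{-1/2}$, which lies in $(0,\sqrt{N})$ because $\beta \ge 2$ and $N \ge 1$. Then $\beta \delta^2 = 1$ and $\log(1/\delta) = \tfrac{1}{2}\log\beta$, so the exponent becomes $-C_4 n^d - \tfrac{1}{2}C_3 n^d \log\beta \ge -C n^d \log\beta$ for $\beta \ge 2$, yielding the claim. I do not expect any serious obstacle: the only subtlety is the elementary verification that $\phi$ vanishes to second order at $I$ in the Hilbert--Schmidt norm (already built into the earlier identity) and ensuring that the threshold $\delta < \sqrt{N}$ needed for Corollary \ref{smallball} is respected by the choice $\delta = 1/\sqrt{\beta}$.
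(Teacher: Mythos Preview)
Your proposal is correct and follows essentially the same approach as the paper: restrict to the set $A_\delta$, bound the action via $\phi(U)=\tfrac12\|I-U\|^2$ and a product estimate, bound the measure of $A_\delta$ via Corollary~\ref{smallball}, and choose $\delta\sim\beta^{-1/2}$. The only cosmetic difference is that the paper obtains the sharper bound $\|I-U(x,j,k)\|\le 4\delta$ via a telescoping argument (Lemmas~\ref{uprodlmm} and~\ref{uinvlmm}) rather than your expansion $4\delta+C\delta^2$, but this makes no difference to the conclusion.
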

We need some preparatory lemmas before proving Theorem \ref{lowlmm}.
\begin{lmm}\label{phihslmm}
The function $\phi:U(N)\ra \rr$ defined in \eqref{phidef} satisfies the identity
\[
\phi(U) = \frac{1}{2}\|I-U\|^2\,.
\]
\end{lmm}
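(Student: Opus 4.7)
This identity is a short computation that essentially unfolds the definition of the Hilbert--Schmidt norm, using only the fact that $U^*U = I$. In fact, the same calculation was already carried out implicitly in equation \eqref{traceform} of the paper, where $h(U) = \|I-U\|^2$ was expanded in terms of the eigenvalues of $U$.

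My plan is to compute $\|I-U\|^2$ directly from the trace definition. By the definition of the Hilbert--Schmidt norm,
\[
\|I-U\|^2 = \tr((I-U)^*(I-U)) = \tr(I - U - U^* + U^*U).
\]
Since $U$ is unitary, $U^*U = I$, so the right-hand side reduces to $\tr(2I - U - U^*)$. Using that $\tr(U^*) = \overline{\tr(U)}$, we have $\tr(U) + \tr(U^*) = 2\Re(\tr(U))$, and since $\tr(I) = N$ is real, $\tr(2I) = 2\Re(\tr(I))$. Combining these,
\[
\|I-U\|^2 = 2\Re(\tr(I)) - 2\Re(\tr(U)) = 2\Re(\tr(I-U)) = 2\phi(U),
\]
which gives the claimed identity.

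There is no obstacle here; the only point that requires any care is the appearance of the real part, which is handled by the identity $\tr(U^*) = \overline{\tr(U)}$. The rest is algebraic manipulation inside the trace.
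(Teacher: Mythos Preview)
Your proof is correct. The paper's own argument reaches the same identity by passing through the eigenvalues: it writes $\phi(U) = \sum_j (1-\cos\theta_j)$ and then invokes the computation \eqref{traceform} to get $\|I-U\|^2 = 2\sum_j(1-\cos\theta_j)$. Your route is slightly more direct, since expanding $\tr((I-U)^*(I-U))$ and using $U^*U=I$ together with $\tr(U^*)=\overline{\tr(U)}$ yields the result without ever diagonalizing $U$. The eigenvalue form is useful elsewhere in the paper, which is presumably why the author records it, but for this lemma your argument is the cleaner one.
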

\begin{proof}
Take any $U\in U(N)$ and let $e^{\I \theta_1},\ldots, e^{\I \theta_N}$ be the eigenvalues of $U$, repeated by multiplicities and arranged in some arbitrary order. Then note that
\begin{align*}
\phi(U) &= \Re(\tr(I-U)) = \sum_{j=1}^N (1-\Re(e^{\I\theta_j}))\\
&= \sum_{j=1}^N(1-\cos\theta_j)\,.
\end{align*}
On the other hand, by \eqref{traceform},  
\begin{align*}
\|I-U\|^2 &= \sum_{j=1}^N 2(1-\cos\theta_j)\,.
\end{align*}
This completes the proof of the lemma.
\end{proof}
\begin{lmm}\label{uhslmm}
For any $U\in U(N)$ and any $N\times N$ complex matrix $M$,
\[
\|UM\| = \|M\|\,.
\]
\end{lmm}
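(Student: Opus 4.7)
The plan is to use the definition of the Hilbert--Schmidt norm in terms of the trace, namely $\|M\|^2 = \tr(M^*M)$, together with the unitarity identity $U^*U = I$. The key computation is
\[
\|UM\|^2 = \tr((UM)^*(UM)) = \tr(M^*U^*UM) = \tr(M^*M) = \|M\|^2,
\]
where the second equality uses the identity $(UM)^* = M^*U^*$ for the conjugate transpose of a product, and the third uses that $U \in U(N)$ means $U^*U = I$. Taking square roots gives the claim.

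The only things to be careful about are: making sure we quote the trace characterization of $\|\cdot\|$ from the definition at the start of Section~\ref{smallsec}; and making sure we use left unitarity $U^*U = I$ rather than right unitarity (both hold since $U(N)$ consists of unitary matrices, but only the former is needed here). There is no real obstacle --- the lemma is a one-line verification that the Hilbert--Schmidt norm is invariant under left multiplication by a unitary matrix, which is a standard fact. The analogous statement for right multiplication, $\|MU\| = \|M\|$, would follow symmetrically by writing $\|MU\|^2 = \tr((MU)(MU)^*) = \tr(MUU^*M^*) = \tr(MM^*) = \|M\|^2$, although only the left-multiplication version is stated in this lemma.
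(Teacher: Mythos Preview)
Your proof is correct. The paper's argument is slightly different in form: it decomposes $M$ into its columns $m_1,\ldots,m_N$, notes that $\|UM\|^2 = \sum_j \|Um_j\|^2$, and then uses that a unitary matrix preserves the Euclidean norm of each column. Your trace computation $\|UM\|^2 = \tr(M^*U^*UM) = \tr(M^*M)$ is an equally standard and arguably more direct route to the same conclusion; both rely on $U^*U = I$, just packaged differently.
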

\begin{proof}
Let $m_1,\ldots, m_N$ be the columns of $M$. Then 
\[
\|UM\|^2 = \|Um_1\|^2 + \cdots + \|Um_N\|^2\,.
\]
Since $U$  is a unitary matrix, $\|Ux\|^2 = \|x\|^2$ for any $x$. Since 
\[
\|M\|^2 = \|m_1\|^2+ \cdots + \|m_N\|^2\,,
\]
this completes the proof.
\end{proof}
\begin{lmm}\label{uprodlmm}
For any $U_1,\ldots, U_m\in U(N)$,
\[
\|I - U_1U_2\cdots U_m\| \le \sum_{j=1}^m \|I- U_j\|\,.
\]
\end{lmm}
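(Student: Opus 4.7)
The plan is to establish the bound via a telescoping identity combined with Lemma \ref{uhslmm}. Specifically, I would write
\[
I - U_1 U_2 \cdots U_m = \sum_{j=1}^m U_1 U_2 \cdots U_{j-1} (I - U_j),
\]
where the $j=1$ term is interpreted as $I - U_1$. This identity is verified by a straightforward telescoping check: each adjacent pair of summands $U_1 \cdots U_{j-1}(I-U_j) + U_1 \cdots U_j (I-U_{j+1})$ combines to $U_1 \cdots U_{j-1} - U_1 \cdots U_{j+1}$, so the whole sum collapses to $I - U_1 \cdots U_m$.

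Applying the triangle inequality for the Hilbert--Schmidt norm to the telescoping sum gives
\[
\|I - U_1 U_2 \cdots U_m\| \le \sum_{j=1}^m \|U_1 U_2 \cdots U_{j-1}(I - U_j)\|.
\]
Since each partial product $U_1 U_2 \cdots U_{j-1}$ is itself a unitary matrix (the product of unitaries is unitary), Lemma \ref{uhslmm} lets us strip off that prefactor to obtain $\|U_1 \cdots U_{j-1}(I - U_j)\| = \|I - U_j\|$ for every $j$. Summing over $j$ yields the claimed inequality.

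There is no real obstacle here: the only mildly nontrivial ingredient is recognizing the correct telescoping decomposition, and everything else is a direct consequence of the lemmas already proved. An equivalent proof by induction on $m$ would work identically, using the single-step identity $I - U_1 \cdots U_m = (I - U_1 \cdots U_{m-1}) + U_1 \cdots U_{m-1}(I - U_m)$ together with Lemma \ref{uhslmm}, and this formulation might read slightly more cleanly in the final write-up.
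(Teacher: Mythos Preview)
Your proof is correct and is essentially identical to the paper's: the paper writes the telescoping as $\|I - U_1\cdots U_m\| \le \sum_{j=0}^{m-1}\|U_1\cdots U_j - U_1\cdots U_{j+1}\| = \sum_{j=0}^{m-1}\|U_1\cdots U_j(I-U_{j+1})\|$ and then applies Lemma~\ref{uhslmm}, which is exactly your decomposition with a shifted index.
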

\begin{proof}
By the triangle inequality for the Hilbert--Schmidt norm,
\begin{align*}
\|I - U_1U_2\cdots U_m\| &\le \sum_{j=0}^{m-1}\|U_1\cdots U_j - U_1\cdots U_{j+1}\|\\
&= \sum_{j=0}^{m-1}\|U_1\cdots U_j (I-U_{j+1})\|\,.
\end{align*}
To complete the proof, note that by  Lemma \ref{uhslmm}, $\|U_1\cdots U_j (I-U_{j+1})\| = \|I-U_{j+1}\|$.
%This completes the proof of the lemma.
\end{proof}
\begin{lmm}\label{uinvlmm}
For any $U\in U(N)$, $\|I-U^{-1}\| = \|I-U\|$.
\end{lmm}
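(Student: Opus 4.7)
The plan is to reduce this identity to Lemma \ref{uhslmm}, which says that left-multiplication by a unitary matrix preserves the Hilbert--Schmidt norm. Since $U \in U(N)$, the matrix $U$ itself is unitary, so multiplying the matrix $I - U^{-1}$ on the left by $U$ does not change its norm. But $U(I - U^{-1}) = U - I$, whose norm obviously equals $\|I - U\|$ (the Hilbert--Schmidt norm is a norm, so $\|M\| = \|-M\|$).

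Concretely, I would write
\[
\|I - U^{-1}\| = \|U(I - U^{-1})\| = \|U - I\| = \|I - U\|,
\]
where the first equality uses Lemma \ref{uhslmm} with $M = I - U^{-1}$, and the third uses homogeneity of the norm under multiplication by $-1$.

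There is no real obstacle here; the lemma is essentially a one-line consequence of the previous result. An alternative proof, which I would mention only if a variant framing were useful later, observes that $U^{-1} = U^*$ for $U \in U(N)$, so $I - U^{-1} = (I - U)^*$, and the Hilbert--Schmidt norm satisfies $\|M^*\| = \|M\|$ because $\tr((M^*)^*M^*) = \tr(MM^*) = \tr(M^*M)$. Either route completes the proof in a single display.
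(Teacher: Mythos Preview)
Your proof is correct and essentially identical to the paper's: the paper writes $I-U^{-1}=-U^{-1}(I-U)$ and applies Lemma~\ref{uhslmm} with the unitary $U^{-1}$, while you multiply by $U$ instead, which is the same maneuver from the other side.
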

\begin{proof}
Note that $I-U^{-1}$ = $-U^{-1}(I-U)$, and apply Lemma \ref{uhslmm}.
\end{proof}
We are now ready to prove Theorem \ref{lowlmm}.
\begin{proof}[Proof of Theorem \ref{lowlmm}]
Take any $\delta \in (0,\sqrt{N})$. Suppose that $U\in U(B_n)$ is a configuration such that for every $(x,y)\in E_n$,
\[
\|I-U(x,y)\|\le \delta\,.
\]
Then by Lemma \ref{uinvlmm}, $\|I-U(y,x)\|$ is also bounded by $\delta$. Consequently, for any plaquette $(x,j,k)\in B_n'$, Lemma \ref{uprodlmm} implies that
\begin{align*}
\|I-U(x,j,k)\| \le 4\delta\,.
\end{align*}
Therefore by Lemma \ref{phihslmm},
\[
\phi(U(x,j,k)) \le 8\delta^2\,.
\]
Summing over all plaquettes, this gives
\[
S_{B_n}(U) = \sum_{(x,j,k)\in B_n'} \phi(U(x,j,k)) \le 8|B_n'|\delta^2\,.
\]
Thus, by the lower bound from Corollary \ref{smallball},
\begin{align*}
\sigma_{B_n}(\{U: S_{B_n}(U)\le 8|B_n'|\delta^2\}) &\ge \sigma_{B_n}(\{U: \|I-U(x,y)\| \le \delta \text{ for all } (x,y)\in E_n\})\\
&\ge (C_1\delta)^{C_3n^d}\,.
\end{align*}
Choosing $\delta$ such that $8\delta^2 = 1/\beta$, we get
\begin{align*}
Z(B_n,g_0)&= \int_{U(B_n)} e^{-\beta S_{B_n}(U)} \,d\sigma_{B_n}(U)\\
 &\ge e^{-|B_n'|} \sigma_{B_n}(\{U: S_{B_n}(U)\le |B_n'|/\beta\})\\
 &\ge e^{-Cn^d\log \beta}\,,
\end{align*}
where we used the assumption that $\beta \ge 2$ in the last step, choosing $C$ sufficiently large. 
This completes the proof of the theorem.
\end{proof}

%\vskip.5in

\section{Smallness of the Wilson action}\label{smallnesssec}
The following theorem shows that if $\beta$ is large, then under the lattice gauge measure, the Wilson action is at most of order $n^d\beta^{-1}\log \beta$. Interestingly, this result is almost a direct consequence of the lower bound on the partition function obtained in Section \ref{lowbdsec}.
\begin{thm}\label{hbnupper}
Suppose that $\beta \ge2$. Then there is a positive constant $C$    depending only on $N$ and $d$ such that for any $t> 0$ and $n\ge 2$,
\begin{align*}
\mu_{B_n,g_0}(\{U: \beta S_{B_n}(U) \ge Cn^d\log \beta + t\})&\le e^{-t}\,.
\end{align*}
\end{thm}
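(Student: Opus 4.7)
The plan is to derive this tail bound as an almost immediate consequence of the partition function lower bound from Theorem~\ref{lowlmm}, using a standard Chernoff-style argument based on the fact that $\phi \ge 0$ (so $S_{B_n} \ge 0$ and $e^{-\beta S_{B_n}}$ is trivially bounded on the bad event).

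Concretely, fix $s > 0$ and let $A := \{U \in U(B_n) : \beta S_{B_n}(U) \ge s\}$. By the definition of $\mu_{B_n, g_0}$,
\[
\mu_{B_n, g_0}(A) = \frac{1}{Z(B_n, g_0)} \int_A e^{-\beta S_{B_n}(U)}\, d\sigma_{B_n}(U).
\]
On $A$ we have $e^{-\beta S_{B_n}(U)} \le e^{-s}$, and $\sigma_{B_n}(A) \le 1$ since $\sigma_{B_n}$ is a probability measure, so the integral is bounded by $e^{-s}$. Combining this with the lower bound $Z(B_n, g_0) \ge e^{-C_0 n^d \log \beta}$ from Theorem~\ref{lowlmm} (valid for $\beta \ge 2$), we obtain
\[
\mu_{B_n, g_0}(A) \le e^{-s + C_0 n^d \log \beta}.
\]
Setting $s = C n^d \log \beta + t$ with $C \ge C_0$ then yields the advertised bound $e^{-t}$.

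The argument has essentially no obstacles: the nonnegativity of $\phi$ (hence of $S_{B_n}$) does all the work that in a typical Chernoff bound would require exponential moment estimates, and the only nontrivial input is the partition function lower bound already established. I would simply state the computation as above, noting that the constant $C$ in the theorem is taken to be the constant appearing in Theorem~\ref{lowlmm} (possibly enlarged). The hypothesis $n \ge 2$ is only needed insofar as Theorem~\ref{lowlmm} requires a lattice on which plaquettes exist, and $\beta \ge 2$ is inherited from the same theorem.
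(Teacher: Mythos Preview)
Your proof is correct and essentially identical to the paper's own argument: both amount to the observation that $\mu_{B_n,g_0}(A)=\frac{1}{Z(B_n,g_0)}\int_A e^{-\beta S_{B_n}}\,d\sigma_{B_n}\le e^{-s}/Z(B_n,g_0)$, followed by the lower bound on $Z(B_n,g_0)$ from Theorem~\ref{lowlmm}. The paper phrases the same computation as a Markov inequality applied to the exponential moment $\int e^{\beta S_{B_n}}\,d\mu_{B_n,g_0}=1/Z(B_n,g_0)$, but this is just a cosmetic repackaging of your bound.
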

\begin{proof}
Note that by Theorem \ref{lowlmm}, there is a constant $C$ depending only on $N$ and $d$ such that
\begin{align*}
\int_{U(B_n)} e^{\beta S_{B_n}(U)}\, d\mu_{B_n, g_0}(U) &= \frac{1}{Z(B_n, g_0)}\le e^{Cn^d\log \beta}\,.
\end{align*}
Thus, for any $t>0$,
\begin{align*}
&\mu_{B_n,g_0}(\{U: \beta S_{B_n}(U) \ge Cn^d\log \beta + t\})\\
&\le e^{-Cn^d\log \beta-t}\int_{U(B_n)} e^{\beta S_{B_n}(U)}\, d\mu_{B_n, g_0}(U)\le e^{-t}\,,
\end{align*}
which proves the claim.
\end{proof}
This theorem has the following important corollary, which shows that if $\beta$ is large, then the computation of the partition function can be reduced to an integral over configurations that nearly minimize the Wilson action.
\begin{cor}\label{smallcor}
Suppose that $\beta \ge 2$. Then there is a constant $C_0$ depending only on $N$ and $d$ such that for any $n\ge 2$,
\[
Z(B_n, g_0) \le 2\int_{A} e^{-\beta S_{B_n} (U)}\, d\sigma_{B_n}(U) \le 2Z(B_n, g_0)\,,
\]
where
\[
A = \{U: \beta S_{B_n}(U) \le C_0n^d\log \beta\}\,.
\]
\end{cor}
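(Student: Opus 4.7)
\medskip

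\noindent\textbf{Proof proposal.} The upper bound $\int_A e^{-\beta S_{B_n}(U)}\,d\sigma_{B_n}(U)\le Z(B_n,g_0)$ (and hence $\le 2Z(B_n,g_0)$) is immediate, because $A\subseteq U(B_n)$ and the integrand is nonnegative. So the only substantive content is the lower bound $Z(B_n,g_0)\le 2\int_A e^{-\beta S_{B_n}(U)}\,d\sigma_{B_n}(U)$, which I would obtain as a direct corollary of Theorem~\ref{hbnupper}.

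The plan is as follows. Let $C$ be the constant from Theorem~\ref{hbnupper}, and define $C_0 := C+1$. Set
\[
A := \{U\in U(B_n): \beta S_{B_n}(U)\le C_0 n^d\log\beta\}.
\]
Writing $C_0 n^d\log\beta = Cn^d\log\beta + t$ with $t = n^d\log\beta$, and using $\beta\ge 2$, $n\ge 2$, $d\ge 2$, we have $t\ge 2^d\log 2\ge \log 2$. Applying Theorem~\ref{hbnupper} with this $t$ gives
\[
\mu_{B_n,g_0}(A^c) = \mu_{B_n,g_0}(\{U:\beta S_{B_n}(U) > Cn^d\log\beta + t\}) \le e^{-t}\le e^{-\log 2} = \tfrac{1}{2},
\]
so $\mu_{B_n,g_0}(A)\ge 1/2$.

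Translating this statement about the probability measure $\mu_{B_n,g_0}$ back into a statement about the unnormalized integral, we have
\[
\frac{1}{Z(B_n,g_0)}\int_A e^{-\beta S_{B_n}(U)}\,d\sigma_{B_n}(U) = \mu_{B_n,g_0}(A)\ge \tfrac{1}{2},
\]
which rearranges to $Z(B_n,g_0)\le 2\int_A e^{-\beta S_{B_n}(U)}\,d\sigma_{B_n}(U)$, as desired. There is no real obstacle here: the whole argument is a one-step combination of Theorem~\ref{hbnupper} (applied with an $n$- and $\beta$-dependent $t$ chosen large enough to make the tail probability at most $1/2$) with the definition of the Gibbs measure. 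The only minor care needed is in selecting $C_0$ so that the resulting $t$ is uniformly at least $\log 2$ over all admissible $n\ge 2$ and $\beta\ge 2$, which is ensured by taking $C_0 = C+1$.
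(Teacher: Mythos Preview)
Your proof is correct and follows essentially the same approach as the paper: observe that the upper bound is trivial, rewrite the lower bound as $\mu_{B_n,g_0}(A)\ge 1/2$, and invoke Theorem~\ref{hbnupper} with $C_0$ chosen large enough. The only difference is that you spell out an explicit choice $C_0=C+1$ and verify $t=n^d\log\beta\ge\log 2$, whereas the paper simply says ``$C_0$ sufficiently large.''
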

\begin{proof}
The upper bound is obvious. For the lower bound, observe that
\[
\frac{1}{Z(B_n,g_0)}\int_{A} e^{-\beta S_{B_n} (U)} \,d\sigma_{B_n}(U) = \mu_{B_n,g_0}(A)\,,
\]
and apply Theorem \ref{hbnupper} to show that $\mu_{B_n,g_0}(A)\ge 1/2$ if $C_0$ is sufficiently large.
\end{proof}

%\vskip.5in

\section{Axial gauge fixing}\label{gaugesec}
Let $G(B_n)$ be the set of all maps from $B_n$ into $U(N)$. In other words, each element $G\in G(B_n)$ assigns a unitary matrix $G(x)$ to each vertex $x\in B_n$. The set $G(B_n)$ is a group under componentwise multiplication, and this group acts on the set $U(B_n)$ in the following way. For each $U\in U(B_n)$ and $G\in G(B_n)$, let $V= GU$ be defined as
\[
V(x,y) = G(x) U(x,y) G(y)^{-1}\,.
\]
It is easy to see that this is indeed a group action. The action of any $G\in G(B_n)$ is called a gauge transform, and $G(B_n)$ is the set of all possible gauge transforms. A key property of the Wilson action is that it is invariant under gauge transforms, as shown by the following proposition.
\begin{prop}\label{gaugeprop1}
Let $f:U(B_n) \ra \rr$ be a function such that $f(U)$ depends on $U$ only through the plaquette variables $(\phi(U(x,j,k)))_{(x,j,k)\in B_n'}$. Then for any $U\in U(B_n)$ and $G\in G(B_n)$, $f(GU) = f(U)$. In particular, $S_{B_n}(GU)=S_{B_n}(U)$.
\end{prop}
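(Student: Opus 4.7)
The plan is to show that under a gauge transform, each plaquette holonomy $U(x,j,k)$ is replaced by a conjugate of itself, and then use invariance of the trace under conjugation to conclude that each plaquette variable $\phi(U(x,j,k))$ is unchanged. Since $f$ depends on $U$ only through these variables, $f(GU)=f(U)$ follows immediately, and the statement about $S_{B_n}$ is the special case where $f$ is the sum over plaquettes.

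The main calculation is the telescoping cancellation at the four corners of a plaquette. Writing $V = GU$, so $V(x,y) = G(x)U(x,y)G(y)^{-1}$, one computes
\begin{align*}
V(x,j,k) &= V(x,x+e_j)\,V(x+e_j,x+e_j+e_k)\,V(x+e_j+e_k,x+e_k)\,V(x+e_k,x) \\
&= G(x)U(x,x+e_j)\bigl[G(x+e_j)^{-1}G(x+e_j)\bigr]U(x+e_j,x+e_j+e_k)\bigl[G(x+e_j+e_k)^{-1}G(x+e_j+e_k)\bigr] \\
&\quad \times U(x+e_j+e_k,x+e_k)\bigl[G(x+e_k)^{-1}G(x+e_k)\bigr]U(x+e_k,x)G(x)^{-1} \\
&= G(x)\,U(x,j,k)\,G(x)^{-1}\,,
\end{align*}
so $V(x,j,k)$ is conjugate to $U(x,j,k)$ inside $U(N)$. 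One should also verify briefly that the definition $V(x,y) = G(x)U(x,y)G(y)^{-1}$ is consistent with the convention $V(y,x) = V(x,y)^{-1}$, which follows from $U(y,x) = U(x,y)^{-1}$.

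Next, using cyclicity of the trace,
\begin{equation*}
\tr(I - V(x,j,k)) = \tr\bigl(G(x)(I - U(x,j,k))G(x)^{-1}\bigr) = \tr(I - U(x,j,k))\,,
\end{equation*}
and taking real parts gives $\phi(V(x,j,k)) = \phi(U(x,j,k))$ for every plaquette $(x,j,k)\in B_n'$. Since $f$ is by hypothesis a function of the tuple $(\phi(U(x,j,k)))_{(x,j,k)\in B_n'}$ alone, we conclude $f(GU) = f(V) = f(U)$. Applying this to the specific choice $f = S_{B_n} = \sum_{(x,j,k)\in B_n'} \phi(U(x,j,k))$ yields the final assertion $S_{B_n}(GU) = S_{B_n}(U)$.

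There is no real obstacle here; the only thing to watch is the bookkeeping of orientations and the consistency check that the transformed configuration $V$ still satisfies $V(y,x) = V(x,y)^{-1}$, which is needed before talking about plaquette holonomies of $V$ at all. Once that is in place, the proof is a one-line conjugation argument.
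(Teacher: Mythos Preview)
Your proof is correct and follows the same approach as the paper: show $V(x,j,k) = G(x)U(x,j,k)G(x)^{-1}$ by telescoping, then use cyclicity of the trace to conclude $\phi(V(x,j,k)) = \phi(U(x,j,k))$. The paper simply calls the telescoping ``a simple verification'' without writing it out, whereas you spell it out explicitly and also note the consistency check $V(y,x) = V(x,y)^{-1}$; otherwise the arguments are identical.
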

\begin{proof}
Let $V= GU$. A simple verification shows that for any $(x,j,k)\in B_n'$,
\begin{align*}
V(x,j,k) = G(x)U(x,j,k)G(x)^{-1}\,.
\end{align*}
Therefore $\tr(V(x,j,k)) = \tr(U(x,j,k))$, which implies that $\phi(V(x,j,k)) = \phi(U(x,j,k))$ and hence $f(V) = f(U)$. 
\end{proof}
Two configurations $U$ and $V$ are said to be gauge equivalent if $V= GU$ for some gauge transform $G$. Roughly speaking, gauge fixing refers to any procedure that, given a configuration $U$, produces a gauge equivalent configuration $V$ with some desirable properties. A popular gauge fixing process, known as axial gauge fixing, goes as follows. Recall the definitions of $E_n$, $E_n^0$ and $E_n^1$ from Section~\ref{results}. Define
\[ 
U_0(B_n) := \{U\in U(B_n): U(x,y)=I \text{ for all } (x,y)\in E_n^0\}\,.
\] 
Given $U\in U(B_n)$, define an element $G_U\in G(B_n)$ as follows. For any $x,y\in B_n$, write $x\prec y$ if $x$ comes before $y$ in the lexicographic ordering. The function $G_U$ is defined inductively. Let $G_U(0)=I$. Take any $x = (x_1,\ldots, x_d)\in B_n$ and suppose that $G_U(y)$ has been defined for all $y\prec x$. Let $j$ be the largest index such that $x_j\ne 0$. Let $y:= x- e_j$. 
Then $y\in B_n$ and $y\prec x$. Thus, $G_U(y)$ is already defined. Let 
\[
G_U(x) := G_U(y) U(y,x)\,.
\]
In this way, $G_U(x)$ gets defined for all $x\in B_n$. 
\begin{prop}\label{gaugeprop2}
Take any $U\in U(B_n)$ and define $G_U$ in the above manner. Then $G_UU\in U_0(B_n)$. 
\end{prop}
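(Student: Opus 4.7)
The proposition is essentially a direct unfolding of the inductive definition of $G_U$, so the plan is really a matter of matching the definition of $E_n^0$ against the definition of $G_U$ at the endpoints of edges in $E_n^0$.

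First I would fix an arbitrary edge $(x,y)\in E_n^0$. By the definition of $E_n^0$ recalled in Section~\ref{results}, there exists an index $j\in\{1,\dots,d\}$ and coordinates $x_1,\dots,x_j$ such that
\[
x=(x_1,\dots,x_{j-1},x_j,0,\dots,0),\qquad y=(x_1,\dots,x_{j-1},x_j+1,0,\dots,0).
\]
The key structural observation is about $y$: its $j$-th coordinate equals $x_j+1\ge 1$ and all coordinates with index strictly greater than $j$ are zero. Consequently, the largest index $k$ for which $y_k\ne 0$ is exactly $k=j$. This is precisely the index that the inductive definition of $G_U$ selects at the point $y$. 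Moreover $y-e_j=x$, which lies before $y$ in the lexicographic order (they agree up to position $j-1$ and $x_j<x_j+1=y_j$), so $G_U(x)$ is already defined by the time one defines $G_U(y)$ and the induction is consistent.

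With this observation in hand, the definition of $G_U$ at $y$ reads
\[
G_U(y)=G_U(y-e_j)\,U(y-e_j,y)=G_U(x)\,U(x,y).
\]
Plugging this into the formula for the gauge transform gives
\[
(G_UU)(x,y)=G_U(x)\,U(x,y)\,G_U(y)^{-1}=G_U(x)\,U(x,y)\,\bigl(G_U(x)\,U(x,y)\bigr)^{-1}=I,
\]
which is exactly the condition defining $U_0(B_n)$. Since $(x,y)\in E_n^0$ was arbitrary, this completes the argument.

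There is no genuine obstacle here; the only thing that needs care is making sure that for every edge in $E_n^0$ the inductive step at the \emph{upper} endpoint $y$ is the one corresponding to the edge $(x,y)$ itself, and this is guaranteed by the fact that the trailing zeros in the definition of $E_n^0$ force $j$ to be the largest nonzero coordinate of $y$. Before writing the proof I would also quickly check that $G_U$ is well-defined on all of $B_n$ (the origin is handled by $G_U(0)=I$, and any nonzero $x$ has a well-defined largest nonzero coordinate), but this is immediate.
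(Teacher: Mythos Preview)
Your proposal is correct and follows exactly the same approach as the paper's proof. The paper simply asserts that ``by the definition of $G_U$, it is clear that $G_U(y) = G_U(x)U(x,y)$'' and then concludes; you have supplied the explicit verification (that $j$ is the largest nonzero coordinate of $y$) that makes this ``clear'' step rigorous.
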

\begin{proof}
Let $V := G_U U$. Take any $(x,y)\in E_n^0$. Then by the definition of $G_U$, it is clear that 
\[
G_U(y) = G_U(x)U(x,y)\,.
\]
Thus,
\begin{align*}
V(x,y) &= G_U(x) U(x,y) G_U(y)^{-1} = I\,,
\end{align*}
which proves that $V\in U_0(B_n)$.
\end{proof}
A property of axial gauge fixing that will be important in this paper is that if $U$ is a random configuration drawn from the product measure $\sigma_{B_n}$, then $G_UU$ retains the product structure. This is made precise in the following lemma, which is stated and proved in a probabilistic language.
\begin{lmm}\label{gaugeind}
Let $U$ be a random configuration drawn from the product measure $\sigma_{B_n}$. Let $V := G_U U$, where $G_U$ is defined as above. Then the matrices $\{V(x,y): (x,y)\in E_n^1\}$ are independent and Haar-distributed.
\end{lmm}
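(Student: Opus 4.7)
The plan is to exploit the product structure of $\sigma_{B_n}$ together with the bi-invariance of Haar measure on $U(N)$, via a conditioning argument.

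The first step is to establish the key structural fact that $G_U$ depends only on the restriction of $U$ to the axial edges $E_n^0$. This is a straightforward induction following the recursive definition of $G_U$. For $x = (x_1,\ldots, x_d)$ with largest nonzero coordinate index $j$, we have $y = x - e_j = (x_1,\ldots, x_{j-1}, x_j-1, 0,\ldots, 0)$, and both $y$ and $x$ have the form used in the definition of $E_n^0$ in Section \ref{results}. Hence $(y,x) \in E_n^0$, and since the induction sets $G_U(x) = G_U(y) U(y,x)$, it follows that $G_U(x)$ is a (deterministic) function of $\{U(e): e\in E_n^0\}$ alone.

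Next, I would condition on the sub-$\sigma$-algebra generated by $\{U(e): e\in E_n^0\}$. Under the product measure $\sigma_{B_n}$, the families $\{U(e): e\in E_n^0\}$ and $\{U(e): e\in E_n^1\}$ are independent, and the latter family is i.i.d.\ Haar. Given the conditioning, all matrices $G_U(x)$ are fixed elements of $U(N)$. For each edge $(x,y)\in E_n^1$, the transformed matrix
\[
V(x,y) = G_U(x)\, U(x,y)\, G_U(y)^{-1}
\]
is then obtained from a Haar-distributed $U(x,y)$ by left and right multiplication by fixed unitaries. By bi-invariance of the Haar measure, each $V(x,y)$ is again Haar-distributed, and by the conditional independence of the $\{U(x,y): (x,y)\in E_n^1\}$ the resulting $\{V(x,y): (x,y)\in E_n^1\}$ are conditionally independent. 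Since the conditional law does not depend on the conditioning variables, the unconditional law of $\{V(x,y): (x,y)\in E_n^1\}$ is the product Haar measure, as claimed.

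The only place where one has to be careful is the verification that the edge $(y,x)$ arising at each step of the induction truly lies in $E_n^0$; this amounts to checking the explicit form of the vertices in the definition of $E_n^0$ and is routine. There is no serious obstacle; the lemma is essentially a statement that axial gauge fixing is a measure-preserving map on the $E_n^1$-marginal of $\sigma_{B_n}$, and bi-invariance of Haar measure does all the work.
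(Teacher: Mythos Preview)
Your proof is correct and follows essentially the same approach as the paper: show by induction that $G_U$ depends only on $\{U(e):e\in E_n^0\}$, condition on these variables, and use the (bi-)invariance of Haar measure together with the product structure to conclude. Your write-up is slightly more explicit in verifying that $(y,x)\in E_n^0$ and in naming bi-invariance, but the argument is the same.
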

\begin{proof}
Let $U_0 := \{U(x,y): (x,y)\in E_n^0\}$ and $U_1 := \{U(x,y): (x,y)\in E_n^1\}$. Note that $U_0$ and $U_1$ are disjoint collections of independent Haar-distributed matrices. It is easy to see by induction that each $G_U(x)$ is a product of some elements of $U_0$. Therefore, if we condition on $U_0$, then $\{V(x,y):(x,y)\in E_n^1\}$ is a collection of independent random matrices. Moreover, conditional on $U_0$, each $V(x,y)$ is Haar-distributed. Thus, given $U_0$, the matrices $\{V(x,y): (x,y)\in E_n^1\}$ are independent and Haar-distributed. Since the conditional distribution of this collection does not depend on $U_0$, it is also the unconditional distribution. This proves the claim of the lemma.
\end{proof}
Let $\sigma^0_{B_n}$ be the probability law of the configuration $V$ in Lemma \ref{gaugeind}. Lemma \ref{gaugeind} has the following corollary, which allows us to reduce the integration of a function of the Wilson action over $U(B_n)$  to an integration over $U_0(B_n)$. 
\begin{cor}\label{uu0}
Let $U_0(B_n)$ and $\sigma^0_{B_n}$ be defined as above. Then for any bounded Borel  measurable $f:U(B_n) \ra \rr$  such that $f(U)$ depends on $U$ only through $(\phi(U(x,j,k)))_{(x,j,k)\in B_n'}$, 
\[
\int_{U(B_n)} f(U)\, d\sigma_{B_n}(U) = \int_{U_0(B_n)} f(U) \,d\sigma^0_{B_n}(U)\,.
\]
\end{cor}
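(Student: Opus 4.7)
The plan is to interpret both sides as expectations of $f$ under two different (but related) distributions of random configurations, and then connect them via the gauge-transform trick from Lemma~\ref{gaugeind}. Specifically, let $U$ be a random configuration drawn from $\sigma_{B_n}$, and set $V := G_U U$ with $G_U$ defined by the axial gauge fixing procedure preceding Proposition~\ref{gaugeprop2}. The right-hand side of the claim is $\mathbb{E}[f(V)]$ by the very definition of $\sigma^0_{B_n}$ as the law of $V$, since Proposition~\ref{gaugeprop2} guarantees that $V \in U_0(B_n)$ and Lemma~\ref{gaugeind} then identifies the law of the remaining entries $\{V(x,y) : (x,y) \in E_n^1\}$ as a product of Haar measures.

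The left-hand side is, by definition, $\mathbb{E}[f(U)]$. Hence it suffices to show $\mathbb{E}[f(U)] = \mathbb{E}[f(V)]$, and for this I would invoke gauge invariance. By hypothesis $f(U)$ depends on $U$ only through the plaquette variables $(\phi(U(x,j,k)))_{(x,j,k) \in B_n'}$, so Proposition~\ref{gaugeprop1} applies and gives
\[
f(V) = f(G_U U) = f(U)
\]
pointwise, which immediately yields $\mathbb{E}[f(U)] = \mathbb{E}[f(V)]$ and completes the argument.

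There is essentially no analytic obstacle; the boundedness and Borel measurability of $f$ are there just to ensure the integrals make sense and the expectation identity is valid. The one bookkeeping point worth being careful about is that $V$ is a deterministic function of $U$ (through the inductively defined $G_U$), so the equality $f(V) = f(U)$ holds $\sigma_{B_n}$-almost surely, and measurability of $V$ as a map $U(B_n) \to U_0(B_n)$ is immediate from the inductive construction. This is the main (and in fact only) conceptual ingredient: the rest is just reading off the two sides as expectations under the respective laws.
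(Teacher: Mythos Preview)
Your proof is correct and follows essentially the same route as the paper: identify the right-hand side as $\mathbb{E}[f(V)]$ via the definition of $\sigma^0_{B_n}$ (using Proposition~\ref{gaugeprop2} and Lemma~\ref{gaugeind}), identify the left-hand side as $\mathbb{E}[f(U)]$, and then use the gauge invariance from Proposition~\ref{gaugeprop1} to conclude $f(U)=f(G_UU)=f(V)$ pointwise. The only difference is notational, with you writing expectations where the paper writes integrals.
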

\begin{proof}
By Proposition \ref{gaugeprop1} and Proposition \ref{gaugeprop2}, for any $U\in U(B_n)$, $G_UU\in U_0(B_n)$ and
\[
f(U) = f(G_UU)\,.
\]
Thus,
\begin{align*}
\int_{U(B_n)} f(U) \,d\sigma_{B_n}(U) &= \int_{U(B_n)} f(G_UU)\, d\sigma_{B_n}(U)\,.
\end{align*}
By Lemma \ref{gaugeind}, if $U$ is a random configuration with law $\sigma_{B_n}$, then $G_U U$ is a random configuration with law $\sigma^0_{B_n}$. Therefore, 
\begin{align*}
\int_{U(B_n)} f(G_U U) \,d\sigma_{B_n}(U) &= \int_{U_0(B_n)} f(V) \,d\sigma^0_{B_n}(V)\,.
\end{align*}
Combined with the previous identity, this proves the claim.
\end{proof}

%\vskip.5in

\section{An upper bound for the partition function}\label{uppersec}
The following result shows that while computing an upper bound for the free energy per site in $B_n$, if $\beta$ is large and $n$ is not too large (depending on $\beta$), it suffices to restrict attention to configurations where all matrices are close to the identity. The restriction that $n$ needs to be sufficiently small will be removed later. 
\begin{thm}\label{mainstep1}
There is a constant $C_1$ depending only on $N$ and $d$ such that the following is true. For any $n\ge 2$, 
\begin{align*}
F(B_n, g_0) &\le \frac{\log 2}{n^d} + \frac{1}{n^{d}}\log \int_{U_0^\beta(B_n)}  e^{-\beta S_{B_n}(U)}\, d\sigma^0_{B_n}(U)\,,
\end{align*}
where
\[
U_0^\beta(B_n) := \biggl\{U\in U_0(B_n): \|I-U(x,y)\|\le C_1n^{(d+1)/2}\biggl(\frac{\log\beta}{\beta}\biggr)^{1/2}\textup{ for all } (x,y)\in E_n\biggr\}\,.
\]
\end{thm}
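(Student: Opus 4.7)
\textbf{Proof Proposal for Theorem \ref{mainstep1}.} The plan is to combine the three main tools developed so far: the action concentration bound from Corollary \ref{smallcor}, axial gauge fixing from Corollary \ref{uu0}, and a combinatorial reconstruction argument that expresses each edge variable in $E_n^1$ as a bounded product of plaquette variables under axial gauge. The argument then closes via Cauchy--Schwarz applied to the plaquette action.

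\textbf{Step 1 (reduction to near-minimizers).} By Corollary \ref{smallcor}, writing $A:=\{U: \beta S_{B_n}(U)\le C_0n^d\log\beta\}$, we have
\[
Z(B_n,g_0)\le 2\int_A e^{-\beta S_{B_n}(U)}\,d\sigma_{B_n}(U).
\]
Since both the integrand and the indicator $1_A$ depend on $U$ only through the plaquette variables $(\phi(U(x,j,k)))_{(x,j,k)\in B_n'}$, Corollary \ref{uu0} gives
\[
Z(B_n,g_0)\le 2\int_{A\cap U_0(B_n)} e^{-\beta S_{B_n}(U)}\,d\sigma^0_{B_n}(U).
\]

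\textbf{Step 2 (axial gauge forces smallness of each edge).} The key claim is that $A\cap U_0(B_n)\subseteq U_0^\beta(B_n)$ for a suitable constant $C_1$. Fix $U\in A\cap U_0(B_n)$ and an edge $(x,x+e_j)\in E_n^1$. By definition of $E_n^1$, there exists some index $k>j$ with $x_k>0$; let $k$ be the \emph{largest} such index, so that $x_{k+1}=\cdots=x_d=0$. Consider the plaquette $(x-e_k,j,k)$, whose definition reads
\[
U(x-e_k,j,k) = U(x-e_k,x-e_k+e_j)\,U(x-e_k+e_j,x+e_j)\,U(x+e_j,x)\,U(x,x-e_k).
\]
By the choice of $k$, both edges in direction $k$ appearing here, namely $(x-e_k,x)$ and $(x-e_k+e_j,x+e_j)$, lie in $E_n^0$ and hence equal $I$. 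Rearranging, we obtain
\[
U(x,x+e_j) = U(x-e_k,j,k)^{-1}\,U(x-e_k,x-e_k+e_j).
\]
The new edge $(x-e_k,x-e_k+e_j)$ has one coordinate reduced by $1$; iterating (re-choosing the largest non-zero coordinate above $j$ at each stage) must terminate when the starting vertex has $x_{j+1}=\cdots=x_d=0$, at which point the remaining edge variable lies in $E_n^0$ and equals $I$. This realizes $U(x,x+e_j)$ as a product of at most $\sum_{k>j}x_k\le dn$ plaquette variables and their inverses.

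\textbf{Step 3 (quantitative bound via Cauchy--Schwarz).} Applying Lemma \ref{uprodlmm} together with Lemma \ref{uinvlmm} to the product obtained in Step~2,
\[
\|I-U(x,x+e_j)\|\le \sum_{(y,j',k')\in P}\|I-U(y,j',k')\|,
\]
where $P$ is the multiset of plaquettes used, with $|P|\le dn$. By Cauchy--Schwarz and Lemma \ref{phihslmm},
\[
\sum_{(y,j',k')\in P}\|I-U(y,j',k')\|
\le \sqrt{dn}\,\Bigl(\sum_{(y,j',k')\in B_n'}\|I-U(y,j',k')\|^2\Bigr)^{1/2}
= \sqrt{dn}\,\bigl(2S_{B_n}(U)\bigr)^{1/2}.
\]
On $A$, $S_{B_n}(U)\le C_0 n^d\beta^{-1}\log\beta$, so
\[
\|I-U(x,x+e_j)\|\le \sqrt{2dC_0}\,n^{(d+1)/2}\bigl(\beta^{-1}\log\beta\bigr)^{1/2},
\]
which gives $U\in U_0^\beta(B_n)$ for $C_1:=\sqrt{2dC_0}$. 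For edges in $E_n^0$ the bound is trivial since $U(x,y)=I$.

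\textbf{Step 4 (conclusion).} Combining Steps 1--3 yields
\[
Z(B_n,g_0)\le 2\int_{U_0^\beta(B_n)}e^{-\beta S_{B_n}(U)}\,d\sigma^0_{B_n}(U);
\]
taking logarithms and dividing by $n^d$ gives the stated bound on $F(B_n,g_0)$. The main obstacle is Step 2: identifying the correct inductive order on edges so that two opposite sides of each reconstruction plaquette lie in $E_n^0$. Choosing $k$ to be the largest coordinate index with $x_k>0$ is the clean device that makes this work uniformly, and is tailored precisely to the ``lexicographic spine'' structure of $E_n^0$.
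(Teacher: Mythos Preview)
Your proof is correct and follows essentially the same approach as the paper's. The paper isolates your Step~2--3 as a standalone ``discrete nonlinear Poincar\'e inequality'' (Lemma~\ref{p1lmm}), proving by induction on $|x|_1$ that $\|I-U(x,y)\|\le (2|x|_1 S_{B_n}(U))^{1/2}$ for $U\in U_0(B_n)$, via exactly the same device of choosing the largest index $k$ with $x_k\ne 0$ and peeling off the plaquette $(x-e_k,j,k)$; it then plugs this into the same chain Corollary~\ref{smallcor}~$\to$~Corollary~\ref{uu0}~$\to$ inclusion $A\cap U_0(B_n)\subseteq U_0^\beta(B_n)$.
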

For $x\in \zz^d$, let $|x|_1$ denote the $\ell^1$ norm of $x$, that is, the sum of the absolute values of the coordinates of $x$. We need the following lemma, which shows that if the Wilson action of a configuration $U\in U_0(B_n)$ is small, then $U(x,y)$ is close to $I$ for every $(x,y)\in E_n$ such that $|x|_1$ is not too large. One may call this a discrete nonlinear Poincar\'e inequality. 
\begin{lmm}\label{p1lmm}
Take any $n\ge 2$. For any $U\in U_0(B_n)$ and any $(x,y)\in E_n$,
\[
\|I-U(x,y)\|\le (2|x|_1 S_{B_n}(u))^{1/2}\,.
\] 
\end{lmm}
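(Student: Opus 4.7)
The plan is to exploit the fact that the complement of $E_n^1$, namely $E_n^0$, forms a spanning tree of $B_n$, so that for any $U\in U_0(B_n)$ the matrix $U(x,y)$ on a non-tree edge can be expressed as a product of plaquette variables alone. Once such a representation is obtained with at most $|x|_1$ plaquettes, the inequality follows by chaining together the triangle-type bound from Lemma \ref{uprodlmm}, the Cauchy--Schwarz inequality, and the identity $\phi(U) = \frac{1}{2}\|I-U\|^2$ from Lemma \ref{phihslmm}.

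To set up the representation, consider an edge $(x, x+e_j)$ with $x = (x_1,\ldots,x_d)$. If $(x,x+e_j) \in E_n^0$ then $U(x, x+e_j) = I$ and the bound is immediate. Otherwise some coordinate $x_m$ with $m > j$ is strictly positive; let $k$ be the largest such index, so that $x_k > 0$ and $x_m = 0$ for every $m > k$. The plaquette $(x-e_k, j, k)$ has four boundary edges: the two $e_j$-edges at $x-e_k$ and $x$, together with the two $e_k$-edges at $x - e_k$ and at $x + e_j - e_k$. By the defining property of $E_n^0$ and the maximality of $k$, the position $x-e_k$ has every coordinate beyond $k$ equal to $0$, and likewise for $x + e_j - e_k$ since $j < k$; hence both of these $e_k$-edges lie in $E_n^0$ and equal $I$. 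Reading off the plaquette product therefore gives the reduction
\begin{equation*}
U(x, x+e_j) = P(x-e_k, j, k)^{-1}\, U(x-e_k,\, x-e_k+e_j).
\end{equation*}

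Iterating this reduction --- first exhausting the chosen $k$ by stepping down to $x - x_k e_k$ via $x_k$ plaquettes of type $(\cdot, j, k)$, then selecting the next largest remaining nonzero coordinate index, and so on --- the process terminates once the remaining edge has the form $(z, z+e_j)$ with $z_m = 0$ for all $m > j$, i.e.\ once it lies in $E_n^0$. The total number of plaquettes used is
\begin{equation*}
\ell \;=\; x_{j+1} + x_{j+2} + \cdots + x_d \;\le\; |x|_1,
\end{equation*}
and the plaquettes are pairwise distinct because each step strictly decreases the residual sum $z_{j+1}+\cdots+z_d$. Consequently $U(x, x+e_j) = P_1^{-1} P_2^{-1}\cdots P_\ell^{-1}$ for distinct plaquette variables $P_1,\ldots,P_\ell$.

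Combining Lemmas \ref{uprodlmm} and \ref{uinvlmm} with Cauchy--Schwarz and then Lemma \ref{phihslmm},
\begin{equation*}
\|I - U(x,x+e_j)\|^2 \;\le\; \biggl(\sum_{i=1}^\ell \|I - P_i\|\biggr)^2 \;\le\; \ell \sum_{i=1}^\ell \|I - P_i\|^2 \;=\; 2\ell \sum_{i=1}^\ell \phi(P_i) \;\le\; 2\ell\, S_{B_n}(U) \;\le\; 2|x|_1\, S_{B_n}(U),
\end{equation*}
which is the desired bound. The main subtlety --- and really the only obstacle --- is the precise choice of $k$ as the maximal index with $x_k > 0$: without this choice the side $e_k$-edges of the reduction plaquette need not lie in the tree, and the recursion would spawn extra non-identity edge factors rather than producing the clean product of plaquettes on which the Cauchy--Schwarz step relies.
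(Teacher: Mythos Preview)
Your proof is correct and follows essentially the same route as the paper's: both pick $k$ as the largest index with $x_k\ne 0$, use the plaquette $(x-e_k,j,k)$ to reduce $U(x,x+e_j)$ to $U(x-e_k,x-e_k+e_j)$ modulo a plaquette factor (exploiting that the two $e_k$-sides lie in $E_n^0$), iterate, and finish with Cauchy--Schwarz and Lemma~\ref{phihslmm}. The only cosmetic difference is that the paper phrases the iteration as an induction on $|x|_1$ yielding a bound $\|I-U(x,y)\|\le\sum_{(z,j,k)\in B(x)}\|I-U(z,j,k)\|$, whereas you unroll it explicitly into a product $P_1^{-1}\cdots P_\ell^{-1}$ and then invoke Lemma~\ref{uprodlmm}; the underlying telescoping is identical.
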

\begin{proof}
We will prove by induction on $|x|_1$ that for every $(x,y)\in E_n$, $\|I-U(x,y)\|$ is bounded above by the sum of $\|I-U(z,j,k)\|$ over $(z,j,k)\in B(x)$, where $B(x)$ is a subset of $B_n'$ of size $\le |x|_1$. This is clearly true if $|x|_1=0$, since every edge incident to the origin belongs to $E_n^0$. Now take any $(x,y)\in E_n$ and suppose that the claim has been proved for every $(x',y')\in E_n$ with $|x'|_1< |x|_1$. Let $x_1,\ldots,x_d$ be the coordinates of $x$. Then $y = x+ e_j$ for some $j$. Let $k$ be the largest index such that $x_k\ne 0$. If $k\le j$ then $(x,y)\in E_n^0$, which is the trivial case. Therefore assume that $k >j$. Let $z := x-  e_k$. Then $(z,j,k)\in B_n'$. Note that the edges $(z, z+ e_k)$ and $(z+e_j, z + e_j + e_k)$ belong to $E_n^0$. Therefore
\begin{align*}
U(z,j,k) &= U(z, z+ e_j) U(z+ e_j + e_k, z+ e_k)\\
&= U(z, z+ e_j) U(x,y)^{-1}\,.
\end{align*}
By Lemma \ref{uhslmm}, this gives
\begin{align*}
\|I-U(x,y)\| &= \|I- U(z,j,k)^{-1}U(z,z+ e_j)\|\\
&= \|U(z,j,k)^{-1}(U(z,j,k)-U(z,z+ e_j))\|\\
&= \|U(z,j,k) - U(z,z+ e_j)\|\\
&\le \|I-U(z,j,k)\| + \|I- U(z,z+ e_j)\|\,.
\end{align*}
Since $|z|_1= |x|_1-1$, this completes the induction step. Thus, for any $(x,y)\in E_n$, there exists a set $B(x)\subseteq B_n'$ of size $\le |x|_1$ such that
\[
\|I-U(x,y)\|\le \sum_{(z,j,k)\in B(x)} \|I-U(z,j,k)\|\,.
\]
Applying the Cauchy--Schwarz inequality and Lemma \ref{phihslmm}, this gives
\begin{align*}
\|I-U(x,y)\|&\le \biggl(|B(x)|\sum_{(z,j,k)\in B(x)} \|I-U(z,j,k)\|^2\biggr)^{1/2} \\
&\le (2|x|_1 S_{B_n}(U))^{1/2}\,.
\end{align*}
This completes the proof of the lemma. 
\end{proof}
We are now ready to prove Theorem \ref{mainstep1}.
\begin{proof}[Proof of Theorem \ref{mainstep1}]
Take any $n\ge 2$. Let 
\[
A := \{U\in U(B_n): \beta S_{B_n}(U) \le C_0n^d\log \beta\}\,,
\]
where $C_0$ is as in Corollary \ref{smallcor}, so that
\begin{equation}\label{zineq1}
Z(B_n, g_0)\le 2\int_A e^{-\beta S_{B_n}(U)} \,d\sigma_{B_n}(U)\,.
\end{equation}
By Corollary \ref{uu0},
\begin{align}\label{zineq2}
\int_A e^{-\beta S_{B_n}(U)} d\sigma_{B_n}(U) = \int_{A_0} e^{-\beta S_{B_n}(U)} \,d\sigma^0_{B_n}(U)\,,
\end{align}
where
\[
A_0:= A\cap U_0(B_n)\,.
\]
For any $U\in A_0$ and $(x,y)\in E_n$, Lemma \ref{p1lmm} gives
\begin{align*}
\|I-U(x,y)\|&\le  (2|x|_1 S_{B_n}(U))^{1/2}\\
&\le Cn^{(d+1)/2} \biggl(\frac{\log \beta}{\beta}\biggr)^{1/2}\,.
\end{align*}
Therefore, if the constant $C_1$ in the statement of the theorem is chosen appropriately, then
\begin{align*}
A_0\subseteq U_0^\beta(B_n)\,.
\end{align*}
By \eqref{zineq1} and \eqref{zineq2}, this completes the proof of the theorem.
\end{proof}

%\vskip.5in

\section{From Lie group to Lie algebra}\label{liesec}
The purpose of this section is to lay the groundwork for replacing the integrals over the Lie group $U(N)$ with integrals over its Lie algebra $\mathfrak{u}(N)$, the set of all skew-Hermitian matrices of order $N$. In reality, we will be integrating over $\I$ times $\mathfrak{u}(N)$, that is, the set of  Hermitian matrices of order~$N$. The results of Section \ref{smallsec} are of crucial importance in this maneuver. 

Let $H(N)$ be the vector space of all $N\times N$ complex Hermitian matrices, equipped with the Hilbert--Schmidt norm. Take any $H\in H(N)$. Write the $(j,k)^{\mathrm{th}}$ entry of $H$ as  
\[
x_{jk} =
\begin{cases}
y_{jj} &\text{ if } j=k,\\
(z_{jk}+\I w_{jk})/\sqrt{2} &\text{ if } j<k,\\
(z_{kj}-\I w_{kj})/\sqrt{2} &\text{ if } j>k,
\end{cases}
\]
The $N^2$ parameters $(y_{jj})_{1\le j\le N}$, $(z_{jk})_{1\le j<k\le N}$ and $(w_{jk})_{1\le j<k\le N}$ define an isometry between $H(N)$ with Hilbert--Schmidt norm and $\rr^{N^2}$ with Euclidean norm. Note that the $\sqrt{2}$ in the above representation was inserted to guarantee the isometric nature of the correspondence. This isometry between the two spaces gives a natural definition of Lebesgue measure on $H(N)$, which we will denote by $\lambda$.

For any $U\in U(N)$ and $r> 0$, let 
\[
B(U, r) := \{V\in U(N): \|U-V\|\le r\}\,,
\]
and for any $H\in H(N)$ and $r > 0$, let
\[
b(H, r) := \{G\in H(N): \|H-G\|\le r\}\,.
\]
Recall that for any $H\in H(N)$, $e^{\I H} \in U(N)$, where
\[
e^{\I H} := \sum_{j=0}^\infty \frac{1}{j!} (\I H)^j\,.
\]
The goal of this section is to prove the following theorem. 
\begin{thm}\label{liethm}
Let $\sigma$ denote the Haar measure on $U(N)$. There exists $r_0>0$, depending only on $N$, such that the following is true. Let $f: U(N)\ra [0,\infty)$ be a Borel measurable function. Then for any $r\le r_0$, 
\begin{align*}
(2-e^{6r})^{N^2} C_N\int_{b(0,r)} f(e^{\I H}) \,d\lambda(H)\le \int_{B(I, 2r)} f(U) \,d\sigma(U) &\le e^{6rN^2}C_N\int_{b(0,3r)} f(e^{\I H})\, d\lambda(H)\,,
\end{align*}
where
\[
C_N := \frac{\prod_{j=1}^{N-1} j!}{(2\pi)^{N(N+1)/2}}\,.
\]
More generally, if for some $n\ge 1$,  $f:U(N)^n\ra[0,\infty)$ is a Borel measurable function, then for any $r\le r_0$, 
\begin{align*}
&(2-e^{6r})^{N^2n} C_N^n\int_{b(0,r)^n} f(e^{\I H_1},\ldots, e^{\I H_n}) \,d\lambda(H_1)\, \cdots \, d\lambda(H_n) \\
&\le \int_{B(I, 2r)^n} f(U_1,\ldots,U_n) \,d\sigma(U_1)\,\cdots \, d\sigma(U_n) \\
&\le e^{6rN^2n}C_N^n\int_{b(0,3r)^n} f(e^{\I H_1},\ldots, e^{\I H_n}) \,d\lambda(H_1)\, \cdots \, d\lambda(H_n)\,.
\end{align*}
\end{thm}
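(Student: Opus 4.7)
The natural strategy is to use the exponential map $\Phi: H(N) \to U(N)$, $\Phi(H) := e^{\I H}$, as a local chart near $I$ and transport integrals on $U(N)$ to integrals on $H(N)$. I would organize the proof around three tasks: (i) establishing inclusion relations between the Hilbert--Schmidt balls $B(I, \cdot)$ and $b(0, \cdot)$ via $\Phi$; (ii) computing the Jacobian of $\Phi$ and identifying the constant $C_N$ as the density of Haar measure at $I$ in this chart; (iii) deriving pointwise bounds on this density that produce the factors $(2 - e^{6r})^{N^2}$ and $e^{6rN^2}$.

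For (i), the power series $e^{\I H} - I = \sum_{k \ge 1} (\I H)^k/k!$ gives $\|e^{\I H} - I\| \le e^{\|H\|} - 1$, which is at most $2r$ when $\|H\| \le r \le r_0$ for a suitably small $r_0$. Conversely, for $U \in B(I, 2r)$ the principal matrix logarithm $\I H := \log U = -\sum_{k \ge 1}(I - U)^k/k$ converges in Hilbert--Schmidt norm, is skew-Hermitian (since $U$ is unitary and close to $I$), and satisfies $\|H\| \le -\log(1 - 2r) \le 3r$ for $r_0$ small. A standard injectivity argument then shows that $\Phi$ restricted to $b(0, 3r_0)$ is a diffeomorphism onto its image, with
\[
\Phi(b(0, r)) \subseteq B(I, 2r) \subseteq \Phi(b(0, 3r))
\]
for all $r \le r_0$.

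For (ii), the Duhamel formula for matrix exponentials yields
\[
d\Phi_H(K) = e^{\I H} \cdot \psi(\mathrm{ad}_{\I H})(\I K), \qquad \psi(z) := \frac{1 - e^{-z}}{z},
\]
where $\mathrm{ad}_{\I H}$ is viewed as a real-linear operator on $H(N)$. Left multiplication by a unitary matrix and by $\I$ are both Hilbert--Schmidt isometries, so the Jacobian of $\Phi$ relative to Lebesgue measure $\lambda$ on both sides is $J(H) := \bigl|\det_{H(N)} \psi(\mathrm{ad}_{\I H})\bigr|$, and hence the Haar measure pulled back via $\Phi^{-1}$ has density $\rho(H) = c\, J(H)$ with respect to $\lambda$, for some $c > 0$. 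Applying the change of variables to $f = \mathbf{1}_{B(I, \delta)}$, letting $\delta \to 0$, and using $J(0) = 1$ together with the Euclidean volume formula $\lambda(b(0, \delta)) = \pi^{N^2/2} \delta^{N^2}/\Gamma(N^2/2+1)$, Theorem~\ref{smallballthm} yields, after cancellation of the gamma and powers-of-two factors, that $c = C_N$.

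The main obstacle is task (iii). The operator $\mathrm{ad}_{\I H}$ on $H(N)$ has operator norm at most $2\|H\|$, so when $\|H\| \le r$ each real eigenvalue $z$ of $\mathrm{ad}_{\I H}$ lies in $[-2r, 2r]$. Using the expansion $\psi(z) - 1 = \sum_{k \ge 1}(-z)^k/(k+1)!$, the corresponding eigenvalue $s = \psi(z)$ of $\psi(\mathrm{ad}_{\I H})$ satisfies $|s - 1| \le e^{2r} - 1 \le e^{6r} - 1$, hence $2 - e^{6r} \le s \le e^{6r}$. Multiplying over the $N^2$ real dimensions of $H(N)$ gives $(2 - e^{6r})^{N^2} \le J(H) \le e^{6rN^2}$ (once $r_0$ is small enough to keep the lower bound positive), which combined with (i) and (ii) proves the single-variable inequality. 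The $n$-variable version then follows by iteratively applying the one-variable bound and Fubini's theorem to the product measures $\sigma^{\otimes n}$ and $\lambda^{\otimes n}$.
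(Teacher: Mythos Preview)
Your approach is correct in spirit and reaches the right conclusion, but it follows a genuinely different route from the paper's. The paper does \emph{not} compute the Jacobian of $\Phi$ via the Duhamel formula. Instead it defines the pushforward-type measure $\nu(A):=\sigma(\psi(A))$ on $b(0,r_0)$, proves by a covering argument (Lemma~\ref{abscont1}, Lemma~\ref{abscont2}) that $\nu\ll\lambda$, and then obtains the density $\rho(H)$ as a Lebesgue differentiation limit $\lim_{\delta\to0}\nu(b(H,\delta))/\lambda(b(H,\delta))$. The crucial bounds on $\rho$ come from the bi-Lipschitz estimate of Lemma~\ref{psilmm}, which yields the sandwich $B(e^{\I H},(2-e^{6r})\delta)\subseteq\psi(b(H,\delta))\subseteq B(e^{\I H},e^{2r}\delta)$ (Corollary~\ref{psicor0}); applying the small-ball asymptotic of Theorem~\ref{smallballthm} to each side gives $(2-e^{6r})^{N^2}C_N\le\rho(H)\le e^{2rN^2}C_N$ directly. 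Your argument is more explicit and structural --- you actually identify $\rho(H)=C_N\,|\det_{H(N)}\psi(\mathrm{ad}_{\I H})|$, which the paper never writes down --- whereas the paper's argument is more elementary, needing only the Lipschitz constants of $\psi$ and soft measure theory, with no appeal to the derivative of the exponential map.

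One small gap to patch in your step~(iii): $\mathrm{ad}_{\I H}$ is skew-symmetric on $(H(N),\langle\cdot,\cdot\rangle_{\mathrm{HS}})$, so its eigenvalues as a real linear operator are purely imaginary, not real; your phrase ``each real eigenvalue $z$'' is therefore inaccurate. The repair is immediate and does not change your constants: from $\|\mathrm{ad}_{\I H}\|_{\mathrm{op}}\le 2\|H\|\le 6r$ on $b(0,3r)$ you get $\|\psi(\mathrm{ad}_{\I H})-I\|_{\mathrm{op}}\le e^{6r}-1$, hence every singular value of $\psi(\mathrm{ad}_{\I H})$ lies in $[\,2-e^{6r},\,e^{6r}\,]$, and the determinant bound $(2-e^{6r})^{N^2}\le J(H)\le e^{6rN^2}$ follows by multiplying $N^2$ singular values. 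With that adjustment your proof goes through.
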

Incidentally, the reciprocal of the constant $C_N$ equals the volume of $U(N)$, when $U(N)$ is considered as a submanifold of $\rr^{N^2}$. This has been noted recently in \cite{df16}, where it is derived as consequence of a formula of \cite{hurwitz}. It would be interesting to see if the derivation in \cite{df16} can yield an alternative proof of Theorem \ref{liethm}.

Another remark, pointed out to me by Len Gross, is that Theorem \ref{liethm} can probably be generalized to arbitrary compact Lie subgroups of $U(N)$ using general properties of the exponential map of $U(N)$. The same remark applies to the results of Section \ref{smallsec}.

Several lemmas are needed for the proof of Theorem \ref{liethm}. We will also need Theorem \ref{smallballthm} from Section \ref{smallsec}.  For easy reference, let
\[
\psi(H) := e^{\I H}\,.
\]
The first lemma gives an important set of inequalities for $\psi$.
\begin{lmm}\label{psilmm}
Suppose that $H_1, H_2\in b(0,r)$. Then 
\[
(2-e^r) \|H_1-H_2\|\le \|\psi(H_1)-\psi(H_2)\|\le e^r \|H_1-H_2\|\,.
\]
In particular, $\psi$ is a continuous function.
\end{lmm}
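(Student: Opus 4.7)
The plan is to prove both bounds by expanding $\psi(H_1)-\psi(H_2) = e^{\I H_1} - e^{\I H_2}$ as a power series in $H_1$ and $H_2$, separating out the linear term for the lower bound, and using sub-multiplicativity of the Hilbert--Schmidt norm together with the submultiplicative inequality $\|AB\|\le\|A\|\,\|B\|$ noted at the start of Section~\ref{smallsec}. The key per-term estimate, which I would establish first, is the identity
\[
H_1^j - H_2^j = \sum_{k=0}^{j-1} H_1^k (H_1-H_2) H_2^{j-1-k},
\]
which together with $\|H_1\|,\|H_2\|\le r$ yields $\|H_1^j - H_2^j\| \le j\, r^{j-1}\,\|H_1-H_2\|$ for every $j\ge 1$.

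With this in hand, the upper bound follows from the triangle inequality applied to the exponential series:
\[
\|\psi(H_1)-\psi(H_2)\| \le \sum_{j=1}^\infty \frac{1}{j!}\|H_1^j - H_2^j\| \le \sum_{j=1}^\infty \frac{j r^{j-1}}{j!}\|H_1-H_2\| = e^r\|H_1-H_2\|.
\]
For the lower bound, I peel off the $j=1$ term (using $\|\I(H_1-H_2)\|=\|H_1-H_2\|$) and apply the reverse triangle inequality to the remainder:
\[
\|\psi(H_1)-\psi(H_2)\| \ge \|H_1-H_2\| - \sum_{j=2}^\infty \frac{1}{j!}\|H_1^j-H_2^j\| \ge \|H_1-H_2\|\biggl(1-\sum_{j=2}^\infty \frac{r^{j-1}}{(j-1)!}\biggr),
\]
and the geometric-series computation $\sum_{j=2}^\infty r^{j-1}/(j-1)! = e^r - 1$ collapses this to $(2-e^r)\|H_1-H_2\|$.

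Continuity of $\psi$ is an immediate consequence of the upper bound (it is locally Lipschitz on each ball $b(0,r)$). I do not anticipate a real obstacle here: the only subtlety is checking that the power-series manipulations are valid because everything converges absolutely in Hilbert--Schmidt norm on the bounded sets $b(0,r)$. Note that the lower bound is non-vacuous only when $r<\log 2$, which is consistent with the eventual choice of the small constant $r_0$ in Theorem~\ref{liethm}.
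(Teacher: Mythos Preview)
Your proposal is correct and essentially identical to the paper's own proof: both use the telescoping identity $H_1^j-H_2^j=\sum_{k=0}^{j-1}H_1^k(H_1-H_2)H_2^{j-1-k}$ to get $\|H_1^j-H_2^j\|\le jr^{j-1}\|H_1-H_2\|$, then sum termwise for the upper bound and peel off the $j=1$ term for the lower bound. The only cosmetic difference is that you write $\tfrac{r^{j-1}}{(j-1)!}$ where the paper writes the equivalent $\tfrac{jr^{j-1}}{j!}$.
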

\begin{proof}
For any $j\ge 1$,
\begin{align*}
\|H_1^j - H_2^j\|&= \|(H_1^j - H_1^{j-1} H_2) + (H_1^{j-1} H_2 - H_1^{j-2} H_2^2) + \cdots + (H_1H_2^{j-1} - H_2^j)\|\\
 &\le \sum_{k=1}^{j} \|H_1^{j-k}(H_1-H_2)H_2^{k-1}\|\\
 &\le \sum_{k=1}^{j} \|H_1\|^{j-k}\|H_1-H_2\|\|H_2\|^{k-1}\\
 &\le jr^{j-1} \|H_1-H_2\|\,.
\end{align*}
Thus,
\begin{align*}
\|e^{\I H_1} - e^{\I H_2}\| &\le \sum_{j=1}^\infty\frac{1}{j!}\|H_1^j - H_2^j\|\\
&\le \sum_{j=1}^\infty\frac{jr^{j-1}}{j!} \|H_1-H_2\|\\
&= e^r \|H_1-H_2\|\,,
\end{align*}
which proves the upper bound. Next, note that
\begin{align*}
\|e^{\I H_1} - e^{\I H_2}\| &= \biggl\| \sum_{j=1}^\infty \frac{\I^j}{j!}(H_1^j - H_2^j)\biggr\|\\
&\ge \|H_1-H_2\| - \biggl\| \sum_{j=2}^\infty \frac{\I^j}{j!}(H_1^j - H_2^j)\biggr\|\\
&\ge \|H_1-H_2\| - \sum_{j=2}^\infty \frac{1}{j!}\|H_1^j - H_2^j\|\\
&\ge \|H_1-H_2\| - \sum_{j=2}^\infty \frac{jr^{j-1}}{j!}\|H_1 - H_2\|\\
&= (2-e^r)\|H_1-H_2\|\,.
\end{align*}
This proves the lower bound and completes the proof of the lemma.
\end{proof}
The above lemma has two important corollaries.
\begin{cor}\label{psicor0}
There exists $r_0>0$, depending only on $N$, such that the following is true. For any $r\le r_0$,  $H\in b(0,r)$ and $\delta \le r$,
\begin{align*}
B(e^{\I H}, (2-e^{6r}) \delta) \subseteq \psi(b(H,\delta)) \subseteq B(e^{\I H}, e^{2r} \delta)\,.
\end{align*}
\end{cor}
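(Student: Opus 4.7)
The plan is to handle the two inclusions separately. The right inclusion is essentially immediate from Lemma \ref{psilmm}: for any $G \in b(H,\delta)$, the triangle inequality gives $\|G\| \le \|H\| + \delta \le 2r$, so applying the upper bound in Lemma \ref{psilmm} on the ball $b(0, 2r)$ yields
\[
\|\psi(G) - \psi(H)\| \le e^{2r}\|G - H\| \le e^{2r}\delta,
\]
which is the right inclusion.

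For the left inclusion I intend to use a topological argument. First, I would choose $r_0 > 0$ small enough that $2 - e^{2r_0} > 0$, so that for every $r \le r_0$ the lower bound in Lemma \ref{psilmm} is strictly positive on $b(0, 2r)$; in particular $\psi$ is injective on $b(H,\delta) \subseteq b(0, 2r)$. Since $b(H,\delta)$ is compact and $\psi$ is a continuous injection into the Hausdorff space $U(N)$, the restriction of $\psi$ to $b(H,\delta)$ is a homeomorphism onto its image. Because $H(N)$ and $U(N)$ are real manifolds of the same dimension $N^2$, invariance of domain gives that $\psi$ maps the open ball $\{G : \|G - H\| < \delta\}$ onto an open subset of $U(N)$, and the topological boundary of $\psi(b(H,\delta))$ in $U(N)$ equals $\psi(\{G: \|G-H\|=\delta\})$. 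The lower bound in Lemma \ref{psilmm}, applied to $H$ and any $G$ with $\|G-H\| = \delta$, then gives
\[
\|\psi(G) - \psi(H)\| \ge (2 - e^{2r})\delta > (2 - e^{6r})\delta,
\]
since $r > 0$. Hence the open ball $B(\psi(H), (2-e^{6r})\delta)$ is a connected open subset of $U(N)$ that contains $\psi(H)$ (an interior point of $\psi(b(H,\delta))$) and is disjoint from the boundary of $\psi(b(H,\delta))$. A standard connectedness argument then forces
\[
B(\psi(H), (2-e^{6r})\delta) \subseteq \psi(b(H,\delta)),
\]
which is the desired inclusion.

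The main obstacle I anticipate is the appeal to invariance of domain in the setting of the manifold $U(N)$; a reader who prefers a more concrete route could instead trivialize a neighborhood of $\psi(H)$ in $U(N)$ via the chart $G \mapsto \psi(H) e^{\I G}$, reducing the statement to the Euclidean invariance of domain on $H(N)$. The slack between the natural constant $2 - e^{2r}$ coming out of Lemma \ref{psilmm} and the weaker constant $2 - e^{6r}$ appearing in the statement provides ample room for the argument; alternative routes such as a Newton / contraction mapping construction of preimages are possible but trade the short topological reasoning for more cumbersome quantitative estimates.
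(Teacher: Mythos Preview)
Your argument is essentially correct, but it follows a genuinely different route from the paper's. For the right inclusion both you and the paper simply apply the upper bound in Lemma~\ref{psilmm}. For the left inclusion, however, the paper proceeds \emph{constructively}: given $U\in B(e^{\I H},(2-e^{6r})\delta)$, it writes $U=VDV^*$ with $D$ diagonal having entries $e^{\I\theta_j}$, $\theta_j\in[-\pi,\pi)$, sets $G:=V\Lambda V^*$ with $\Lambda=\mathrm{diag}(\theta_1,\dots,\theta_N)$ so that $U=e^{\I G}$, and then uses the crude chain $\|I-U\|\le 3r$ together with $1-\cos\theta_j\ge\theta_j^2/8$ to obtain $\|G\|\le 6r$. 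This is precisely where the constant $6r$ originates: it is the radius needed to guarantee that both $G$ and $H$ lie in $b(0,6r)$, after which the lower bound of Lemma~\ref{psilmm} (with $r$ replaced by $6r$) gives $\|G-H\|\le\delta$. Your topological argument via invariance of domain never needs to build a preimage explicitly and in fact proves the sharper constant $2-e^{2r}$; the slack to $2-e^{6r}$ is then used only to pass from a strict to a non-strict inequality.

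Two small points are worth tightening. First, in the paper's notation $B(\cdot,\cdot)$ denotes a \emph{closed} ball, so your phrase ``the open ball $B(\psi(H),(2-e^{6r})\delta)$'' is a slip; the connectedness argument still goes through for the closed ball. Second, and more substantively, your ``standard connectedness argument'' requires that the metric ball $B(e^{\I H},(2-e^{6r})\delta)$ in $U(N)$ is (path\nobreakdash-)connected. This is true for small radii---for instance because $B(I,\rho)=\{e^{\I G}:\|G\| \text{ small}\}$ and $t\mapsto e^{\I tG}$ gives a path to $I$ staying in the ball---but it is not entirely automatic and deserves a sentence. The paper's explicit construction sidesteps this issue at the cost of a longer computation; your approach is shorter and more conceptual once these details are supplied.
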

\begin{proof}
Without loss of generality, $r$ is so small that $e^{6r} < 2$. Take any $G\in b(H, \delta)$. Then $G\in b(0,r+\delta)$. Therefore, by Lemma \ref{psilmm},
\begin{align*}
\|e^{\I H} - e^{\I G}\| &\le e^{r+\delta} \|H-G\|\le e^{2r} \delta\,.
\end{align*}
This proves one inclusion. Next, take any 
\[
U\in B(e^{\I H}, (2-e^{6r})\delta)\,.
\]
Let $e^{\I\theta_1},\ldots, e^{\I\theta_N}$ be the eigenvalues of $U$, with $\theta_j$'s chosen such that $-\pi\le \theta_j < \pi$ for each $j$. Then $U$ has a spectral decomposition $VDV^*$, where $V$ is a unitary matrix and $D$ is a diagonal matrix with diagonal entries  $e^{\I\theta_1},\ldots, e^{\I\theta_N}$. Let $\Lambda$ be the diagonal matrix with diagonal entries $\theta_1,\ldots, \theta_N$. Then $G := V\Lambda V^*$ is a Hermitian matrix, and $U = e^{\I G}$. 

Now note that by the upper bound from Lemma \ref{psilmm}, 
\[
\|I-e^{\I H}\|= \|e^{\I 0} - e^{\I H}\| \le e^r \|H\|\le e^r r\,.
\]
In other words, $e^{\I H}\in B(I, e^rr)$. Consequently,
\[
B(e^{\I H}, \delta) \subseteq B(I, e^r r + \delta)\subseteq B(I, (e^r+1)r)\subseteq B(I, 3r)\,.
\]
Since $U\in B(e^{\I H},\delta)$, the above inclusion and the identity \eqref{traceform} imply that
\begin{align*}
9r^2\ge \|I-U\|^2 = 2\sum_{j=1}^N (1-\cos\theta_j)\,.
\end{align*}
Now recall that the $\theta_j$'s are all in $[-\pi,\pi]$. Therefore, if $r$ is sufficiently small (depending on $N$), the above inequality implies that $1-\cos\theta_j \ge \theta_j^2/8$ for each $j$. As a consequence,
\[
\|G\|^2 = \sum_{j=1}^N\theta_j^2\le 36r^2\,.
\]
Thus, $G$ and $H$ both belong to $b(0,6r)$. By the lower bound from Lemma \ref{psilmm}, this gives
\[
\|G-H\|\le \frac{\|U-e^{\I H}\|}{(2-e^{6r})}\le \delta\,. 
\]
Therefore $U\in \psi(b(H,\delta))$. This completes the proof of the lemma.
\end{proof}
\begin{cor}\label{psicor}
For any $r< \log 2$, $\psi$ is injective on $b(0,r)$, and $\psi^{-1}$ is continuous on $\psi(b(0,r))$.
\end{cor}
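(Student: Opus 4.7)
The plan is very short: the corollary is essentially a direct reading of Lemma \ref{psilmm}. The condition $r < \log 2$ is exactly what makes the lower Lipschitz constant $2 - e^r$ strictly positive, which is the sole mechanism that will give both injectivity and continuity of the inverse.

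For injectivity, I would take $H_1, H_2 \in b(0,r)$ with $\psi(H_1) = \psi(H_2)$. By Lemma \ref{psilmm},
\[
(2-e^r)\|H_1 - H_2\| \le \|\psi(H_1) - \psi(H_2)\| = 0,
\]
and since $2 - e^r > 0$, this forces $H_1 = H_2$.

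For continuity of $\psi^{-1}$ on $\psi(b(0,r))$, I would again invoke Lemma \ref{psilmm}: for any two points $U_1 = \psi(H_1)$ and $U_2 = \psi(H_2)$ in $\psi(b(0,r))$,
\[
\|\psi^{-1}(U_1) - \psi^{-1}(U_2)\| = \|H_1 - H_2\| \le (2-e^r)^{-1}\|U_1 - U_2\|.
\]
Thus $\psi^{-1}$ is in fact Lipschitz on $\psi(b(0,r))$ with constant $(2-e^r)^{-1}$, and continuity follows immediately.

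There is no real obstacle here; the lemma was designed precisely to deliver this corollary. The only thing worth a sentence of care is noting that $\psi^{-1}$ is genuinely well-defined as a function on $\psi(b(0,r))$ because of the injectivity step, so the Lipschitz estimate can be stated without ambiguity about which preimage is being selected.
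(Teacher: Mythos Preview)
Your proof is correct and takes essentially the same approach as the paper: both rely directly on the lower bound from Lemma \ref{psilmm}, with the condition $r < \log 2$ ensuring $2 - e^r > 0$. The paper's proof is simply more terse, stating that injectivity and continuity of $\psi^{-1}$ are both clear from the lower bound, while you have spelled out the two-line arguments explicitly.
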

\begin{proof}
The injectivity on $b(0,r)$ for $r<\log 2$ is clear from the lower bound of Lemma \ref{psilmm}. Continuity of $\psi^{-1}$ follows also from the same lower  bound.
\end{proof}
Fix some $r_0$ so small that the conclusions of Corollary \ref{psicor0} and Corollary \ref{psicor} are valid for $r\le 2r_0$. For any Borel set $A\subseteq b(0,r_0)$, let 
\[
\nu(A):= \sigma(\psi(A))\,.
\]
Note that $\nu$ is well-defined, since by Corollary \ref{psicor}, $\psi^{-1}$ is a measurable map on $\psi(b(0,r_0))$ and therefore $\psi(A)$ is a Borel subset of $U(N)$ for any Borel set $A\subseteq b(0,r_0)$. Next, note that since $\psi$ is injective on $b(0,r_0)$, $\nu$ is countably additive. Thus, $\nu$ is a measure on $b(0,r_0)$. Extend the measure $\nu$ to the whole of $H(N)$ by defining it to be zero outside $b(0,r_0)$. Clearly, $\nu$ is a finite measure, with total mass depending only on $N$. 
\begin{lmm}\label{hsinvprop}
For any $V\in U(N)$ and $\delta >0$,
\[
\sigma(B(V, \delta)) = \sigma(B(I,\delta))\,.
\]
\end{lmm}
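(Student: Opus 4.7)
The plan is to reduce this to the left-invariance of Haar measure on $U(N)$ by identifying $B(V,\delta)$ as a left translate of $B(I,\delta)$.

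First, I would use Lemma \ref{uhslmm} to observe that for any $U \in U(N)$,
\[
\|V - VU\| = \|V(I - U)\| = \|I - U\|,
\]
since $V$ is unitary. Hence the bijection $U \mapsto VU$ on $U(N)$ sends $B(I,\delta) = \{U : \|I - U\| \le \delta\}$ onto $\{W : \|V - W\| \le \delta\} = B(V,\delta)$. Equivalently, writing $V \cdot S := \{VU : U \in S\}$, we have $B(V,\delta) = V \cdot B(I,\delta)$.

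The second step is to invoke left-invariance of $\sigma$: for any Borel set $S \subseteq U(N)$ and any $V \in U(N)$, $\sigma(V \cdot S) = \sigma(S)$. Applying this with $S = B(I,\delta)$ yields $\sigma(B(V,\delta)) = \sigma(B(I,\delta))$, as claimed.

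There is no genuine obstacle here; the lemma is a one-line consequence of Lemma \ref{uhslmm} (which ensures the Hilbert--Schmidt metric is left-invariant under multiplication by unitaries) together with the defining translation-invariance of Haar measure. The only thing worth checking is Borel measurability of $V \cdot B(I,\delta)$, which is immediate since left multiplication by $V$ is a homeomorphism of $U(N)$ in the Hilbert--Schmidt topology.
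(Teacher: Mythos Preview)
Your proof is correct and essentially identical to the paper's: both use Lemma \ref{uhslmm} to see that left multiplication by $V$ is an isometry of the Hilbert--Schmidt metric, identifying $B(V,\delta)$ with a translate of $B(I,\delta)$, and then invoke the invariance of Haar measure under multiplication. The only cosmetic difference is that the paper writes $\|V-U\|=\|I-V^*U\|$ rather than $\|V-VU\|=\|I-U\|$.
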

\begin{proof}%[Proof of Proposition \ref{hsinvprop}]
Lemma \ref{uhslmm} implies that for any $U,V\in U(N)$,
\[
\|V-U\|_{\mathrm{HS}} = \|V(I-V^*U)\|_{\mathrm{HS}} = \|I-V^*U\|_{\mathrm{HS}}\,.
\]
By the invariance of the Haar measure under multiplication, the set of all $U$ such that $\|I-U\|_{\mathrm{HS}}\le \delta$ has the same measure as the set of all $U$ such that $\|I-V^*U\|_{\mathrm{HS}}\le \delta$.
\end{proof}
\begin{lmm}\label{abscont1}
For any $H\in H(N)$ and $\delta >0$, 
\[
\nu(b(H, \delta))\le C\delta^{N^2}\,,
\]
where $C$ depends only on $N$.
\end{lmm}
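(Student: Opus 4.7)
The plan is to reduce the measure of a small ball in $H(N)$ under $\nu$ to a small-ball estimate on $U(N)$ via Corollary \ref{psicor0}, Lemma \ref{hsinvprop}, and Corollary \ref{smallball}. Since $\nu$ is supported on $b(0,r_0)$, I would first note that $\nu(b(H,\delta)) = \sigma(\psi(b(H,\delta)\cap b(0,r_0)))$, so only the intersection of $b(H,\delta)$ with $b(0,r_0)$ matters. If this intersection is empty, the bound is trivial.

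The main case is when $\delta$ is small (say $\delta \le r_0/2$) and the intersection is nonempty. Pick any $H_0 \in b(H,\delta) \cap b(0,r_0)$. Then by the triangle inequality, $b(H,\delta) \cap b(0,r_0) \subseteq b(H_0, 2\delta)$. Since $H_0 \in b(0,r_0)$ and $2\delta \le r_0$, Corollary \ref{psicor0} (with $r = r_0$) yields
\[
\psi(b(H_0, 2\delta)) \subseteq B(e^{\I H_0},\, 2e^{2r_0}\delta)\,.
\]
Consequently,
\[
\nu(b(H,\delta)) \le \sigma\bigl(B(e^{\I H_0}, 2e^{2r_0}\delta)\bigr) = \sigma\bigl(B(I, 2e^{2r_0}\delta)\bigr) \le C_2\, (2e^{2r_0}\delta)^{N^2},
\]
by Lemma \ref{hsinvprop} and Corollary \ref{smallball}, provided $2e^{2r_0}\delta < \sqrt{N}$. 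This gives the desired $C\delta^{N^2}$ bound in the small-$\delta$ regime.

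For the remaining case $\delta \ge r_0/2$, I would simply use the fact that $\nu$ is a finite measure with total mass equal to $\sigma(\psi(b(0,r_0))) \le 1$, so that $\nu(b(H,\delta)) \le 1 \le (2/r_0)^{N^2}\delta^{N^2}$. Combining the two ranges and absorbing the dimension-dependent constants into a single $C$ yields the claim. There is no real obstacle here; the only slightly delicate point is ensuring that $\psi(b(H,\delta)\cap b(0,r_0))$ is genuinely Borel measurable so that $\nu$ is well-defined on it, which follows from the injectivity and bicontinuity of $\psi$ on $b(0,r_0)$ established in Corollary \ref{psicor}.
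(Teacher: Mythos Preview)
Your proof is correct and follows essentially the same strategy as the paper: reduce the $\nu$-measure of a small Hermitian ball to the Haar measure of a small ball in $U(N)$ via Corollary~\ref{psicor0}, then invoke Lemma~\ref{hsinvprop} and Corollary~\ref{smallball}, handling large $\delta$ by the finiteness of the total mass. The only cosmetic difference is that the paper applies Corollary~\ref{psicor0} directly to $H$ after restricting to $\|H\|\le 2r_0$, whereas you recenter at a point $H_0$ in the intersection $b(H,\delta)\cap b(0,r_0)$; both routes yield the same inclusion into a ball $B(e^{\I H_0}, C\delta)$ and the argument is otherwise identical.
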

\begin{proof}
Since the total mass of $\nu$ is bounded by a finite constant that depends only on $N$, it suffices to prove the lemma for sufficiently small $\delta$. Assume that $\delta \le r_0$. If $\|H\| > 2r_0$, $b(H, \delta)$ does not intersect $b(0,r_0)$, and therefore $\nu(b(H,\delta))=0$. So assume that $\|H\|\le 2r_0$. Then by Corollary~\ref{psicor0}, 
\[
\psi(b(H,\delta)) \subseteq B(e^{\I H}, e^{4r_0} \delta)\,.
\]
Note that for any Borel set $A\subseteq H(N)$ and any Borel set $B\supseteq \psi(A)$,
\[
\nu(A) = \nu(A \cap b(0,r_0)) = \sigma (\psi(A\cap b(0,r_0)))\le \sigma(B)\,.
\]
Combining these two observations, we get
\[
\nu(b(H,\delta)) \le \sigma(B(e^{\I H}, e^{4r_0} \delta))\,.
\]
By Corollary \ref{smallball} and Lemma \ref{hsinvprop}, we get the desired upper bound on $\sigma(B(e^{\I H}, e^{4r_0} \delta))$.
\end{proof}
\begin{lmm}\label{abscont2}
The measure $\nu$ is absolutely continuous with respect to the Lebesgue measure $\lambda$ on $H(N)$ that was defined at the beginning of this section.
\end{lmm}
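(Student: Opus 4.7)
The plan is to use the pointwise estimate from Lemma \ref{abscont1} together with a covering argument to upgrade a pointwise comparison of measures on small balls into the global inequality needed for absolute continuity. The key observation is that $H(N)$ is isometric to $\mathbb{R}^{N^2}$ under the parametrization given at the start of this section, so the Lebesgue measure of a Hilbert--Schmidt ball is $\lambda(b(H,\delta)) = c_{N^2}\,\delta^{N^2}$ for a dimensional constant $c_{N^2}$. Comparing this with Lemma \ref{abscont1} gives the uniform ball-comparison
\[
\nu(b(H,\delta)) \le \frac{C}{c_{N^2}}\,\lambda(b(H,\delta))
\]
for every $H\in H(N)$ and every $\delta>0$.

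First I would reduce to showing that $\nu(A)=0$ whenever $A$ is a Borel set with $\lambda(A)=0$. Since $\nu$ vanishes outside $b(0,r_0)$, we may restrict attention to $A\subseteq b(0,r_0)$. Given $\varepsilon>0$, use outer regularity of Lebesgue measure on $\mathbb{R}^{N^2}$ to choose an open set $V\supseteq A$ with $\lambda(V)<\varepsilon$, intersected with a bounded neighborhood of $b(0,r_0)$ so that $V$ has finite measure.

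Next I would apply the Besicovitch covering theorem (or a standard Vitali covering argument) in the Euclidean space $H(N)\cong \mathbb{R}^{N^2}$ to extract from the family of all closed balls contained in $V$ of arbitrarily small radius a countable subfamily $\{b(H_i,\delta_i)\}$ covering $V$ with overlap bounded by a dimensional constant $M_{N^2}$. Summing the ball-comparison inequality above gives
\[
\nu(A) \le \nu(V) \le \sum_{i}\nu(b(H_i,\delta_i)) \le \frac{C}{c_{N^2}}\sum_i \lambda(b(H_i,\delta_i)) \le \frac{C\,M_{N^2}}{c_{N^2}}\,\lambda(V) < \frac{C\,M_{N^2}}{c_{N^2}}\,\varepsilon.
\]
Letting $\varepsilon\to 0$ yields $\nu(A)=0$, which is precisely absolute continuity of $\nu$ with respect to $\lambda$.

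There is no real obstacle here: the whole content of the lemma is packaged in Lemma \ref{abscont1}, and the step from a ball-by-ball comparison to $\nu\ll\lambda$ is a routine covering-lemma exercise. The only mild subtlety is to invoke a covering theorem that applies to an arbitrary collection of balls in $\mathbb{R}^{N^2}$ (Besicovitch suffices, and avoids any need to verify doubling-type hypotheses), but this is standard. Alternatively, one could quote the fact that a locally finite Borel measure on $\mathbb{R}^{N^2}$ whose upper density with respect to $\lambda$ is everywhere finite is automatically absolutely continuous with respect to $\lambda$; this is an immediate consequence of the same covering argument.
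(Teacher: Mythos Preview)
Your proof is correct and follows the same overall strategy as the paper: use Lemma \ref{abscont1} to compare $\nu$ and $\lambda$ on balls, then pass to an arbitrary null set by a covering argument. The paper's execution is more elementary, though. Rather than invoking outer regularity and the Besicovitch covering theorem, the paper simply uses the definition of Lebesgue outer measure: any Lebesgue null set $A$ can be covered by a countable family of balls $B_1, B_2, \ldots$ with $\sum_j \lambda(B_j) < \eta$, and then $\nu(A) \le \sum_j \nu(B_j) \le C\sum_j \lambda(B_j) < C\eta$. No bounded-overlap property is needed because the total $\lambda$-measure of the cover is already controlled. Your Besicovitch route works and is the right reflex for comparing two general measures, but here one of the measures is Lebesgue, so the simpler efficient-covering-of-null-sets fact suffices.
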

\begin{proof}
Take any Borel set $A\subseteq H(N)$ of Lebesgue measure zero. Take any $\eta > 0$. From the standard construction of Lebesgue measure (for example, in Chapter 11 of \cite{rudin}), it follows that there exists a countable collection of Euclidean balls $B_1, B_2,\ldots$ such that 
\[
A \subseteq \bigcup_{j=1}^\infty B_j
\]
and
\[
\sum_{j=1}^\infty \lambda(B_j) <\eta\,.
\]
On the other hand, by Lemma \ref{abscont1},  $\nu(B_j) \le C \lambda(B_j)$ 
for each $j$. Thus, $\nu(A)\le C\eta$. Since this is true for any $\eta$, $\nu(A)$ must be zero.
\end{proof}
We are now ready to prove Theorem \ref{liethm}.
\begin{proof}[Proof of Theorem \ref{liethm}]
Lemma \ref{abscont2} implies the existence of a Radon--Nikodym derivative $\rho$ of $\nu$ with respect to $\lambda$. By standard results (for example, Theorem 7.8 in \cite{rudin87}) we know that for almost every $H$, 
\begin{align*}
\rho(H) &= \lim_{\delta\ra 0} \frac{\nu(b(H, \delta))}{\lambda(b(H,\delta))}\,.
\end{align*}
Plugging in the formula for the volume of a Euclidean ball, this gives
\begin{align}\label{rhohform}
\rho(H) &= \lim_{\delta\ra 0} \frac{\Gamma(N^2/2+1)\nu(b(H, \delta))}{\pi^{N^2/2} \delta^{N^2}}
\end{align}
for almost all $H$. Now take any $H$ such that $\|H\|\le r< r_0$, for which the above identity holds. Then for $\delta$ sufficiently small, $b(H,\delta)\subseteq b(0,r_0)$, and therefore
\[
\nu(b(H,\delta)) = \sigma(\psi(b(H,\delta)))\,.
\] 
By Corollary \ref{psicor0}, this implies that
\begin{align*}
\sigma(B(e^{\I H}, (2-e^{6r}) \delta)) \le \nu(b(H,\delta)) \le \sigma(B(e^{\I H}, e^{2r} \delta))\,.
\end{align*}
Applying Lemma \ref{hsinvprop} and Theorem \ref{smallballthm}, we get
\begin{align*}
(2-e^{6r})^{N^2} C'_N \le \lim_{\delta\ra 0} \frac{\nu(b(H,\delta))}{\delta^{N^2}} \le e^{2rN^2}C_N'\,,
\end{align*}
where
\[
C_N' := \frac{\prod_{j=1}^{N-1} j!}{(2\pi)^{N/2} 2^{N^2/2}\Gamma(N^2/2+1)}\,.
\]
By \eqref{rhohform}, this implies that for almost all $H\in b(0,r)$,
\begin{equation}\label{rhoineqs}
(2-e^{6r})^{N^2} C_N \le \rho(H) \le e^{2rN^2}C_N\,,
\end{equation}
where $C_N$ is the constant defined in the statement of the theorem.

If $r\le r_0$, then by the definition of $\nu$ and the injectivity of $\psi$,
\begin{align*}
\int_{\psi(b(0,r))} f(U)\, d\sigma(U) &= \int_{b(0,r)}f(\psi(H))\, d\nu(H) = \int_{b(0,r)}f(\psi(H)) \rho(H) \,d\lambda(H)\,,
\end{align*}
and therefore by \eqref{rhoineqs},
\begin{align}
(2-e^{6r})^{N^2} C_N\int_{b(0,r)}f(\psi(H))\,  d\lambda(H)&\le \int_{\psi(b(0,r))} f(U) \,d\sigma(U)\nonumber \\
&\le e^{2rN^2}C_N\int_{b(0,r)}f(\psi(H))\,  d\lambda(H)\,.\label{last1}
\end{align}
Now note that if $r$ is sufficiently small, then by Corollary \ref{psicor0}, 
\begin{equation}\label{last2}
\psi(b(0,r))\subseteq B(I, 2r)\subseteq \psi(b(0,3r))\,. 
\end{equation}
To complete the proof for the first assertion of the theorem, use \eqref{last2} in \eqref{last1}, after replacing $r$ with $3r$ in the second inequality.

For the second assertion (that is, for $f:U(N)^n \ra [0,\infty)$), fix $U_2,\ldots, U_n$ and integrate over $U_1$. In this integral, apply the upper bound for the case $n=1$ to get
\begin{align*}
&\int_{B(I, 2r)^n} f(U_1,\ldots,U_n) \,d\sigma(U_1)\,\cdots \, d\sigma(U_n) \\
&\le e^{6rN^2}C_N\int_{b(0,3r)\times B(I,2r)^{n-1}} f(e^{\I H_1},U_2,\ldots, U_n) \,d\lambda(H_1)\,d\sigma(U_2)\, \cdots \, d\sigma(U_n)\,.
\end{align*}
Next, fix $H_1,U_3,\ldots, U_n$ and integrate over $U_2$. In this integral, replace $U_2$ by $e^{\I H_2}$, incurring another factor of $e^{6rN^2}C_N$. The proof of the upper bound is completed by repeating this process $n$ times, replacing each $U_j$ by $e^{\I H_j}$. The lower bound is obtained similarly.
\end{proof}

%\vskip.5in

\section{Some standard results about Gaussian measures}\label{gausssec}
In this section we will review --- mostly without proof --- some results about Gaussian measures that will be needed in the subsequent sections. A Gaussian measure on $\rr^n$ is a probability measure that has density proportional to $e^{-P(x)}$ with respect to Lebesgue measure, where $P$ is a polynomial of degree two. Not all polynomials of degree two correspond to Gaussian measures; $P$ should have the property that $e^{-P(x)}$ is integrable. A necessary and sufficient condition for this to happen is that for some positive constant $c$, 
\begin{align}\label{intcrit}
P(x)\ge c\|x\|^2 \text{ whenever $\|x\|$ is sufficiently large,}
\end{align}
where $\|x\|$ denotes the Euclidean norm of $x$. 

A different way to express the above criterion for integrability is as follows. Suppose that $P(x)$ is written as
\[
P(x) = x^TQ x + v^Tx + c\,,
\]
where $x^T$ denotes the transpose of a column vector $x\in \rr^n$, $Q$ is an $n\times n$ matrix, $v\in \rr^n$ and $c\in \rr$. Any polynomial of degree two can be written in the above format. Then the criterion \eqref{intcrit} is the same as saying that:
\begin{align}\label{intcrit2} 
\text{$Q$ is a positive definite matrix.}
\end{align}
%The equivalence of the two criteria proves the following lemma.
%\begin{lmm}
%If $P$ and $Q$ are polynomials of degree two on $\rr^n$, such that $P-Q$ is a polynomial of degree one (that is, a linear function), then $e^{-P(x)}$ is integrable if and only if $e^{-Q(x)}$ is integrable.
%\end{lmm}
We will refer to $x^TQ x$ as the quadratic component of $P(x)$ and $v^Tx$ as the linear component of~$P(x)$.

There is a standard way of writing the probability density function of a Gaussian measure on $\rr^n$, which is the following:
\begin{align}\label{gaussdensity}
\frac{1}{(2\pi)^{n/2}(\det \Sigma)^{1/2}}\exp\biggl(-\frac{1}{2}(x-\mu)^T \Sigma^{-1} (x-\mu)\biggr)\,.
\end{align}
Here $\Sigma$ is an $n\times n$ positive definite matrix, $\mu\in \rr^n$ and $\det \Sigma$ is the determinant of $\Sigma$. If the density is expressed as a function proportional to 
\begin{equation*}%\label{altrep}
\exp(-x^TQ x - v^Tx - c)\,,
\end{equation*}
then the relation between the pairs $(Q,v)$ and $(\Sigma, \mu)$ is easy to read off by equating terms in the exponent:
\begin{align}
\Sigma &= \frac{1}{2}Q^{-1}\,,\label{musigma} \\
\mu &= -\Sigma v\,.\label{musigma2}
\end{align}
Let $X=(X_1,\ldots, X_n)$ be a random vector with probability density \eqref{gaussdensity}. Then for each $1\le j\le n$, 
\[
\ee(X_j) = \mu_j\,,
\]
where $\mu_j$ is the $j^{\mathrm{th}}$ coordinate of $\mu$, and for each $1\le j,k\le n$,
\[
\cov(X_j, X_k) = \ee(X_jX_k)-\ee(X_j)\ee(X_k) = \sigma_{jk}\,,
\]
where $\sigma_{jk}$ is the $(j,k)^{\mathrm{th}}$ entry of the matrix $\Sigma$. For this reason, $\mu$ is called the mean vector of $X$ and $\Sigma$ is called the covariance matrix of $X$. If the mean vector is zero, the Gaussian measure is said to be centered. 

One significance of the above formulas is that a Gaussian probability measure is completely determined by its mean vector and covariance matrix. 

A property of Gaussian random vectors that will be important for us is that if $X$ is a Gaussian random vector as above, then for any $A\subseteq \{1,\ldots, n\}$, the $\rr^A$-valued random vector $X_A := (X_j)_{j\in A}$ also has a Gaussian distribution, with means and covariances inherited from $X$. That is, the mean vector of $X_A$ is $\mu_A := (\mu_j)_{j\in A}$ and the covariance matrix of $X_A$ is $\Sigma_A := (\sigma_{jk})_{j,k\in A}$. 

Let $a$ and $b$ denote the minimum and maximum eigenvalues of $\Sigma$. Then recall that
\[
a = \min_{\|x\|=1} x^T \Sigma x\,, \ \  b = \max_{\|x\|=1} x^T \Sigma x\,.
\]
From this representation it follows easily that if $a_A$ and $b_A$ are the minimum and maximum eigenvalues of $\Sigma_A$, then 
\begin{equation}\label{aabb}
a\le a_A\le b_A\le b\,.
\end{equation}
Recall also that the eigenvalues of $\Sigma_A^{-1}$ are precisely the inverses of the eigenvalues of $\Sigma_A$, and that the determinant of $\Sigma_A$ is the product of the eigenvalues of $\Sigma_A$. Therefore it follows from the formula~\eqref{gaussdensity} and the above inequalities that the probability density of $X_A$ at a point $x_A\in \rr^A$ is bounded below by
\begin{equation}\label{cruciallower}
\frac{1}{(2\pi)^{|A|/2} b^{|A|/2}} \exp\biggl(-\frac{1}{2a}\|x_A-\mu_A\|^2\biggr)\,.
\end{equation}
One last fact about Gaussian random variables that we will need is the following well-known inequality. Suppose that $X_1,\ldots, X_n$ are as above. Clearly, the variance of each $X_j$ is bounded above by~$b$. Consequently, for any $x\ge 0$ and $\theta \ge 0$,
\begin{align*}
\pp(\max_{1\le j\le n} |X_j-\mu_j|\ge x) &\le \sum_{j=1}^n \pp(|X_j-\mu_j|\ge x)= 2\sum_{j=1}^n \pp(X_j-\mu_j\ge x)\\
&\le 2e^{-\theta x}\sum_{j=1}^n\ee(e^{\theta (X_j-\mu_j)}) \le 2n e^{-\theta x+\theta^2b/2}\,.
\end{align*}
Choosing $\theta = x/b$ gives the bound
\begin{align*}
\pp(\max_{1\le j\le n} |X_j-\mu_j|\ge x) &\le  2n e^{-x^2/2b}\,.
\end{align*}
Consequently, if $n\ge 2$ then 
\begin{equation}\label{gaussmax}
\pp\biggl(\max_{1\le j\le n} |X_j|\ge \max_{1\le j\le n} |\mu_j| + \sqrt{6b\log n}\biggr)\le \frac{2}{n^2} \le \frac{1}{2}\,.
\end{equation}
We will have crucial uses of \eqref{cruciallower} and \eqref{gaussmax} later. 

%\vskip.5in

\section{Lattice Maxwell theory}\label{maxwellsec}
In this section, we define lattice Maxwell theory. This is a Gaussian theory that will eventually be used to approximate $U(N)$ lattice gauge theory in the $\beta \ra \infty$ limit. Lattice Maxwell theory puts a scalar variable on each edge of the lattice. We will later expand the theory to attached a skew-Hermitian matrix to each edge, representing the Lie algebra approximation of the Lie group $U(N)$ near the identity. For a discussion of lattice Maxwell theory as it arises in the classical literature, see Chapter 22 of \cite{gj87}. 

 Let $T(B_n) = \rr^{E_n}$ be the set of all real-valued functions on $E_n$. Recall the quadratic form $M_n$ on $T(B_n)$ that was defined in Section \ref{results}.  Take any set $E$ such that
\[
E_n^0\subseteq E \subseteq E_n\,,
\]
and let $T_E(B_n) := \rr^{E_n\backslash E}$. 
Let $\theta$ be a real-valued function on $E$. Take any $t\in T_E(B_n)$. Extend $t$ to an element $s\in T(B_n)$ by defining 
\[
s(x,y):=
\begin{cases}
\theta(x,y) &\text{ if } (x,y)\in E,\\
t(x,y) &\text{ if } (x,y)\in E_n\backslash E.
\end{cases}
\]
With $t$ extended to $s$ as above, define
\[
M_{E,\theta,n}(t) := M_n(s)\,.
\]
The space $T_E(B_n)$ is endowed with the natural Euclidean norm:
\[
\|t\|^2 := \sum_{(x,y)\in E_n\backslash E} t(x,y)^2\,.
\]
The function $\theta$ that is zero everywhere on $E$ is particularly important. This function will be denoted by the symbol~$0$. The following lemma shows that $M_{E,0,n}$ is a positive definite quadratic form on the vector space $T_{E}(B_n)$ and gives a lower bound for its smallest eigenvalue and an upper bound for its largest eigenvalue. One can say that this result is a kind of discrete Poincar\'e inequality.
\begin{lmm}\label{matrixlmm}
For each $t\in T_E(B_n)$, 
\[
\frac{C_1}{n^{d+2}}\|t\|^2\le M_{E,0,n}(t) \le C_2\|t\|^2\,,
\]
where $C_1$  and $C_2$ are positive constants that depend only on $d$. 
\end{lmm}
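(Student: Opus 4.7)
The upper bound is cheap and I would dispatch it first. Each plaquette contributes a term
$(s(x,x{+}e_j)+s(x{+}e_j,x{+}e_j{+}e_k)+s(x{+}e_j{+}e_k,x{+}e_k)+s(x{+}e_k,x))^{2}$,
which by Cauchy--Schwarz is at most $4$ times the sum of squares of the four edge values (using $s(y,x)^{2}=s(x,y)^{2}$). When we sum over all plaquettes in $B_n'$, any fixed positively oriented edge of $B_n$ is contained in at most $2(d-1)$ plaquettes of $B_n'$, so the total is bounded by $C_{d}\|s\|^{2}$. Since $s$ agrees with $t$ on $E_n\setminus E$ and vanishes on $E$, $\|s\|^{2}=\|t\|^{2}$, and we obtain the upper bound with $C_{2}$ depending only on $d$.

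For the lower bound, the plan is to run a linearized version of the discrete nonlinear Poincar\'e inequality in Lemma \ref{p1lmm}. The key identity is that the plaquette sum defining $s(z,j,k)$ is \emph{additive} in $s$, so the exact same combinatorial argument carries over: for $s\in T(B_n)$ with $s\equiv 0$ on $E_n^{0}$ and any edge $(x,y)=(x,x+e_j)\in E_n$, let $k$ be the largest index with $x_k\neq 0$; if $k\le j$ then $(x,y)\in E_n^{0}$ and $s(x,y)=0$, otherwise let $z=x-e_k$. As shown in the proof of Lemma \ref{p1lmm}, the two edges $(z,z+e_k)$ and $(z+e_j,z+e_j+e_k)$ lie in $E_n^{0}$, so their $s$-values vanish, and the plaquette identity reduces to
\[
s(z,j,k)=s(z,z+e_j)-s(x,y),
\]
which by induction on $|x|_{1}$ yields $|s(x,y)|\le\sum_{(z',j',k')\in B(x)}|s(z',j',k')|$ for some $B(x)\subseteq B_n'$ with $|B(x)|\le|x|_{1}$.

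Applying Cauchy--Schwarz to this pointwise bound gives
\[
s(x,y)^{2}\le |x|_{1}\sum_{(z',j',k')\in B(x)}s(z',j',k')^{2}\le dn\cdot M_n(s).
\]
Summing over $(x,y)\in E_n\setminus E\subseteq E_n\setminus E_n^{0}$ and using $|E_n|\le dn^{d}$ then produces $\|t\|^{2}\le C\,n^{d+1}M_n(s)\le C\,n^{d+2}M_{E,0,n}(t)$, which is the required lower bound (with room to spare). The only mildly nonobvious point is verifying the extension step of the induction: namely that $z$ continues to satisfy the hypothesis needed for the two edges $(z,z+e_k)$ and $(z+e_j,z+e_j+e_k)$ to lie in $E_n^{0}$, and that the plaquette $(z,j,k)$ itself is in $B_n'$. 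Both checks reduce to reading off that all coordinates of $z$ after position $k$ are zero, which is immediate from the choice of $k$ as the largest nonzero coordinate of $x$ combined with $j<k$; this is essentially the same bookkeeping already carried out in Lemma \ref{p1lmm}, so no new difficulty arises.
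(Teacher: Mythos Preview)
Your proof is correct and follows essentially the same approach as the paper: the upper bound comes from the bounded plaquette-multiplicity of each edge, and the lower bound comes from the same induction on $|x|_1$ that reconstructs $s(x,y)$ from a path of at most $|x|_1$ plaquette variables (the linear analogue of Lemma~\ref{p1lmm}). The paper bounds each term in the induction directly by $\sqrt{M_n(s)}$ to obtain $|s(x,y)|\le |x|_1\sqrt{M_n(s)}$, whence $\|t\|^2\le Cn^{d+2}M_n(s)$; your use of Cauchy--Schwarz against the set $B(x)$ actually yields the slightly sharper $\|t\|^2\le Cn^{d+1}M_n(s)$, which you then relax to $n^{d+2}$ to match the statement --- a harmless improvement, consistent with the paper's own remark that the optimal exponent is $d$.
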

\noindent{\it Remark.} One can show that the optimal lower bound is $C_1 n^{-d}\|t\|^2$, but the cruder bound displayed above will suffice for the applications of this lemma in this  manuscript.
\begin{proof}
First, extend $t$ to an element $s\in T(B_n)$ by defining $s(x,y)=0$ for $(x,y)\in E$ and $s(x,y)=t(x,y)$ for $(x,y)\in E_n\backslash E$. Then $M_{E,0,n}(t) = M_n(s)$. The upper bound follows easily from the definition of $M_n(s)$, since each edge belongs to at most $C$ plaquettes, where $C$ depends only on $d$. For the lower bound, it suffices to prove by induction that for each $(x,y)\in E_n$, 
\begin{equation}\label{toprove}
|s(x,y)|\le |x|_1 \sqrt{M_n(s)}\,,
\end{equation}
because $|x|_1\le dn$ for every $x\in B_n$. (Recall that $|x|_1$ denotes the $\ell^1$ norm of $x$.)

We will prove \eqref{toprove} by induction on $|x|_1$. This is clearly true if $|x|_1=0$, since every edge incident to the origin belongs to $E_n^0$, and $E_n^0\subseteq E$. So assume that $|x|_1> 0$. Let $x_1,\ldots,x_d$ be the coordinates of $x$. Then $y = x+ e_j$ for some $j$. Let $k$ be the largest index such that $x_k\ne 0$. If $k\le j$ then $(x,y)\in E_n^0\subseteq E$, and therefore $s(x,y)=0$. So assume that $k >j$. Let 
\[
z := x-  e_k\,.
\] 
Then $(z,j,k)\in B_n'$. 
Now note that the edges $(z, z+ e_k)$ and $(z+ e_j, z +  e_j + e_k)$ belong to $E_n^0$. Therefore,
\begin{align*}
s(z,j,k) &= s(z, z+ e_j) + s(z+ e_j +  e_k, z+ e_k)\\
&= s(z, z+ e_j) - s(x,y)\,.
\end{align*}
This can be rewritten as
\[
s(x,y) = s(z,z+ e_j) - s(z,j,k)\,.
\]
If $(z,z+ e_j)\in E$, then $s(z,z+ e_j)=0$ and the above identity gives
\[
|s(x,y)|= |s(z,j,k)|\le \sqrt{M_n(s)}\le |x|_1\sqrt{M_n(s)}\,.
\]
If $(z,z+ e_j)\not \in E$, then since $|z|_1=|x|_1-1$, the induction hypothesis implies that
\[
|s(z,z+ e_j)|\le |z|_1 \sqrt{M_n(s)} = (|x|_1-1)\sqrt{M_n(s)}\,,
\]
and therefore
\begin{align*}
|s(x,y)| &\le |s(z,z+ e_j)|+|s(z,j,k)|\\
&\le (|x|_1-1) \sqrt{M_n(s)} + \sqrt{M_n(s)}= |x|_1\sqrt{M_n(s)}\,.
\end{align*}
This completes the induction step.
\end{proof}
By the lower bound from the above lemma and the criterion \eqref{intcrit2} discussed in Section \ref{gausssec}, the quadratic form $M_{E,0,n}$ defines a Gaussian measure on $T_E(B_n)$. We will denote this measure by $\tau_{E,0,n}$. Note that for any $\theta$, $M_{E,\theta,n}$ and $M_{E,0,n}$ differ by a linear function. In other words, the quadratic component of $M_{E,\theta,n}$ is the same as that of $M_{E,0,n}$. Thus, $M_{E,\theta,n}$ also defines a Gaussian measure on $T_E(B_n)$, which we will denote by $\tau_{E,\theta,n}$. This Gaussian measure will be called lattice Maxwell theory on $B_n$ with boundary value $\theta$. When $E = E_n^0$ and $\theta=0$, we will simply write $\tau_n$ instead of $\tau_{E_n^0,0,n}$ and $T_0(B_n)$ instead of $T_{E_n^0}(B_n)$.

%\vskip.5in

\section{Some estimates for lattice Maxwell theory}\label{maxestsec}
In this section, we will prove three important small ball probability estimates for lattice Maxwell theory. Throughout, $C, C_1, C_2,\ldots$ will denote positive constants that depend only on $d$, whose values may change from line to line.  The first estimate is given by the following theorem.
\begin{thm}\label{latticethm1}
Take any $n\ge 2$. For any nonempty $A\subseteq E_n^1$ and $\eta\in (0,1/2]$,
\[
\tau_{n}(\{t\in T_0(B_n): |t(x,y)|\le \eta \textup{ for all } (x,y)\in A\}) \ge e^{-C|A|(\log(1/\eta)+\log n)}\,,
\]
where $C$ is a positive constant that depends only on $d$.
\end{thm}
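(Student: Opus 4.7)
The plan is to obtain the lower bound by reducing to a Gaussian marginal computation on the coordinates in $A$, then using the volumetric lower bound for the box $[-\eta,\eta]^A$.

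First I would identify the covariance structure of $\tau_n$. Writing $M_n^0$ as a symmetric matrix on $T_0(B_n) = \rr^{E_n^1}$, the density of $\tau_n$ is proportional to $\exp(-M_n^0(t))$, so by \eqref{musigma} its covariance is $\Sigma = \tfrac{1}{2}(M_n^0)^{-1}$. Lemma \ref{matrixlmm} (with $E = E_n^0$) gives $C_1 n^{-(d+2)} \|t\|^2 \le M_n^0(t) \le C_2 \|t\|^2$, so the eigenvalues of $M_n^0$ lie in $[C_1 n^{-(d+2)}, C_2]$, and consequently the eigenvalues of $\Sigma$ lie in $[a, b]$ with $a := 1/(2C_2)$ and $b := n^{d+2}/(2C_1)$. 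In particular $a$ is bounded below by a constant independent of $n$, while $b$ grows only polynomially in $n$.

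Next I would extract the marginal on $A$. Let $t_A := (t(x,y))_{(x,y)\in A}$. Under $\tau_n$ (which is centered because $\theta = 0$), $t_A$ is Gaussian with mean $0$ and covariance $\Sigma_A$, a principal submatrix of $\Sigma$. By \eqref{aabb}, the eigenvalues of $\Sigma_A$ also lie in $[a,b]$, so by the standard pointwise density bound \eqref{cruciallower} the density of $t_A$ at any point $x_A \in \rr^A$ is at least
\[
\frac{1}{(2\pi)^{|A|/2} b^{|A|/2}} \exp\biggl(-\frac{1}{2a}\|x_A\|^2\biggr).
\]
For $x_A$ in the box $[-\eta,\eta]^A$ we have $\|x_A\|^2 \le |A|\eta^2 \le |A|/4$, so the density is bounded below by $(2\pi b)^{-|A|/2} \exp(-|A|/(8a))$, a quantity independent of the point chosen in the box.

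Integrating this pointwise lower bound over the box, whose volume is $(2\eta)^{|A|}$, yields
\[
\tau_n(\{t : |t(x,y)| \le \eta \text{ for all } (x,y) \in A\}) \ge \frac{(2\eta)^{|A|}}{(2\pi b)^{|A|/2}} \exp\biggl(-\frac{|A|}{8a}\biggr).
\]
Plugging in $b \le n^{d+2}/(2C_1)$ and $a \ge 1/(2C_2)$ and taking logarithms gives an exponent bounded below by $-C|A|(\log(1/\eta) + \log n)$ for a constant $C$ depending only on $d$, which is the claim. No step here looks genuinely hard: the one point that might seem delicate is the growth of $b$ with $n$, but since this enters only through $\log b \lesssim \log n$ it is absorbed by the $\log n$ term in the target bound. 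The whole argument is essentially a marginal Gaussian small-ball estimate, with Lemma \ref{matrixlmm} supplying the polynomial-in-$n$ control of the covariance spectrum.
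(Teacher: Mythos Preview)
Your proof is correct and follows essentially the same approach as the paper: both use Lemma~\ref{matrixlmm} together with \eqref{musigma} to bound the eigenvalues of the covariance of $\tau_n$, then apply the marginal density lower bound \eqref{cruciallower} and integrate over the box $[-\eta,\eta]^A$. The only cosmetic difference is that you make the bound $\|x_A\|^2 \le |A|\eta^2$ explicit before invoking $\eta \le 1/2$, whereas the paper keeps the $e^{-|A|\eta^2/(2a)}$ factor and absorbs it at the end.
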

\begin{proof}
Recall that $\tau_{n}$ is a centered Gaussian measure. Let $\Sigma$ denote the covariance matrix of this measure. Let $a$ and $b$ be the smallest and largest eigenvalues of $\Sigma$. By the relation \eqref{musigma} and Lemma \ref{matrixlmm}, 
\begin{equation}\label{smalllarge}
C_1 \le a \le b\le C_2n^{d+2}\,.
\end{equation}
Let $t$ be a random vector drawn from the measure $\tau_{n}$.  Then by the bounds from \eqref{smalllarge} and the lower bound \eqref{cruciallower}, it follows that %if $\eta$ is sufficiently small (depending only on $d$), then 
\begin{align*}
\tau_n(\{t\in T_0(B_n): |t(x,y)|\le \eta \text{ for all } (x,y)\in A\}) &\ge\frac{ (2\eta)^{|A|}  e^{-|A|\eta^2/(2C_1)}}{(2\pi)^{|A|/2}(C_2 n^{d+2})^{|A|/2}}\,.%\\
%&\ge e^{-C|A|(\log (1/\eta) +\log n)}\,.
\end{align*}
Since $\eta\le 1/2$, the right side is bounded below by $ e^{-C|A|(\log (1/\eta) +\log n)}.$
%The result can now be extended to all $\eta \in (0,1/2]$ by adequately increasing the value of $C$. 
\end{proof}

Next, take any $E_n^0\subseteq E \subseteq E_n$, and a function $\theta:E \ra \rr$. Let
\[
\|\theta\|^2 := \sum_{(x,y)\in E} \theta(x,y)^2\,.
\]
Our second goal in this section is to prove the following result.
\begin{thm}\label{latticethm2}
There exists a constant $C$ depending only on $d$ such that for any $E$, $\theta$ and $n$,
\begin{align*}
\tau_{E,\theta, n} (\{t\in T_E(B_n): |t(x,y)|\le Cn^{d+2}(1 + \|\theta\|)\textup{ for all } (x,y)\in E_n\})\ge\frac{1}{2}\,.
\end{align*}
\end{thm}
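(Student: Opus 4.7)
The plan is to exploit the Gaussian structure: identify the covariance matrix and mean vector of $\tau_{E,\theta,n}$ from the quadratic form $M_{E,\theta,n}$, bound each of them using Lemma \ref{matrixlmm} and the geometry of plaquettes, and then apply the Gaussian maximum tail inequality \eqref{gaussmax}.

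First I would write out $M_{E,\theta,n}(t)$ as a polynomial of degree two in $t\in T_E(B_n)$. Expanding $M_n(s)$ with $s=t$ on $E_n\setminus E$ and $s=\theta$ on $E$, each plaquette term $s(x,j,k)^2$ splits as $(T_P + \Theta_P)^2 = T_P^2 + 2T_P\Theta_P + \Theta_P^2$, where $T_P$ and $\Theta_P$ are the signed partial sums over edges of the plaquette lying in $E_n\setminus E$ and $E$ respectively. Summing over plaquettes gives
\[
M_{E,\theta,n}(t) = t^\top Q t + v^\top t + c,
\]
where $t^\top Q t = M_{E,0,n}(t)$ and $v_e = 2\sum_{P\ni e} \epsilon_{P,e} \Theta_P$. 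By the formulas \eqref{musigma}--\eqref{musigma2}, the Gaussian $\tau_{E,\theta,n}$ has covariance $\Sigma = \tfrac{1}{2}Q^{-1}$ and mean $\mu = -\Sigma v$.

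Second I would bound the relevant eigenvalues and norms. By Lemma \ref{matrixlmm} applied to the quadratic form $M_{E,0,n}$, the eigenvalues of $Q$ lie in $[C_1/n^{d+2}, C_2]$, so the largest eigenvalue $b$ of $\Sigma$ satisfies $b\le Cn^{d+2}$. For $v$, since each edge lies in at most $O(1)$ plaquettes and $|\Theta_P|$ is controlled by the $\theta$ values on the (at most four) edges of $P$ in $E$, a simple double-counting argument yields $\|v\|^2 \le C\|\theta\|^2$. Consequently
\[
\|\mu\|_\infty \le \|\mu\|_2 \le \|\Sigma\|_{\mathrm{op}}\|v\| \le Cn^{d+2}\|\theta\|.
\]

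Finally I would apply \eqref{gaussmax} to the coordinates $\{t(x,y)\}_{(x,y)\in E_n\setminus E}$, whose number is at most $d n^d$. With probability at least $1/2$,
\[
\max_{(x,y)\in E_n\setminus E}|t(x,y)| \le \|\mu\|_\infty + \sqrt{6b\log(dn^d)} \le Cn^{d+2}\|\theta\| + Cn^{(d+2)/2}\sqrt{\log n}.
\]
For $n\ge 2$, the second term is dominated by $Cn^{d+2}$, so the total is bounded by $Cn^{d+2}(1+\|\theta\|)$ with a universal constant $C$ depending only on $d$, which is the desired inequality. Degenerate cases (e.g.\ $E_n\setminus E$ empty or a singleton, so that \eqref{gaussmax} does not directly apply) are either trivial or follow from the one-dimensional Gaussian tail bound, possibly by enlarging $C$.

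I do not expect a serious obstacle here; the main point requiring care is the bound $\|v\|\le C\|\theta\|$, where one must track the combinatorial constant coming from the fact that each edge is shared by only $O(1)$ plaquettes, so that the linear coefficients do not accumulate. Everything else is a direct application of the Gaussian facts reviewed in Section \ref{gausssec} together with Lemma \ref{matrixlmm}.
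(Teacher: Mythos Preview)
Your proposal is correct and follows essentially the same route as the paper: identify the covariance $\Sigma=\tfrac12 Q^{-1}$ and mean $\mu=-\Sigma v$ from the quadratic form, bound the eigenvalues of $\Sigma$ via Lemma~\ref{matrixlmm}, bound $\|v\|\le C\|\theta\|$ by the fact that each edge lies in $O(1)$ plaquettes, deduce $\|\mu\|\le Cn^{d+2}\|\theta\|$, and conclude with \eqref{gaussmax}. Your write-up is in fact slightly more careful than the paper's (you invoke Lemma~\ref{matrixlmm} directly for general $E$ rather than the bounds \eqref{smalllarge} stated only for $E=E_n^0$, and you address the degenerate cases for \eqref{gaussmax}).
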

\begin{proof}
As noted before, $M_{E,\theta,n}$ and $M_{E,0,n}$ have the same quadratic component. This implies that the covariance matrix of the Gaussian measure $\tau_{E, \theta, n}$ is the same as that of $\tau_{E, 0, n}$. However, the mean vector may be different for the two measures. Fix some $\theta$ and let $\mu$ denote the mean vector of $\tau_{E,\theta, n}$. It is easy to see that the sum of squares of the coefficients in the linear component of $M_{E,\theta,n}$ is bounded above by $C_1\|\theta\|^2$, since each edge can belong to at most $C_2$ plaquettes. Thus, by the representation~\eqref{musigma2} and the upper bound from~\eqref{smalllarge}, 
\[
\|\mu\|\le C n^{d+2}\|\theta\|\,.
\] 
In particular, the absolute value of each component of $\mu$ is bounded by $C n^{d+2}\|\theta\|$. The result now follows easily by~\eqref{gaussmax}, using the upper bound from \eqref{smalllarge}.
\end{proof}
Theorem \ref{latticethm1} and Theorem \ref{latticethm2} help in proving the following theorem, which is one of the main steps in the proof of Theorem \ref{mainthm}. %, which is the main result of this section.
\begin{thm}\label{latticemain}
There exist positive constants $C_1$ and $C_2$, depending only on $d$, such that the following is true. Take any $n\ge m\ge 2$ such that $m\le \sqrt{n}$. Then 
\[
\tau_{n}(\{t\in T_0(B_n): |t(x,y)|\le C_1 m^{d+2} \textup{ for all } (x,y)\in E_n\}) \ge \exp\biggl(-\frac{C_2n^d\log n}{m}\biggr)\,.
\]
\end{thm}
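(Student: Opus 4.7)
The plan is to decompose $B_n$ into blocks of side $\sim m$ and use a skeleton-conditioning argument: fix the edge variables lying in a thin boundary layer that separates the blocks, and then apply Theorem~\ref{latticethm2} independently on each block. The target threshold $C_1 m^{d+2}$ matches exactly the bound Theorem~\ref{latticethm2} gives on a box of side $m$, so once the conditional measure is seen to factor across blocks as independent copies of lattice Maxwell theory on $B_m$, the proof reduces to paying the cost of conditioning the skeleton via Theorem~\ref{latticethm1}.

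I would partition $B_n$ into blocks $\bar B^{(\alpha)} := a + \{0,\dots,m-1\}^d$ for $a\in\{0,m,2m,\dots\}^d$ (with partial boundary blocks when $m \nmid n$, handled identically up to constants) and define the skeleton
\[
S := \bigl\{(x,y)\in E_n : \text{some coordinate of $x$ or $y$ lies in } \{0,m-1,m,2m-1,2m,\dots\}\bigr\}.
\]
This is a two-layer-thick boundary set with $|S|\le C_0 n^d/m$, and it automatically contains $E_n^0$ because every $E_n^0$-edge has trailing zero coordinates. The key combinatorial claim is that if a plaquette of $B_n'$ has at least one non-$S$ edge, then all four of its vertices lie in a single block $\bar B^{(\alpha)}$. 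This follows because both endpoints of a non-$S$ edge must have every coordinate in the strict interior $\{a_\ell+1,\dots,a_\ell+m-2\}$ of some block, and translating by $e_j$ or $e_k$ keeps us inside that block. Consequently, after conditioning on $t|_S$, the quadratic form $M_n$, viewed as a function of $t|_{E_n\setminus S}$, separates as a sum of single-block terms; under the natural identification $\bar B^{(\alpha)}\cong B_m$ the per-block conditional distribution is exactly $\tau_{E^{(\alpha)},\theta_\alpha,m}$, where $E^{(\alpha)}$ is the set of $B_m$-edges with a boundary-layer endpoint (which contains $E_m^0$) and $\theta_\alpha := t|_{E^{(\alpha)}}$.

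Next I would set $\eta := c\, m^{-(d-1)/2}$ for a constant $c \in (0,1/2]$ and apply Theorem~\ref{latticethm1} with $A := S\setminus E_n^0$ to obtain
\[
\tau_n\bigl(|t(x,y)|\le \eta \text{ for all } (x,y)\in A\bigr) \;\ge\; \exp\bigl(-C|A|(\log(1/\eta)+\log n)\bigr) \;\ge\; \exp(-C'\,n^d\log n/m),
\]
using $|A|\le |S|\le C_0 n^d/m$ and $\log(1/\eta)=O(\log n)$ (the hypothesis $m\le\sqrt{n}$ is a convenient sufficient condition). On this event each block's boundary data satisfies $\|\theta_\alpha\|^2 \le \eta^2|E^{(\alpha)}| = O(1)$, since $|E^{(\alpha)}|=O(m^{d-1})$. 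Then Theorem~\ref{latticethm2} applied to $\tau_{E^{(\alpha)},\theta_\alpha,m}$ yields conditional probability at least $1/2$ that $|t(x,y)|\le C_3 m^{d+2}$ on all edges of that block; by the factorization the blocks are conditionally independent, and with at most $K\le (n/m)^d$ blocks the joint per-block contribution is $2^{-K}\ge\exp(-C''(n/m)^d)$.

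Combining the two estimates yields the claimed bound, since $(n/m)^d/(n^d\log n/m) = 1/(m^{d-1}\log n) \le 1/(2\log 2)$ for all $d\ge 2$, $m\ge 2$, $n\ge 2$, so the $(n/m)^d$ contribution is absorbed into $C_2 n^d\log n/m$; and the skeleton bound $|t|\le\eta$ automatically implies $|t|\le C_1 m^{d+2}$ on $S$. The main obstacle, and the step that warrants the most care, is establishing the \emph{exact} identification of the per-block conditional measure with $\tau_{E^{(\alpha)},\theta_\alpha,m}$: this requires the decoupling claim above (no plaquette couples two blocks after conditioning) together with the verification that the plaquettes of $B_n'$ whose non-$S$ edges lie in a given block $\alpha$ correspond bijectively, under the identification $\bar B^{(\alpha)}\cong B_m$, to the full plaquette set $B_m'$ (noting that plaquettes with all edges in $E^{(\alpha)}$ contribute zero to $M_{E^{(\alpha)},0,m}$). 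Once this bookkeeping is in place, the two applications of the prior theorems combine to finish the proof.
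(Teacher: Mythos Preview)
Your strategy---tile $B_n$ by sub-boxes, condition on a skeleton $S$, apply Theorem~\ref{latticethm1} to $S$ and then Theorem~\ref{latticethm2} independently on each block---is exactly the paper's. The paper tiles with \emph{overlapping} translates of $B_m$ (sharing their $(d-1)$-dimensional faces) rather than disjoint blocks, and takes $\eta=n^{-d/2}$ rather than your $\eta$ of order $m^{-(d-1)/2}$, but these differences are cosmetic: both choices give $\|\theta_\alpha\|=O(1)$ and a skeleton of size $O(n^d/m)$.

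There is, however, one genuine slip. Your claim that $S\supseteq E_n^0$ is false: by its definition (take $j=d$), $E_n^0$ contains \emph{every} edge in direction $e_d$, and for $m\ge 4$ the direction-$d$ edge at $x=(1,\dots,1)$ lies in $E_n^0$ but not in your $S$. Consequently your boundary-layer set $E^{(\alpha)}$ does \emph{not} contain $E_m^0$ (the interior direction-$d$ edges are missing), so Theorem~\ref{latticethm2} cannot be invoked with $E=E^{(\alpha)}$ as you defined it.

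The repair is immediate. Since $t\in T_0(B_n)$ already forces $t\equiv 0$ on $E_n^0$, the edges that are frozen after you further condition on $t|_{S\setminus E_n^0}$ are those in $S\cup E_n^0$, not just $S$. Within a block $\alpha$, the correct per-block fixed set (in local coordinates) is therefore $\tilde E^{(\alpha)}:=(S\cup E_n^0)\cap(\text{block }\alpha)$. This set \emph{does} contain $E_m^0$: the direction-$d$ part of $E_m^0$ is covered because all direction-$d$ edges belong to $E_n^0$, and the direction-$j$ part of $E_m^0$ with $j<d$ has local $x_d=0$, hence lies on a face and is in the boundary layer. The extra frozen edges carry the value $0$, so your bound $\|\theta_\alpha\|^2\le \eta^2|E^{(\alpha)}\cap S|=O(1)$ is unaffected, and the rest of your argument goes through verbatim.
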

\begin{proof}
Let $r$ be largest integer such that $r(m-1)\le n-1$. Let $\mb$ be the collection of subsets of $B_n$ of the form 
\begin{equation}\label{btrans}
\{(x_1,\ldots, x_d): a_j (m-1)\le x_j \le (a_j+1)(m-1), \, j=1,\ldots,d\}
\end{equation}
where $a_1,\ldots, a_d\in \{0,1,\ldots,r-1\}$. In other words, each element of $\mb$ is a translate of the box $B_{m}$, and two elements can intersect only at a common boundary.  Let $A$ be the collection of all elements of $E_n$ that belong to the boundaries of these boxes, plus all edges that are not contained in any of the boxes. It is easy to see that there are $O(n^{d-1}m)$ edges that are not contained in any element of $\mb$ and $O(n^d m^{-1})$ edges that belong to the boundaries of elements of $\mb$. Using the assumption that $m^2\le n$, this implies  
\begin{equation}\label{abbds}
|A|\le \frac{Cn^d}{m} + C n^{d-1} m\le \frac{Cn^d}{m}\,.
\end{equation}
Also, we have
\begin{equation}\label{abbds2}
|\mb|\le \frac{Cn^d}{m^d}\,.
\end{equation}
Let $t$ be a random configuration drawn from the measure $\tau_{n}$. Let
\[
t_A := (t(x,y))_{(x,y)\in A}\,,
\]
and for each $B\in \mb$, let 
\[
t_B:= (t(x,y))_{(x,y) \in E(B)\backslash A}\,,
\]
where $E(B)$ is the set of positively oriented edges of $B$. 
Consider the conditional distribution of $t$ given its values in $A$. It is easy to see from the definition of $\tau_{n}$ that under this conditioning, the $t_B$'s become independent random configurations, and each of them follows some $\tau_{E, \theta, m}$ distribution, where $E$ is determined by the location of the box $B$ and $\theta$ is determined by the values of $t$ on $A\cap E(B)$. Therefore by Theorem \ref{latticethm2}, there is a constant $C$ depending only on $d$ such that the conditional probability given $t_A$ of the event 
\begin{equation*}%\label{txy1}
\{|t(x,y)|\le C m^{d+2}(1+\|t_A\|) \text{ for all } (x,y)\in E(B)\backslash A\}
\end{equation*}
is at least $1/2$. Consequently, the conditional probability that these events happen simultaneously for all $B\in \mb$ is at least $(1/2)^{|\mb|}$. On the other hand, by Theorem \ref{latticethm1}, the event 
\begin{equation*}%\label{txy2}
\{|t(x,y)|\le n^{-d/2} \text{ for all } (x,y)\in A\}
\end{equation*}
has probability at least $e^{-C|A|\log n}$. If this event happens, then $\|t_A\|\le C$. The assertion of the theorem follows by combining the two lower bounds obtained above and using \eqref{abbds} and \eqref{abbds2}. 
\end{proof}

%\vskip.5in

\section{The infinite volume limit of lattice Maxwell theory}\label{maxwelllim}
As in the previous section, $C, C_1, C_2,\ldots$ will denote positive constants that depend only on $d$, whose values may change from line to line. 
Let $l^0_{n}$ be the Lebesgue measure on $T_0(B_n)$. The partition function of lattice Maxwell theory on $B_n$ is defined as
\[
Z_M(B_n) := \int_{T_0(B_n)}e^{-\frac{1}{2}M_n(t)}\, dl^0_{n}(t)\,.
\]
The corresponding free energy per site is defined as
\[
F_M(B_n) := \frac{\log Z_M(B_n)}{n^d}\,.
\]
By the formula \eqref{gaussdensity} for the probability density of Gaussian measures, we see that
\begin{equation}\label{fmkn}
F_M(B_n) = K_{n,d} + \frac{|E_n^1|}{2n^d}\log (2\pi)\,,
\end{equation}
where $K_{n,d}$ is the constant defined in equation \eqref{kndeq} of Section \ref{results}. The following lemma gives an a priori bound on the magnitude of $F_M(B_n)$.
\begin{lmm}\label{fmbd0}
There is a constant $C$ depending only on $d$ such that for any $n$,
\[
|F_M(B_n)|\le C\log n\,.
\]
\end{lmm}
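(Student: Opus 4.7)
The plan is to estimate $F_M(B_n)$ directly from the identity \eqref{fmkn}, which expresses $F_M(B_n)$ as $K_{n,d}$ plus a term of the form $(|E_n^1|/2n^d)\log(2\pi)$. Since each vertex of $B_n$ is incident to at most $2d$ edges, we have $|E_n^1| \le |E_n| \le d\, n^d$, so the second term is bounded by an absolute constant depending only on $d$. Hence the whole problem reduces to controlling $|K_{n,d}| = |\log\det M_n^0|/(2n^d)$.

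To bound $\log\det M_n^0$, I would invoke Lemma \ref{matrixlmm} applied with $E = E_n^0$ (so that $T_E(B_n) = T_0(B_n)$ and $M_{E,0,n} = M_n^0$). This lemma gives, for every $t \in T_0(B_n)$,
\[
\frac{C_1}{n^{d+2}}\|t\|^2 \le M_n^0(t) \le C_2 \|t\|^2,
\]
which means that as a symmetric matrix on $T_0(B_n) \cong \rr^{E_n^1}$, $M_n^0$ has all eigenvalues lying in the interval $[C_1 n^{-(d+2)}, C_2]$. Since $\det M_n^0$ is the product of these $|E_n^1|$ eigenvalues, we immediately get the two-sided bound
\[
|E_n^1|\log(C_1 n^{-(d+2)}) \le \log\det M_n^0 \le |E_n^1|\log C_2.
\]

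Combining these inequalities with $|E_n^1|\le d\,n^d$ yields $|\log \det M_n^0| \le C |E_n^1| \log n \le C' n^d \log n$ for all $n\ge 2$, and therefore $|K_{n,d}| \le (C'/2)\log n$. Plugging back into \eqref{fmkn} and absorbing the bounded $(|E_n^1|/2n^d)\log(2\pi)$ term into the constant completes the proof. I do not anticipate any real obstacle here: the only substantive input is the discrete Poincar\'e-type estimate from Lemma \ref{matrixlmm}, and everything else is a mechanical eigenvalue computation.
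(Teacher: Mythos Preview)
Your proposal is correct and follows essentially the same approach as the paper: reduce to bounding $|\log\det M_n^0|$ via \eqref{fmkn}, then use the eigenvalue bounds from Lemma \ref{matrixlmm} to trap $\det M_n^0$ between $(C_1 n^{-(d+2)})^{|E_n^1|}$ and $C_2^{|E_n^1|}$, and conclude using $|E_n^1|\le d\,n^d$.
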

\begin{proof}
Recalling the formula \eqref{kndeq} for $K_{n,d}$, we need to only show that 
\[
|\log \det M_n^0|\le Cn^d \log n\,.
\]
By Lemma \ref{matrixlmm}, the smallest eigenvalue of $M_n^0$ is at least $C_1 n^{-(d+2)}$ and the largest eigenvalue is at most $C_2$. Therefore
\[
\frac{C_1}{n^{(d+2)n^d}} \le \det M_n^0 \le C_2^{n^d}\,.
\]
The proof is completed by taking logarithms on both sides. 
\end{proof}
The following theorem establishes the existence of the infinite volume limit of the free energy per site of lattice Maxwell theory. This is the main result of this section.
\begin{thm}\label{cdthm}
As $n\ra\infty$, $F_M(B_n)$ converges to a finite limit.
\end{thm}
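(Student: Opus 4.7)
The plan is to establish the Cauchy property of the sequence $\{F_M(B_n)\}$ via a multiscale box decomposition, using the auxiliary partition functions $Z_M'(B_n)$ (field additionally pinned to zero on $\partial B_n$) and $Z_{M,m}(B_n)$ (pinned on $\partial B_n$ and on the boundaries of a subdivision of $B_n$ into translates of $B_m$) as bridges. Both perturbations correspond to pinning the Gaussian field to zero on a larger set of edges, which enables exact factorization of the partition function over disjoint sub-boxes.

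The first step is Gaussian marginalization. Writing $M_n$ in block form by separating the pinned coordinates from the free ones and completing the square in the free block, one obtains the exact identity
\[
Z_M(B_n) = Z_M'(B_n)\cdot (2\pi)^{k/2}(\det \Sigma_{\partial})^{1/2},
\]
where $k$ is the number of boundary edges in $E_n^1$ and $\Sigma_{\partial}$ is the marginal covariance of the boundary values of $t$ under $\tau_n$. By Lemma \ref{matrixlmm} together with inequality \eqref{aabb}, the eigenvalues of $\Sigma_{\partial}$ lie in $[1/C_2,\,n^{d+2}/C_1]$, so $|\log\det \Sigma_{\partial}| = O(k\log n) = O(n^{d-1}\log n) = o(n^d)$, which yields $F_M(B_n) - F_M'(B_n) \to 0$. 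The same calculation applied to the larger pinned set $A = E_n^m \setminus E_n^0$ (of size $|A| = O(n^d/m)$) produces
\[
\bigl|F_M(B_n) - F_{M,m}(B_n)\bigr| = O\!\left(\frac{\log n}{m}\right).
\]

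The second step exploits exact decoupling. When $B_n$ is tiled by $(n/m)^d$ translates of $B_m$ as in the construction preceding Theorem \ref{latticemain}, the restriction of $M_n$ to fields vanishing on $E_n^m$ is a sum of copies of $M_m$ restricted to the sub-boxes, each with zero boundary data. Hence
\[
Z_{M,m}(B_n) = Z_M'(B_m)^{(n/m)^d}, \qquad F_{M,m}(B_n) = F_M'(B_m).
\]
Chaining with the first step,
\[
F_M(B_n) = F_M(B_m) + o_m(1) + O\!\left(\frac{\log n}{m}\right)
\]
whenever the subdivision fits. Setting $m = \lfloor n^{1/2}\rfloor$ (within the range $m \le \sqrt{n}$ required for the decomposition), the error $O((\log n)/\sqrt{n})$ tends to zero, so $F_M(B_n) - F_M(B_{\lfloor\sqrt{n}\rfloor}) = o(1)$.

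The main obstacle is upgrading this one-step contraction into an honest Cauchy statement, since for fixed $m$ the error $O((\log n)/m)$ does not go to zero as $n \to \infty$. I would handle this by iterating the contraction through $O(\log\log n)$ scales $n, n^{1/2}, n^{1/4}, \dots$; the accumulated error $\sum_{j\ge 0}(\log n^{1/2^j})/n^{1/2^{j+1}}$ is summable and dominated by its first term $O((\log n)/\sqrt{n})$, so $F_M(B_n) = F_M(B_{M(n)}) + o(1)$ for some bounded $M(n)$. The a priori bound $|F_M(B_n)| \le C\log n$ from Lemma \ref{fmbd0} confines the intermediate-scale values to a compact set. To match limits of two subsequences $\{F_M(B_{n_1^{(j)}})\}$ and $\{F_M(B_{n_2^{(j)}})\}$, one chooses a common intermediate scale $m$ depending on both $n_1^{(j)}, n_2^{(j)}$ and applies the decomposition estimate to each, obtaining $|F_M(B_{n_1^{(j)}}) - F_M(B_{n_2^{(j)}})| = O((\log \max(n_1^{(j)}, n_2^{(j)}))/m)$, which can be driven to zero by a suitable choice of $m$ in an intermediate range. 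This forces $\limsup F_M(B_n) = \liminf F_M(B_n)$, completing the proof.
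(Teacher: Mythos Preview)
Your approach to the comparison estimates is genuinely different from the paper's and is in some ways cleaner. The paper defines $Z_M'(B_n)$ and $Z_{M,m}(B_n)$ as integrals over the truncated domains $D_n$ and $S_{m,n}$ (where near-boundary values are merely bounded by $n^{-(d-1)/2}$ or $m^{-(d-1)/2}$), and then controls the truncation error probabilistically via Theorem~\ref{latticethm1}. You instead pin the extra edges exactly to zero and use the Gaussian marginal identity $Z_M(B_n)=Z_M'(B_n)\cdot(2\pi)^{k/2}(\det\Sigma_\partial)^{1/2}$ together with the eigenvalue bounds from Lemma~\ref{matrixlmm} and \eqref{aabb}. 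This is correct and gives the same $O((\log n)/m)$ estimate for $|F_M(B_n)-F_{M,m}(B_n)|$. One small correction: $Z_{M,m}(B_n)=Z_M'(B_m)^{|\mb|}$ with $|\mb|=((n-1)/(m-1))^d$, so $F_{M,m}(B_n)=(|\mb|m^d/n^d)F_M'(B_m)$, not exactly $F_M'(B_m)$; the discrepancy is $O((\log m)/m)$ by Lemma~\ref{fmbd0}, which is harmless.

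Where the argument breaks down is the endgame. First, the subdivision requires $m-1\mid n-1$, not merely $m\le\sqrt{n}$; setting $m=\lfloor\sqrt{n}\rfloor$ does not in general respect this. More seriously, your iteration claim is false: in the sum $\sum_{j\ge0}(\log n^{1/2^j})/n^{1/2^{j+1}}$ the ratio of consecutive terms is $\tfrac12\,n^{2^{-j-2}}\gg1$, so the terms \emph{increase} and the sum is dominated by its last few terms, which are constants of size roughly $(\log M)/\sqrt{M}$ at the stopping scale $M$. Hence you never obtain $F_M(B_n)=F_M(B_{M(n)})+o(1)$ for bounded $M(n)$. The ``common intermediate scale'' fix at the end does not work either: a single $m$ with $m-1$ dividing both $n_1-1$ and $n_2-1$ may be forced to be tiny, so $(\log n_i)/m$ need not be small.

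The paper's endgame avoids all of this. From the estimate $|F_M'(B_m)-F_M'(B_n)|\le C(\log n)/m$ (valid when $m-1\mid n-1$) one gets $|F_M'(B_{2^k+1})-F_M'(B_{2^{k+1}+1})|\le C(k+1)2^{-k}$, which is summable, so the dyadic subsequence converges. For general $l$ one takes $m=2^k+1$ with $2^k+1\le l\le2^{k+1}+1$ and sets $n=(m-1)(l-1)+1$, so that both $m-1$ and $l-1$ divide $n-1$; applying the estimate twice gives $|F_M'(B_l)-F_M'(B_m)|\le Ck/2^k\to0$. Your marginalization argument plugs in perfectly here in place of Lemmas~\ref{cdlmm1}--\ref{cdlmm2}; you just need to replace your iteration by this divisibility trick.
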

The proof of Theorem \ref{cdthm} requires two lemmas. First, take any $n\ge 3$. Let $E_n'$ denote the union of  $E_n^0$ and the set of boundary edges of $B_n$. The set $E_n'$ in dimension two is depicted in Figure \ref{enpfig}. 

\begin{figure}[t]
%\begin{pdfpic}
\begin{pspicture}(1,.5)(11,7.5)
\psset{xunit=1cm,yunit=1cm}
\psline[linestyle = dashed]{-}(3,1)(9,1)
\psline{-}(3,2)(9,2)
\psline{-}(3,3)(9,3)
\psline{-}(3,4)(9,4)
\psline{-}(3,5)(9,5)
\psline{-}(3,6)(9,6)
\psline[linestyle = dashed]{-}(3,7)(9,7)
\psline[linestyle = dashed]{-}(3,1)(3,7)
\psline[linestyle = dashed]{-}(4,1)(4,7)
\psline[linestyle = dashed]{-}(5,1)(5,7)
\psline[linestyle = dashed]{-}(6,1)(6,7)
\psline[linestyle = dashed]{-}(7,1)(7,7)
\psline[linestyle = dashed]{-}(8,1)(8,7)
\psline[linestyle = dashed]{-}(9,1)(9,7)
\end{pspicture}
%\end{pdfpic}
\caption{The dashed lines represent the edges belonging to $E_n'$, for $n=7$ and $d=2$. Note the slight difference with the edge set $E_n^0$ shown in Figure \ref{en0fig}.}
\label{enpfig}
\end{figure}
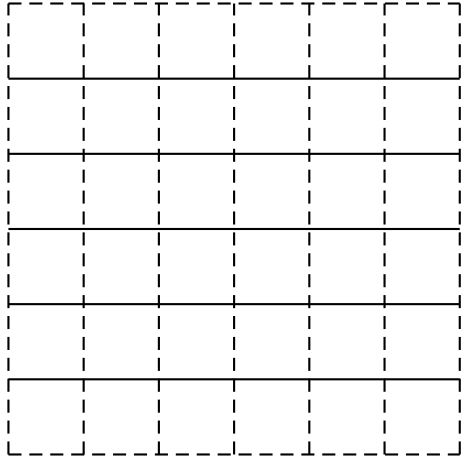

For simplicity, let us write $T'(B_n)$ instead of $T_{E_n'}(B_n)$. Let $l'_{n}$ be the Lebesgue measure on $T'(B_n)$. Recall the definition of the quadratic form $M_{E_n', 0,n}$ on $T'(B_n)$ from Section \ref{maxwellsec}. Let us denote it simply by $M_n$. Let $D_n$ be the set of all $t\in T'(B_n)$ such that $|t(x,y)|\le n^{-(d-1)/2}$ for any $(x,y)$ that belongs to a plaquette that touches the boundary of $B_n$. Define
\[
Z_M'(B_n) := \int_{D_n} e^{-\frac{1}{2}M_n(t)}\, dl'_n(t)
\]
and
\[
F_M'(B_n) := \frac{1}{n^d}\log Z_M'(B_n)\,.
\]
The following lemma is the first ingredient in the proof of Theorem \ref{cdthm}.
\begin{lmm}\label{cdlmm1}
There is a constant $C$ depending only on $d$ such that for any $n\ge 3$,
\[
|F_M'(B_n)-F_M(B_n)|\le \frac{C\log n}{n}\,.
\]
\end{lmm}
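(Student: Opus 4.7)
The plan is to introduce the intermediate quantity
\[
Z_M''(B_n) := \int_{T'(B_n)} e^{-\frac{1}{2}M_n(t)}\,dl'_n(t)
\]
(the unrestricted Gaussian integral on $T'(B_n)$) and to bound $|\log Z_M(B_n) - \log Z_M''(B_n)|$ and $|\log Z_M''(B_n) - \log Z_M'(B_n)|$ separately by $O(n^{d-1}\log n)$; the triangle inequality and division by $n^d$ then yield the claim.

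For the first comparison, since $E_n^0\subseteq E_n'$, the space $T'(B_n)$ sits inside $T_0(B_n)$ as the coordinate subspace on which $t$ vanishes on $E_B := E_n'\setminus E_n^0$, and so the matrix of $M_n$ on $T'(B_n)$ is a principal submatrix $M_n'$ of the matrix $M_n^0$ of $M_n$ on $T_0(B_n)$. Evaluating both Gaussian integrals and applying the Schur complement identity gives
\[
\log Z_M(B_n) - \log Z_M''(B_n) = \tfrac{|E_B|}{2}\log(2\pi) - \tfrac{1}{2}\log\det S,
\]
where $S$ is the $|E_B|\times|E_B|$ matrix whose inverse is the $E_B$-block of $(M_n^0)^{-1}$. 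Lemma \ref{matrixlmm} places the eigenvalues of $M_n^0$ in $[C_1 n^{-(d+2)}, C_2]$, hence those of $(M_n^0)^{-1}$ in $[1/C_2, n^{d+2}/C_1]$; Cauchy interlacing for principal submatrices then confines the eigenvalues of the $E_B$-block of $(M_n^0)^{-1}$, and thus of $S$, to windows of the same form. Consequently $|\log\det S|\le C|E_B|\log n$, and since $|E_B| = O(n^{d-1})$ this gives $|\log Z_M(B_n) - \log Z_M''(B_n)| = O(n^{d-1}\log n)$.

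For the second comparison, let $\tau_{E_n',0,n}$ denote the Gaussian probability measure on $T'(B_n)$ with density proportional to $e^{-M_n(t)/2}$, so that $Z_M'(B_n) = Z_M''(B_n)\cdot\tau_{E_n',0,n}(D_n)$. The set $A$ of edges in $E_n\setminus E_n'$ lying in a plaquette that touches the boundary of $B_n$ has size $O(n^{d-1})$, and $D_n$ is precisely the event $\{|t(x,y)|\le n^{-(d-1)/2}$ for all $(x,y)\in A\}$. Theorem \ref{latticethm1}, whose proof invokes only the eigenvalue window of Lemma \ref{matrixlmm} and therefore adapts verbatim with $\tau_n$ replaced by $\tau_{E_n',0,n}$, applied with $\eta = n^{-(d-1)/2}$ yields
\[
\tau_{E_n',0,n}(D_n) \ge \exp\bigl(-C|A|(\log n^{(d-1)/2} + \log n)\bigr) = \exp(-Cn^{d-1}\log n),
\]
and combined with the trivial bound $Z_M'(B_n)\le Z_M''(B_n)$ this gives $|\log Z_M''(B_n) - \log Z_M'(B_n)|\le Cn^{d-1}\log n$.

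I expect the main obstacle to be the determinant comparison in the first step: naive bounds on $\det M_n^0$ or $\det M_n'$ individually give only $O(n^d\log n)$, too weak by a factor of $n$, so the Schur complement identity together with eigenvalue interlacing is essential to isolate the purely $|E_B|$-dimensional correction that survives.
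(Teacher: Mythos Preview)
Your argument is correct and constitutes a genuinely different route from the paper's.

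The paper never introduces your intermediate quantity $Z_M''(B_n)$. Instead it works entirely inside $T_0(B_n)$: it first uses Theorem~\ref{latticethm1} (for $\tau_n$ itself, with no adaptation needed) to restrict $Z_M(B_n)$ to the rectangular set $A_n\subseteq T_0(B_n)$ where every edge in a boundary-touching plaquette satisfies $|t(x,y)|\le n^{-(d-1)/2}$; this costs a factor $e^{Cn^{d-1}\log n}$. On $A_n$ it then replaces $M_n(t)$ by $M_n(t')$, where $t'$ zeros out the values on $E_n'$, observing that the two actions differ by $O(1)$ because only $O(n^{d-1})$ plaquettes are affected and each contributes $O(n^{-(d-1)})$. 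Since $M_n(t')$ no longer depends on the coordinates in $E_n'\setminus E_n^0$, those coordinates integrate out over their range $[-n^{-(d-1)/2},\,n^{-(d-1)/2}]$, producing the factor $(2n^{-(d-1)/2})^{|E_n'\setminus E_n^0|}$ and leaving exactly the integral defining $Z_M'(B_n)$.

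The trade-off: the paper's approach is more elementary---no Schur complements, no interlacing, and Theorem~\ref{latticethm1} is used only in its stated form for $\tau_n$---at the cost of the slightly delicate bookkeeping that the action perturbation on $A_n$ is $O(1)$ rather than $O(n^{d-1})$. Your approach cleanly separates the two effects (change of base space, then restriction to $D_n$) and makes transparent that the determinant discrepancy is governed by an $|E_B|\times|E_B|$ matrix whose spectrum inherits the Lemma~\ref{matrixlmm} window; the price is invoking a bit of linear algebra and checking that the proof of Theorem~\ref{latticethm1} transfers to $\tau_{E_n',0,n}$ (it does, since Lemma~\ref{matrixlmm} is stated for arbitrary $E\supseteq E_n^0$).
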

\begin{proof}
Let $A_n$ be the subset of $T_0(B_n)$ consisting of all $t$ such that $|t(x,y)|\le n^{-(d-1)/2}$ for any $(x,y)$ that belongs to a plaquette that touches the boundary of $B_n$. By Theorem \ref{latticethm1},
\begin{align}\label{zmbn1}
e^{-Cn^{d-1}\log n}Z_M(B_n)\le \int_{A_n} e^{-\frac{1}{2}M_n(t)}\, dl^0_{n}(t) \le Z_M(B_n)\,.
\end{align}
Take any $t\in T_0(B_n)$. Let $t'$ be the configuration obtained by changing the value of $t$ everywhere on $E'_n$ to zero. Define
\[
M_{n}'(t) := M_n(t')\,.
\]
Note that by the defining property of $A_n$, it is easy to see that for any $t\in A_n$,
\[
|M_n(t) - M_{n}'(t)|\le C\,.
\]
Thus, 
\begin{align}\label{zmbn2}
e^{-C}\int_{A_n} e^{-\frac{1}{2}M_{n}'(t)}\, dl^0_{n}(t)\le \int_{A_n} e^{-\frac{1}{2}M_n(t)}\, dl^0_{n}(t)\le e^C\int_{A_n} e^{-\frac{1}{2}M_{n}'(t)} \,dl^0_{n}(t)\,.
\end{align}
Now note that by the definition of $M_{n}'$, $M_{n}'(t)$ has no dependence on the values of $t$ at the edges that belong to $E_n'$. Therefore by the product nature of $l^0_{n}$ and the rectangular nature of $A_n$, we can integrate out the edges in $E_n' \backslash E_n^0$ and get
\begin{align}\label{zmbn3}
\int_{A_n} e^{-\frac{1}{2}M_{n}'(t)}\, dl^0_{n}(t) = (2n^{-(d-1)/2})^{|E_n'\backslash E_n^0|}\int_{D_n} e^{-\frac{1}{2}M_{n}(t)} \,dl'_{n}(t)\,.
\end{align}
The proof is now easily completed by combining \eqref{zmbn1}, \eqref{zmbn2} and \eqref{zmbn3}, and the fact that $|E_n'\backslash E_n^0| = O(n^{d-1})$.
\end{proof}
Next, take $3\le m\le n$ such that $n-1$ is a multiple of $m-1$. Let $\mb$ be as in the the proof of Theorem~\ref{latticemain}. Each element of $\mb$ is a translate of $B_{m}$. Therefore, the set of edges of an element of $\mb$ contains a translate of $E_{m}'$, where $E_m'$ is the union of $E_m^0$ and the boundary edges of $B_m$, as defined above.  Let $E_n^{m}$ be the union of all these translates. The set $E_n^m$ in dimension two is depicted in Figure \ref{emnfig}. 

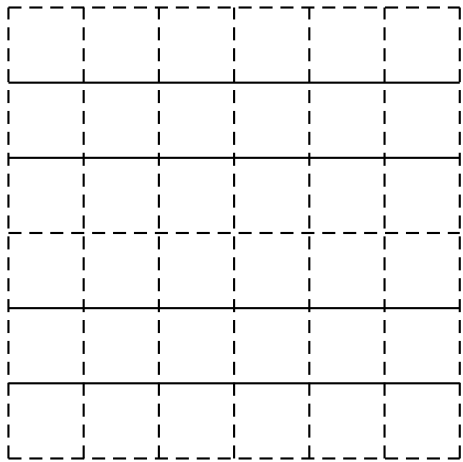
\begin{figure}[t]
%\begin{pdfpic}
\begin{pspicture}(1,.5)(11,7.5)
\psset{xunit=1cm,yunit=1cm}
\psline[linestyle = dashed]{-}(3,1)(9,1)
\psline{-}(3,2)(9,2)
\psline{-}(3,3)(9,3)
\psline[linestyle = dashed]{-}(3,4)(9,4)
\psline{-}(3,5)(9,5)
\psline{-}(3,6)(9,6)
\psline[linestyle = dashed]{-}(3,7)(9,7)
\psline[linestyle = dashed]{-}(3,1)(3,7)
\psline[linestyle = dashed]{-}(4,1)(4,7)
\psline[linestyle = dashed]{-}(5,1)(5,7)
\psline[linestyle = dashed]{-}(6,1)(6,7)
\psline[linestyle = dashed]{-}(7,1)(7,7)
\psline[linestyle = dashed]{-}(8,1)(8,7)
\psline[linestyle = dashed]{-}(9,1)(9,7)
\end{pspicture}
%\end{pdfpic}
\caption{The dashed lines represent the edges belonging to $E_n^m$, for $n=7$, $m= 4$ and $d=2$.}
\label{emnfig}
\end{figure}

For simplicity, let us write $T_{m}(B_n)$ instead of $T_{E_n^{m}}(B_n)$. Let $l^{m}_{n}$ be the Lebesgue measure on $T_{m}(B_n)$. Recall the definition of the quadratic form $M_{E_n^m, 0,n}$ on $T_m(B_n)$ from Section \ref{maxwellsec}. As before, let us denote it simply by $M_n$. Let $S_{m,n}$ be set of all $t\in T_m(B_n)$ such that $|t(x,y)|\le m^{-(d-1)/2}$ for any $(x,y)$ that belongs to a plaquette that touches the boundary of any element of $\mb$. Define
\[
Z_{M,m}(B_n) := \int_{S_{m,n}} e^{-\frac{1}{2}M_{n}(t)}\, dl^m_n(t)
\]
and 
\[
F_{M,m}(B_n) := \frac{1}{n^d} \log Z_{M,m}(B_n)\,.
\]
The next lemma is the second component of the proof of Theorem \ref{cdthm}.
\begin{lmm}\label{cdlmm2}
For any $3\le m\le n$ such that $n-1$ is a multiple of $m-1$, 
\[
|F_{M,m}(B_n)-F_M(B_n)|\le \frac{C\log n}{m}\,,
\]
where $C$ depends only on $d$. 
\end{lmm}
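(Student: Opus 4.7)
The plan is to mimic the proof of Lemma \ref{cdlmm1}. I would introduce a restricted integral over $T_0(B_n)$ that is close to $Z_M(B_n)$ by Theorem \ref{latticethm1}, swap the action $M_n$ for a modified $M_n'$ that is constant along the coordinates indexed by $E_n^m \setminus E_n^0$, and then use Fubini to factor the modified integral into a volume factor times $Z_{M,m}(B_n)$.

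Step 1. Define $A_{m,n} \subseteq T_0(B_n)$ as the set of $t$ with $|t(x,y)| \le m^{-(d-1)/2}$ for every $(x,y)$ belonging to a plaquette that touches the boundary of some $B \in \mb$. A direct count gives $O(n^d/m)$ constrained edges: each of the $O((n/m)^d)$ small boxes contributes $O(m^{d-1})$ such plaquettes, hence $O(m^{d-1})$ constrained edges. Applying Theorem \ref{latticethm1} with $\eta = m^{-(d-1)/2}$ (taken to be $\le 1/2$, which holds once $m$ is larger than a constant depending only on $d$; for smaller $m$ the claim is immediate from Lemma \ref{fmbd0} applied to $F_M$ and its analog for $F_{M,m}$) gives
$$\tau_n(A_{m,n}) \ge \exp\bigl(-C(n^d/m)(\log m + \log n)\bigr) \ge \exp(-Cn^d \log n / m),$$
so $\int_{A_{m,n}} e^{-M_n(t)/2}\, dl^0_n(t)$ differs from $Z_M(B_n)$ by at most a factor $e^{Cn^d \log n / m}$.

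Step 2. Define $M_n'(t) := M_n(t')$ where $t'$ agrees with $t$ except that $t'(x,y) = 0$ for $(x,y) \in E_n^m \setminus E_n^0$. The combinatorial fact I would verify is that every edge in $E_n^m \setminus E_n^0$ lies on the boundary of some $B \in \mb$: the direction-$e_d$ axial edges from the various $E_m^0$-translates are already contained in $E_n^0$ and therefore drop out of the difference, while the remaining $E_m^0$-translate edges (those in direction $e_j$ for $j<d$) lie on faces $x_k = $ const of their small box for every $k > j$, and the small-box boundary edges trivially lie on a face. Hence every plaquette whose contribution is altered by $t \mapsto t'$ is a boundary-touching plaquette of $\mb$, so on $A_{m,n}$ all four of its edges have magnitude $\le m^{-(d-1)/2}$, producing a change of $O(m^{-(d-1)})$ per plaquette. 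Since there are $O(n^d/m)$ such plaquettes, $|M_n(t) - M_n'(t)| \le Cn^d/m^d \le Cn^d \log n / m$ uniformly on $A_{m,n}$.

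Step 3. Since $M_n'(t)$ depends on $t$ only through its $T_m(B_n)$-coordinates, and the $A_{m,n}$ constraint decouples as $t_1 \in [-m^{-(d-1)/2}, m^{-(d-1)/2}]^{|E_n^m \setminus E_n^0|}$ (because every such edge is a boundary-plaquette edge) crossed with $t_2$ satisfying exactly the defining condition of $S_{m,n}$, Fubini gives
$$\int_{A_{m,n}} e^{-M_n'(t)/2}\, dl^0_n(t) = (2m^{-(d-1)/2})^{|E_n^m \setminus E_n^0|}\, Z_{M,m}(B_n),$$
and $|E_n^m \setminus E_n^0| = O(n^d/m)$ makes the log of the prefactor $O(n^d \log m / m) \le O(n^d \log n / m)$. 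Chaining the three estimates yields $|\log Z_M(B_n) - \log Z_{M,m}(B_n)| = O(n^d \log n / m)$, and dividing by $n^d$ gives the stated bound. The main obstacle, in my view, is the edge-and-plaquette bookkeeping: once the inclusion $E_n^m \setminus E_n^0 \subseteq \{\text{edges of boundary-touching plaquettes of } \mb\}$ is established, the three pieces interlock in exactly the same way as in the proof of Lemma \ref{cdlmm1}.
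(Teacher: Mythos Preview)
Your proposal is correct and follows essentially the same route as the paper's proof: the paper defines the same restricted set (called $R_{m,n}$), invokes Theorem~\ref{latticethm1} to compare $\int_{R_{m,n}}$ with $Z_M(B_n)$, defines the same modified action $M_{n,m}(t):=M_n(t^m)$, bounds $|M_n-M_{n,m}|\le Cn^d/m^d$ on $R_{m,n}$, and then integrates out the $E_n^m\setminus E_n^0$ coordinates to produce the factor $(2m^{-(d-1)/2})^{|E_n^m\setminus E_n^0|}$ times $Z_{M,m}(B_n)$. The only differences are cosmetic: you spell out the combinatorial inclusion $E_n^m\setminus E_n^0\subseteq\{\text{boundary edges of }\mb\}$ and the small-$m$ caveat explicitly, while the paper absorbs both into ``it is easy to see'' and a global choice of constants.
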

\begin{proof}
Let $R_{m,n}$ be the subset of $T_0(B_n)$ consisting of all $t$ such that $|t(x,y)|\le m^{-(d-1)/2}$ for any $(x,y)$ that belongs to a plaquette that touches the boundary of any element of $\mb$ to which $(x,y)$ belongs. By Theorem \ref{latticethm1},
\begin{align}\label{zmmbn1}
e^{-Cn^dm^{-1}\log n}Z_M(B_n)\le \int_{R_{m,n}} e^{-\frac{1}{2}M_n(t)} \,dl^0_{n}(t) \le Z_M(B_n)\,.
\end{align}
Take any $t\in T_0(B_n)$. Let $t^m$ be the configuration obtained by changing the value of $t$ everywhere on $E^m_n$ to zero. Define
\[
M_{n,m}(t) := M_n(t^m)\,.
\]
It is easy to see that for any $t\in R_{m,n}$, 
\[
|M_n(t) - M_{n,m}(t)|\le \frac{Cn^d}{m^{d}}\,.
\]
Thus, 
\begin{align}
e^{-Cn^dm^{-d}}\int_{R_{m,n}} e^{-\frac{1}{2}M_{n,m}(t)}\, dl^0_{n}(t)&\le \int_{R_{m,n}} e^{-\frac{1}{2}M_n(t)} \,dl^0_{n}(t)\nonumber \\
&\le e^{Cn^d m^{-d}}\int_{R_{m,n}} e^{-\frac{1}{2}M_{m,n}(t)}\, dl^0_{n}(t)\,. \label{zmmbn2}
\end{align}
Note that by the definition of $M_{n,m}$, $M_{n,m}(t)$ has no dependence on the values of $t$ at the edges that belong to $E_n^m$. Therefore by the product nature of $l^0_{n}$ and the rectangular nature of $R_{m,n}$, we can integrate out the edges in $E_n^m \backslash E_n^0$ and get
\begin{align}\label{zmmbn3}
\int_{R_{m,n}} e^{-\frac{1}{2}M_{n,m}(t)} \,dl^0_{n}(t) = (2m^{-(d-1)/2})^{|E_n^m\backslash E_n^0|}\int_{S_{m,n}} e^{-\frac{1}{2}M_{n}(t)} \,dl^m_{n}(t)\,.
\end{align}
The proof is now easily completed by combining \eqref{zmmbn1}, \eqref{zmmbn2} and \eqref{zmmbn3}, and the fact that $|E^m_n\backslash E_n^0| = O(n^d/m)$. 
\end{proof}
We are now ready to prove Theorem \ref{cdthm}.
\begin{proof}[Proof of Theorem \ref{cdthm}]
Take any $3\le m\le n$ such that $n-1$ is a multiple of $m-1$. It is easy to see that $Z_{M,m}(B_n)$ breaks up as a product of $|\mb|$ integrals, and each element of the product equals $Z_{M}'(B_m)$. In other words,
\[
Z_{M,m}(B_n) = (Z_{M}'(B_m))^{|\mb|}\,.
\]
Therefore,
\[
F_{M,m}(B_n)= \frac{|\mb|m^d}{n^d}F_M'(B_m)\,.
\]
Since
\[
|\mb| = \frac{n^d}{m^d} + O\biggl(\frac{n^d}{m^{d+1}}\biggr)\,,
\]
the identity from the previous display, Lemma \ref{fmbd0} and Lemma \ref{cdlmm1} imply that
\begin{align*}
|F_{M,m}(B_n)-F_M'(B_m)|&\le \frac{C}{m}|F_M'(B_m)|\\
&\le \frac{C}{m}\biggl(|F_M(B_m)|+\frac{C\log m}{m}\biggr)\\
&\le \frac{C\log m}{m}\,.
\end{align*}
Therefore by Lemma \ref{cdlmm2},
\begin{align*}%\label{fmbn2}
|F_{M}'(B_m) - F_M(B_n)|\le \frac{C\log n}{m}\,.
\end{align*}
By the above inequality and Lemma \ref{cdlmm1}, we get
\begin{align}\label{fmbn3}
|F_{M}'(B_m)-F_{M}'(B_n)|\le \frac{C\log n}{m}\,.
\end{align}
It follows from this inequality that the sequence $\{F_{M}'(B_{2^k+1})\}_{k\ge 1}$ is Cauchy and hence convergent. Now take any $l$ and suppose that $2^k+1\le l\le 2^{k+1}+1$. Let $m := 2^k+1$ and let
\[
n := (m-1)(l-1)+1\,.
\] 
Then both $m-1$ and $l-1$ divide $n-1$. Thus, by \eqref{fmbn3}, 
\begin{align*}
|F_{M}'(B_l)-F_{M}'(B_m)|&\le |F_{M}'(B_m)-F_{M}'(B_n)| + |F_{M}'(B_l)-F_{M}'(B_n)|\\
&\le \frac{C\log n}{m} + \frac{C\log n}{l}\le \frac{Ck}{2^k}\,.
\end{align*}
Thus, $F_{M}'(B_n)$ converges as $n\ra\infty$. By Lemma \ref{cdlmm1}, this implies that $F_M(B_n)$ is convergent.
\end{proof}

%\vskip.5in

\section{From Wilson action to Maxwell action}\label{wmsec}
The goal of this section is to complete the program started in Section \ref{liesec}, by giving a concrete prescription, with error bounds, for replacing the Wilson action by the action of lattice Maxwell theory. This will help in replacing the integrals over $U(N)$ with integrals over $H(N)$. Recall that $H(N)$ is the set of $N\times N$ Hermitian matrices. Although the Lie algebra of $U(N)$ is the set of skew-Hermitian matrices, it suffices to work with Hermitian matrices since a skew-Hermitian matrix is just a Hermitian matrix multiplied by $\I$. 

Suppose that to each $(x,y)\in E_n$ we attach a Hermitian matrix $H(x,y)\in H(N)$, and let $H(y,x) := -H(x,y)$. Let $H(B_n)$ be the set of all such assignments. Given a configuration $H\in H(B_n)$ and a plaquette $(x,j,k)\in B_n'$, define
\begin{align*}
H(x,j,k) &:= H(x,x+ e_j) + H(x+ e_j, x+ e_j +  e_k)\\
&\qquad  + H(x+ e_j+ e_k, x+ e_j) + H(x+ e_j, x)\,.
\end{align*}
The Maxwell action for a configuration $H\in H(B_n)$ is defined as
\[
M_n(H) := \sum_{(x,j,k)\in B_n'} \|H(x,j,k)\|^2\,.
\]
Let $H_0(B_n)$ be the set of all $H\in H(B_n)$ such that $H(x,y)=0$ for all $(x,y)\in E_n^0$. Now recall the definition of Lebesgue measure on $H(N)$ that we adopted in Section \ref{liesec}. This naturally leads to a definition of a product Lebesgue measure $\lambda^0_{n}$ on $H_0(B_n)$. Define
\[
Z_H(B_n) := \int_{H_0(B_n)} e^{-\frac{1}{2}M_n(H)} \,d\lambda^0_{n}(H)
\]
and
\[
F_H(B_n) := \frac{\log Z_H(B_n)}{n^d}\,.
\]
From the way Lebesgue measure on $H(N)$ was defined in Section \ref{liesec}, and the way that $M_n(H)$ is defined here, it easy to see that the integral defining $Z_H(B_n)$ can be written as a product of $N^2$ integrals, giving
\begin{equation*}
Z_H(B_n) = (Z_M(B_n))^{N^2}\,,
\end{equation*}
where $Z_M(B_n)$ is the partition function of lattice Maxwell theory defined in Section \ref{maxwelllim}. Therefore
\begin{align}\label{zhzm}
F_H(B_n) = N^2 F_M(B_n)\,.
\end{align}
The following lemma gives a quadratic approximation of $e^{\I H}$ for a Hermitian matrix $H$. 
\begin{lmm}\label{huerr}
If $H$ is a Hermitian matrix, then
\[
\biggl\|e^{\I H} - I - \I H + \frac{H^2}{2}\biggr\|\le \frac{\|H\|^3}{6}\,.
\]
\end{lmm}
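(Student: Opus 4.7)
The plan is to apply Taylor's theorem with integral remainder to the matrix-valued function $f(t) := e^{\I t H}$ and then control the remainder using unitary invariance and submultiplicativity of the Hilbert--Schmidt norm.

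First I would observe that, since the power series defining $e^{\I t H}$ converges uniformly in $t$ on compact intervals, it may be differentiated term by term, giving $f'(t) = \I H e^{\I t H}$, $f''(t) = -H^2 e^{\I t H}$, and $f'''(t) = -\I H^3 e^{\I t H}$. Applying the scalar integral-remainder form of Taylor's theorem entry by entry then produces
\[
e^{\I H} - I - \I H + \frac{H^2}{2} = -\I \int_0^1 \frac{(1-s)^2}{2}\, H^3 e^{\I s H}\, ds.
\]
Taking Hilbert--Schmidt norms and moving the norm inside the integral via the triangle inequality, I would reduce the lemma to showing that $\|H^3 e^{\I s H}\| \le \|H\|^3$ for every $s\in[0,1]$, since $\int_0^1 (1-s)^2/2\, ds = 1/6$ then produces the stated bound.

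To establish this pointwise bound, I would exploit the Hermiticity of $H$: it implies that $e^{\I s H}$ is unitary, so by Lemma \ref{uhslmm} (applied on the right --- the right-multiplication version follows immediately from $\|MU\|^2 = \tr(U^*M^*MU) = \tr(M^*M)$), one has $\|H^3 e^{\I s H}\| = \|H^3\|$. The submultiplicativity of the Hilbert--Schmidt norm recorded at the start of Section \ref{smallsec} then gives $\|H^3\| \le \|H\|^3$, completing the argument. I do not foresee any substantive obstacle: the only point requiring a little care is the rigorous justification of the matrix-valued Taylor remainder formula, but this reduces immediately to the scalar result applied entry by entry, or equivalently to termwise integration of the uniformly convergent exponential series.
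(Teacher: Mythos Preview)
Your proof is correct and essentially identical to the paper's own argument: both apply the integral-remainder form of Taylor's theorem to $t\mapsto e^{\I t H}$, bound $\|H^3 e^{\I s H}\|$ by $\|H\|^3$ via unitary invariance (Lemma~\ref{uhslmm}) and submultiplicativity, and integrate $(1-s)^2/2$ over $[0,1]$. The only cosmetic difference is that the paper defines $f(t) := e^{\I t H} - I - \I t H + t^2 H^2/2$ so that $f(0)=f'(0)=f''(0)=0$, but the content is the same.
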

\begin{proof}
Define a matrix valued function 
\[
f(t) := e^{\I t H} - I - \I t H + \frac{t^2H^2}{2}\,.
\]
Then $f(0)=f'(0)=f''(0)=0$, and  by Lemma \ref{uhslmm},
\[
\|f'''(t)\| = \|H^3e^{\I tH}\|= \|H^3\|\le \|H\|^3\,.
\]
Thus,
\begin{align*}
\biggl\|e^{\I H} - I - \I H + \frac{H^2}{2}\biggr\| &= \|f(1)\|\\
&= \biggl\|\frac{1}{2}\int_0^1(1-t)^2f'''(t)\,dt\biggr\|\\
&\le \frac{1}{2}\int_0^1(1-t)^2\|f'''(t)\|\,dt\le \frac{\|H\|^3}{6}\,.
\end{align*}
This completes the proof of the lemma.
\end{proof}
For any $H\in H(B_n)$, let $U = e^{\I H}\in U(B_n)$  be the configuration that is defined as
\[
U(x,y) := e^{\I H(x,y)}\,.
\]
The next lemma extends the quadratic approximation of $e^{\I H}$ to configurations $H\in H(B_n)$. 
\begin{lmm}\label{utohlmm}
Take any $H\in H(B_n)$ and let $U := e^{\I H}$. Suppose that $\eta$ is a number such that for all $(x,y)\in E_n$, $\|H(x,y)\|\le \eta \le 1$. Then for any $(x,j,k)\in B_n'$, 
\[
\biggl|\phi(U(x,j,k)) - \frac{1}{2}\|H(x,j,k)\|^2\biggr|\le C\eta^3\,,
\]
where $C$ depends only on $N$. 
\end{lmm}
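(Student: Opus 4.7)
The plan is to expand each of the four edge matrices appearing in $U(x,j,k)$ to quadratic order via Lemma \ref{huerr}, multiply them out while absorbing all cubic and higher remainders into an $O(\eta^3)$ error, and then take $\Re\tr(I-\,\cdot\,)$ and simplify using trace identities.

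Concretely, I would write the four factors as $U_i=e^{\I H_i}$ for $i=1,2,3,4$, where $H_1:=H(x,x+e_j)$, $H_2:=H(x+e_j,x+e_j+e_k)$, $H_3:=H(x+e_j+e_k,x+e_k)$, $H_4:=H(x+e_k,x)$, so that $S:=H_1+H_2+H_3+H_4=H(x,j,k)$ and $\|H_i\|\le\eta$. By Lemma \ref{huerr}, $U_i=V_i+R_i$ with $V_i:=I+\I H_i-\tfrac{1}{2}H_i^2$ and $\|R_i\|\le \eta^3/6$. Since the Hilbert--Schmidt norm is submultiplicative, $\|I\|=\sqrt{N}$, and $\|V_i\|\le \sqrt{N}+\eta+\eta^2/2\le \sqrt{N}+2$, every term in the expansion of $U_1U_2U_3U_4$ that either contains some $R_i$ or has combined $H$-degree at least $3$ is bounded in Hilbert--Schmidt norm by $C\eta^3$ for a constant $C$ depending only on $N$. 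Collecting the surviving terms yields
\[
U(x,j,k) = I + \I S - \tfrac{1}{2}\sum_{i=1}^{4} H_i^2 - \sum_{1\le i<j\le 4} H_iH_j + E, \qquad \|E\|\le C\eta^3.
\]

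Next I would take $\Re\tr(I-\,\cdot\,)$ and simplify. Since $S$ is Hermitian, $\tr(S)\in\rr$, so $\Re\tr(-\I S)=0$. Each $H_i^2$ is Hermitian with $\tr(H_i^2)=\|H_i\|^2\in\rr$. For the cross terms, $(H_iH_j)^*=H_jH_i$, hence by cyclicity $\overline{\tr(H_iH_j)}=\tr(H_jH_i)=\tr(H_iH_j)\in\rr$. The error contributes $|\Re\tr(-E)|\le \sqrt{N}\,\|E\|\le C\eta^3$. Therefore
\[
\phi(U(x,j,k)) = \tfrac{1}{2}\sum_{i=1}^{4}\tr(H_i^2) + \sum_{1\le i<j\le 4}\tr(H_iH_j) + O(\eta^3).
\]
The claim follows by matching this against $\tfrac{1}{2}\|S\|^2 = \tfrac{1}{2}\tr(S^2) = \tfrac{1}{2}\sum_i\tr(H_i^2)+\tfrac{1}{2}\sum_{i\ne j}\tr(H_iH_j)$, using cyclicity once more to note $\tr(H_jH_i)=\tr(H_iH_j)$, so the off-diagonal sums agree.

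The one delicate point is non-commutativity: at the matrix level $\sum_i H_i^2+2\sum_{i<j}H_iH_j$ does not equal $S^2$ in general, and the identification only goes through after passing to traces, where Hermiticity and cyclic invariance symmetrize the cross terms. Apart from this observation, the argument is a bookkeeping exercise — tracking which products in the expansion of $V_1V_2V_3V_4$ are genuinely quadratic and bounding everything else by $C(N)\eta^3$ via submultiplicativity of the Hilbert--Schmidt norm.
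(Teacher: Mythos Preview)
Your proposal is correct and follows essentially the same approach as the paper: expand each factor to quadratic order via Lemma~\ref{huerr}, bound all cubic-and-higher contributions by $C\eta^3$, and then use Hermiticity and cyclicity of trace to identify the surviving quadratic part with $\tfrac{1}{2}\|H(x,j,k)\|^2$. The only cosmetic difference is that the paper packages the quadratic approximation into an intermediate matrix $R(x,j,k)$ and compares $\phi(U(x,j,k))$ with $\phi(R(x,j,k))$ via the identity $\phi(U)=\tfrac{1}{2}\|I-U\|^2$ (Lemma~\ref{phihslmm}), whereas you work directly with $\Re\tr(I-\,\cdot\,)$ applied to the expanded product; the substance is the same.
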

\begin{proof}
For each $(x,y)\in E_n$, let
\[
R(x,y) := I + \I H(x,y) - \frac{H(x,y)^2}{2}
\]
and let
\begin{align*}
R(y,x) &:= I + \I H(y,x) - \frac{H(y,x)^2}{2}\\
&= I- \I H(x,y) - \frac{H(x,y)^2}{2}\,.
\end{align*}
For $(x,j,k)\in B_n'$, let
\[
R(x,j,k) := R(x,x+ e_j)R(x+ e_j, x+ e_j +  e_k) R(x+ e_j+e_k, x+e_j) R(x+ e_j, x)\,.
\]
By Lemma \ref{huerr}, for any $(x,y)\in E_n$, $\|U(x,y)-R(x,y)\|$ and $\|U(y,x)-R(y,x)\|$ are both bounded by $\eta^3/6$. Since $\eta \le 1$, this implies in particular that $\|R(x,y)\|$ and $\|R(y,x)\|$ are bounded by $C$ for all $(x,y)$, where $C$ stands for a constant that depends only on $N$. From these two observations it follows easily that for each $(x,j,k)\in B_n'$, 
\begin{align*}
\|U(x,j,k)-R(x,j,k)\|&\le C\eta^3\,.
\end{align*}
By the triangle inequality, this gives
\[
|\|I-U(x,j,k)\| - \|I- R(x,j,k)\|| \le C\eta^3\,.
\]
Since $\|I-U(x,j,k)\|\le C$, this implies in particular that  $\|I-R(x,j,k)\|\le C$. Thus, by Lemma~\ref{phihslmm},
\begin{align}\label{phiur}
&|\phi(U(x,j,k)) - \phi(R(x,j,k))|= \frac{1}{2}|\|I-U(x,j,k)\|^2 - \|I-R(x,j,k)\|^2|\le C\eta^3\,.
\end{align}
Now take any $(x,j,k)\in B_n'$. For simplicity, let
\begin{align*}
H_1 &:= H(x,x+e_j)\,,\\
H_2 &:= H(x+ e_j,x+ e_j +  e_k)\,,\\
H_3 &:= H(x+e_j+ e_k, x+ e_k)\,,\\
H_4 &:= H(x+e_k, x)\,.
\end{align*}
With this notation, observe that
\begin{align*}
R(x,j,k) &= I + \I(H_1+H_2+H_3+H_4) - \frac{1}{2}(H_1^2+H_2^2+H_3^2+H_4^2) \\
&\qquad - (H_1H_2 + H_1H_3+H_1H_4+H_2H_3+H_2H_4+H_3H_4)\\
&\qquad + \text{cubic and higher order terms.}
\end{align*}
Since $H_j$'s are Hermitian matrices, $\tr(H_1+H_2+H_3+H_4)$ is real. Therefore
\[
\Re(\tr(\I(H_1+H_2+H_3+H_4)))=0\,.
\]
Since $\tr(AB)=\tr(BA)$,
\begin{align*}
&\frac{1}{2}\tr(H_1^2+H_2^2+H_3^2+H_4^2) + \tr(H_1H_2 + H_1H_3+H_1H_4+H_2H_3+H_2H_4+H_3H_4)\\
&= \frac{1}{2}\tr((H_1+H_2+H_3+H_4)^2) = \frac{1}{2}\|H(x,j,k)\|^2\,.
\end{align*}
Lastly, observe that the absolute values of the traces of the cubic terms are bounded above by $C\eta^3$. Combining these observations, we get
\[
\phi(R(x,j,k)) = \Re(\tr(I-R(x,j,k))) = \frac{1}{2}\|H(x,j,k)\|^2 + \text{remainder}\,,
\]
where the absolute value of the remainder term is bounded by $C\eta^3$. The proof is completed by combining this  with \eqref{phiur}. 
\end{proof}
A direct consequence of Lemma \ref{utohlmm} is the following theorem, which is the main result of this section.
\begin{thm}\label{utohthm}
Take any $H\in H(B_n)$. Suppose that $r$ is a number such that $\|H(x,y)\|\le r$ for every $(x,y)\in E_n$. Then
\[
\biggl|S_{B_n}(e^{\I H}) - \frac{1}{2}M_n(H)\biggr|\le C r^3 n^d\,,
\]
where $C$ depends only on $d$ and $N$. 
\end{thm}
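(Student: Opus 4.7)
The plan is to derive Theorem \ref{utohthm} as a direct consequence of Lemma \ref{utohlmm} by summing the per-plaquette estimate over all plaquettes in $B_n'$, treating the regimes $r \le 1$ and $r > 1$ separately.

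For the regime $r \le 1$, the hypothesis of Lemma \ref{utohlmm} is satisfied with $\eta = r$, so for every plaquette $(x,j,k) \in B_n'$ and $U := e^{\I H}$,
\[
\biggl|\phi(U(x,j,k)) - \frac{1}{2}\|H(x,j,k)\|^2\biggr| \le C r^3,
\]
with $C$ depending only on $N$. Recalling the definitions $S_{B_n}(U) = \sum_{(x,j,k)\in B_n'} \phi(U(x,j,k))$ and $M_n(H) = \sum_{(x,j,k)\in B_n'} \|H(x,j,k)\|^2$, summing this estimate over plaquettes and using $|B_n'| \le \binom{d}{2} n^d \le C n^d$ yields $|S_{B_n}(e^{\I H}) - \tfrac{1}{2} M_n(H)| \le C r^3 n^d$, as desired.

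For the regime $r > 1$, Lemma \ref{utohlmm} no longer applies, but a crude a priori bound suffices. Since $U(x,j,k) \in U(N)$, we have $\|I - U(x,j,k)\| \le 2\sqrt{N}$, hence by Lemma \ref{phihslmm}, $\phi(U(x,j,k)) \le 2N$. On the Hermitian side, the triangle inequality gives $\|H(x,j,k)\| \le 4r$, so $\tfrac{1}{2}\|H(x,j,k)\|^2 \le 8 r^2 \le 8 r^3$. Therefore
\[
\biggl|\phi(U(x,j,k)) - \frac{1}{2}\|H(x,j,k)\|^2\biggr| \le 2N + 8 r^3 \le C r^3,
\]
and summing over the $O(n^d)$ plaquettes in $B_n'$ again yields the claim (possibly after enlarging $C$).

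There is essentially no obstacle here: the analytical work has already been done in Lemma \ref{utohlmm}, and the only mild subtlety is that the statement of Theorem \ref{utohthm} places no restriction $r \le 1$, forcing the short trivial-regime argument above. Combining the two regimes gives the uniform bound with a constant depending only on $d$ and $N$.
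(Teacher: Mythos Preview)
Your proof is correct and follows the same approach as the paper: sum the per-plaquette estimate from Lemma~\ref{utohlmm} over $B_n'$. Your explicit treatment of the case $r>1$ (which the paper's proof omits, simply citing Lemma~\ref{utohlmm} even though that lemma assumes $\eta\le 1$) is a small but genuine improvement in completeness.
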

\begin{proof}
Let $U:= e^{\I H}$. Recalling the formulas for $S_{B_n}(e^{\I H})$ and $M_n(H)$, we see that
\[
S_{B_n}(e^{\I H}) - \frac{1}{2}M_n(H) =  \sum_{(x,j,k)\in B_n'} \biggl(\phi(U(x,j,k)) - \frac{1}{2}\|H(x,j,k)\|^2\biggr)\,.
\]
The proof is now easily completed by applying Lemma \ref{utohlmm}.
\end{proof}

%\vskip.5in

\section{Proof of the main theorem}\label{proofmain}
The goal of this section is to wrap up the proof of Theorem \ref{mainthm} by connecting the various threads derived in the previous sections. Throughout this section, $C_N$ is the constant defined in the statement of Theorem \ref{liethm}. Let $Z_M$ and $F_M$ be defined as in Section \ref{maxwelllim} and $Z_H$ and $F_H$ be defined as in Section \ref{wmsec}. 

The proof is divided into two parts. First, we will prove an upper bound on the $\limsup$ of the free energy per site that agrees with the formula given in Theorem \ref{mainthm}, and then we will establish the matching lower bound on the $\liminf$ of the free energy per site. Establishing the upper bound requires a few lemmas. The first lemma, stated below, calculates the size of $E_n^1$.
\begin{lmm}\label{en1lmm}
For each $n\ge 2$,
\[
|E_n^1| = (d-1)n^d -d n^{d-1}+1\,.
\]
\end{lmm}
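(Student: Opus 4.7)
The plan is a direct counting argument. I would first compute $|E_n|$ by summing over coordinate directions, then compute $|E_n^0|$ similarly, and finally subtract, since $E_n^1 = E_n \setminus E_n^0$ by definition.

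First, for each $j \in \{1,\ldots,d\}$, the positively oriented edges of $B_n$ in direction $e_j$ are the pairs $(x, x+e_j)$ with $x, x+e_j \in B_n$. The $j$-th coordinate of $x$ can range over $\{0,\ldots,n-2\}$ (that is $n-1$ choices) and each of the other $d-1$ coordinates ranges over $\{0,\ldots,n-1\}$ (that is $n$ choices each). Hence there are $n^{d-1}(n-1)$ positively oriented edges in direction $e_j$, and summing over $j$ gives $|E_n| = dn^{d-1}(n-1)$.

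Next, by the definition of $E_n^0$ given in Section \ref{results}, an edge $(x, x+e_j) \in E_n^0$ in direction $e_j$ has starting vertex of the form $x = (x_1,\ldots,x_{j-1},x_j,0,\ldots,0)$, where $x_1,\ldots,x_{j-1} \in \{0,\ldots,n-1\}$ (that is $n^{j-1}$ choices) and $x_j \in \{0,\ldots,n-2\}$ (that is $n-1$ choices). So the number of edges in $E_n^0$ in direction $e_j$ equals $n^{j-1}(n-1)$, and summing the geometric series,
\[
|E_n^0| = \sum_{j=1}^d n^{j-1}(n-1) = (n-1)\cdot \frac{n^d-1}{n-1} = n^d - 1.
\]

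Finally, since $E_n^1 = E_n \setminus E_n^0$ and $E_n^0 \subseteq E_n$,
\[
|E_n^1| = |E_n| - |E_n^0| = dn^{d-1}(n-1) - (n^d - 1) = (d-1)n^d - dn^{d-1} + 1,
\]
which is the desired formula. The argument is entirely combinatorial, with no real obstacle beyond carefully parsing the definition of $E_n^0$ and recognizing the telescoping geometric sum.
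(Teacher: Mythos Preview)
Your proof is correct and follows essentially the same approach as the paper: compute $|E_n|$ and $|E_n^0|$ separately and subtract. Your direction-by-direction count of $|E_n|$ is in fact slightly more direct than the paper's, which instead groups vertices by the number of coordinates equal to $n-1$ and evaluates a binomial sum to reach the same $dn^{d-1}(n-1)$; the computation of $|E_n^0|$ is identical in both.
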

\begin{proof}
For $0\le l\le d$, let $B_n^l$ be the set of all $x\in B_n$ such that $l$ of the coordinates of $x$ that are less than $n-1$ and the rest are equal to $n-1$. Then there are exactly $l$ choices of $y$ such that $(x,y)\in E_n$. Thus, 
\[
|E_n| = \sum_{l=1}^d l |B_n^l|\,.
\]
Now, clearly,
\begin{align*}
|B_n^l| = {d \choose l} (n-1)^l\,.
\end{align*}
Therefore
\begin{align*}
|E_n| &= \sum_{l=1}^d {d\choose l} l(n-1)^l = d n^{d-1}(n-1)\,.
\end{align*}
Next, take any $(x,y)\in E_n^0$. Then there exists $1\le j\le d$, $x_1,\ldots,x_{j-1}\in \{0,1,\ldots, n-1\}$ and $x_j\in \{0,1,\ldots, n-2\}$ such that
\[
 x= (x_1,\ldots, x_{j-1}, x_j, 0,\ldots,0)
\]
and
\[
y = (x_1,\ldots, x_{j-1}, x_j+1,0,\ldots,0)\,.
\]
For each choice of $j$, $x_1,\ldots, x_{j-1}$ can be chosen in $n^{j-1}$ ways and $x_j$ can be chosen in $(n-1)$ ways. Therefore
\begin{align*}
|E_n^0| &= \sum_{j=1}^d n^{j-1}(n-1) = n^d-1\,.
\end{align*}
This completes the proof, since $|E_n^1|=|E_n|-|E_n^0|$. 
\end{proof}
The next lemma is an extension of Theorem \ref{mainstep1}, which states that a desired upper bound holds if $n$ is not too large (depending on $\beta$). 
\begin{lmm}\label{uplmm1}
Define 
\[
\delta_{n,\beta} := n^{(d+1)/2}\biggl(\frac{\log \beta}{\beta}\biggr)^{1/2}\,.
\]
Then for any $n$ and $g_0$, 
\begin{align*}
F(B_n, g_0)&\le  \frac{|E_n^1|}{n^d}\log C_N-\frac{|E_n^1|}{2n^d}N^2\log \beta + N^2F_M(B_n) + \frac{C}{n^d}+ C\delta_{n,\beta} + C\beta \delta_{n,\beta}^3\,,
\end{align*}
where $C$ depends only on $N$ and $d$.
\end{lmm}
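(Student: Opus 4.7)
The plan is to start from Theorem \ref{mainstep1}, which reduces $F(B_n,g_0)$ to an integral of $e^{-\beta S_{B_n}(U)}$ over $U_0^\beta(B_n)$ against $\sigma^0_{B_n}$. Since $\sigma^0_{B_n}$ is a product of Haar measures over the edges of $E_n^1$ (the values on $E_n^0$ are frozen at $I$), and the constraint defining $U_0^\beta(B_n)$ forces $\|I-U(x,y)\|\le C_1\delta_{n,\beta}$ only on $E_n^1$, this integral is a product-measure integral over $\prod_{(x,y)\in E_n^1} B(I,C_1\delta_{n,\beta})$. I would then apply the upper bound of Theorem \ref{liethm} with $n$ replaced by $|E_n^1|$ and $2r = C_1\delta_{n,\beta}$ to pass to an integral over the Lie algebra balls $b(0,\tfrac{3}{2}C_1\delta_{n,\beta})^{|E_n^1|}$ with respect to Lebesgue measure, at the cost of a multiplicative factor $e^{6rN^2|E_n^1|}C_N^{|E_n^1|}$. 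Since $r\asymp \delta_{n,\beta}$ and $|E_n^1|\le Cn^d$ by Lemma \ref{en1lmm}, the exponential factor contributes $C\delta_{n,\beta}$ after dividing by $n^d$, and the constant factor contributes $\frac{|E_n^1|}{n^d}\log C_N$.

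Next I would replace $S_{B_n}(e^{\I H})$ by $\tfrac{1}{2}M_n(H)$ using Theorem \ref{utohthm}: since $\|H(x,y)\|\le \tfrac{3}{2}C_1\delta_{n,\beta}$ on $E_n^1$ (and $H(x,y)=0$ on $E_n^0$), the difference is at most $C\delta_{n,\beta}^3 n^d$, so $e^{-\beta S_{B_n}(e^{\I H})}\le e^{C\beta\delta_{n,\beta}^3 n^d}\,e^{-\frac{\beta}{2}M_n(H)}$. I then upper bound the integral over the ball by the integral over all of $H_0(B_n)$ (legitimate because the integrand is positive), turning it into a Gaussian integral. The linear change of variable $H\mapsto H/\sqrt\beta$ yields
\[
\int_{H_0(B_n)} e^{-\frac{\beta}{2}M_n(H)}\,d\lambda^0_n(H) = \beta^{-|E_n^1|N^2/2}\,Z_H(B_n),
\]
since $M_n$ is a quadratic form in the $|E_n^1|N^2$ real parameters of $H$ on $E_n^1$. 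Finally, from \eqref{zhzm} we have $\frac{1}{n^d}\log Z_H(B_n) = N^2 F_M(B_n)$.

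Collecting these bounds, taking logarithms, and dividing by $n^d$, I would get
\[
F(B_n,g_0)\le \frac{\log 2}{n^d} + \frac{|E_n^1|}{n^d}\log C_N + 3C_1 N^2 \frac{|E_n^1|}{n^d}\delta_{n,\beta} + C\beta\delta_{n,\beta}^3 - \frac{|E_n^1|N^2}{2n^d}\log\beta + N^2 F_M(B_n),
\]
and absorbing the bounded prefactor $|E_n^1|/n^d\le C$ into the constants in the displayed bound gives the claim of the lemma.

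I do not expect a single hard step: the argument is essentially a careful chain of substitutions using results already established. The only point that requires attention is dimensional bookkeeping, namely verifying that the rescaling $H\mapsto H/\sqrt\beta$ produces exactly the factor $\beta^{-|E_n^1|N^2/2}$ (so that the $-\tfrac{|E_n^1|N^2}{2n^d}\log\beta$ term in the lemma appears with the right coefficient), and checking that the $r^3n^d$ error in Theorem \ref{utohthm}, after multiplication by $\beta$ and division by $n^d$, collapses to the $C\beta\delta_{n,\beta}^3$ term. Both amount to routine accounting.
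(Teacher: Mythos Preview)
Your argument is essentially the paper's proof, step for step: Theorem \ref{mainstep1}, then the upper bound of Theorem \ref{liethm} applied to the $|E_n^1|$ edge variables with $r=\tfrac12 C_1\delta_{n,\beta}$, then Theorem \ref{utohthm}, then enlarging the domain to all of $H_0(B_n)$, then the $\sqrt\beta$-rescaling and \eqref{zhzm}. The one point you omit is that Theorem \ref{liethm} requires $r\le r_0$, so your chain only applies when $\tfrac12 C_1\delta_{n,\beta}\le r_0$; the paper disposes of the complementary case in one line by observing that then $\delta_{n,\beta}$ is bounded below by a fixed positive constant depending only on $N$ and $d$, so (using $F(B_n,g_0)\le 0$, Lemma \ref{en1lmm} for $|E_n^1|/n^d$, and Theorem \ref{cdthm} for the boundedness of $F_M(B_n)$) the right-hand side of the claimed inequality is automatically $\ge 0$ once $C$ is chosen large enough.
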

\begin{proof}
Take any $n\ge 2$. Let $C_1$ be the constant from Theorem \ref{mainstep1}. Let $r_0$ be as in Theorem \ref{liethm}. First, suppose that 
\begin{equation}\label{rassume}
r:= \frac{1}{2}C_1\delta_{n,\beta}\le r_0\,.
\end{equation}
Let $U_0^\beta(B_n)$ be as in Theorem \ref{mainstep1}. Let $B(I,r)$ and $b(0,r)$ be as in Section \ref{liesec}, so that if $U\in U_0^\beta(B_n)$, then $U(x,y)\in B(I, 2r)$ for each $(x,y)\in E_n^1$. Let $T$ be the subset of $H_0(B_n)$ where $H(x,y)\in b(0,3r)$ for each $(x,y)\in E_n^1$. Then by the upper bound from Theorem \ref{liethm}, 
\begin{align*}
\int_{U_0^\beta(B_n)}  e^{-\beta S_{B_n}(U)}\, d\sigma^0_{B_n}(U)&\le C_N^{|E_n^1|}e^{Crn^d}\int_{T}e^{-\beta S_{B_n}(e^{\I H})}\, d\lambda_n^0(H)\,.
\end{align*}
But by Theorem \ref{utohthm}, for any $H\in T$,
\[
\biggl|S_{B_n}(e^{\I H}) - \frac{1}{2}M_n(H)\biggr|\le Cr^3n^d\,.
\]
Thus,
\begin{align*}
\int_{U_0^\beta(B_n)}  e^{-\beta S_{B_n}(U)}\, d\sigma^0_{B_n}(U) &\le C_N^{|E_n^1|}e^{C(r+\beta r^3 )n^d}\int_{H_0(B_n)}e^{-\frac{1}{2}\beta M_n(H)}\, d\lambda_{n}^0(H)\,.
\end{align*}
Making the change of variable $G = \sqrt{\beta} H$, we get
\begin{align*}
\int_{H_0(B_n)}e^{-\frac{1}{2}\beta M_n(H)}\, d\lambda_{n}^0(H) &= \beta^{-\frac{1}{2}N^2 |E_n^1|}Z_H(B_n)\,.%\\
%&\le \beta^{-\frac{1}{2}N^2 (m^d - Cm^{d-1})}Z_H(B_m)\,.
\end{align*}
Combining the steps and applying Theorem~\ref{mainstep1}, we get
\begin{align*}
F(B_n, g_0) &\le \frac{\log 2}{n^d} + \frac{|E_n^1|}{n^d}\log C_N-\frac{|E_n^1|}{2n^d}N^2\log \beta + F_H(B_n)  + Cr + C\beta r^3\,.
\end{align*}
The proof is now completed by applying the identity \eqref{zhzm}. 

Note that the above proof was executed under the assumption \eqref{rassume}. Suppose now that this assumption is violated. The assertion of the lemma is trivially true in this case, for a large enough choice of $C$. This is because $F(B_n, g_0)\le 0$, $|E_n^1|/n^d$ is uniformly bounded irrespective of $n$ (Lemma~\ref{en1lmm}), and $F_M(B_n)$ is also uniformly bounded irrespective of $n$ (Theorem \ref{cdthm}).
\end{proof}
The next lemma is a key tool in removing the condition about the smallness of $n$ that is present in Lemma \ref{uplmm1}. 
\begin{lmm}\label{uplmm2}
For any $2\le m\le n$,
\[
F(B_n, g_0)\le \biggl(1-\frac{Cm}{n}\biggr)F(B_m, g_0)\,,
\]
where $C$ depends only on $d$.
\end{lmm}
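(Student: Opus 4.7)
The plan is to exploit the non-negativity of $\phi$ (which by Lemma \ref{phihslmm} satisfies $\phi(U) = \frac{1}{2}\|I-U\|^2 \ge 0$) to partition $B_n$ into $r^d$ disjoint translated copies of $B_m$, drop every plaquette that is not entirely contained in one of the small boxes, and factorize the resulting integral as a product of copies of $Z(B_m, g_0)$. The constant $C = d$ should then drop out of Bernoulli's inequality applied to $r = \lfloor n/m\rfloor$.

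First, I would set $r := \lfloor n/m\rfloor \ge 1$ and define $B^\mathbf{a} := m\mathbf{a} + B_m$ for $\mathbf{a} \in \{0, \ldots, r-1\}^d$. Because the stride is $m$ while each box spans only $\{0, \ldots, m-1\}$ in each coordinate, these $r^d$ translates are pairwise vertex-disjoint subsets of $B_n$, so their edge sets and plaquette sets are pairwise disjoint as well. Calling a plaquette of $B_n$ interior if it lies in some $(B^\mathbf{a})'$ and exterior otherwise, the non-negativity of $\phi$ gives
\[
S_{B_n}(U) \ge \sum_\mathbf{a} S_{B^\mathbf{a}}(U),
\]
where $S_{B^\mathbf{a}}(U)$ depends only on the restriction of $U$ to $E(B^\mathbf{a})$. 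Since $\sigma_{B_n}$ is a product measure and the interior edge sets are disjoint, integrating out the exterior edges yields one, the remaining integral factorizes over the $r^d$ boxes, and translation invariance identifies each factor with $Z(B_m, g_0)$. This produces $Z(B_n, g_0) \le Z(B_m, g_0)^{r^d}$ and hence $F(B_n, g_0) \le (r^d m^d / n^d) F(B_m, g_0)$.

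Two short observations then finish the argument. First, $\phi \ge 0$ forces $Z(B_m, g_0) \le 1$ and therefore $F(B_m, g_0) \le 0$. Second, $rm \ge n - m$ together with Bernoulli's inequality gives
\[
\frac{r^d m^d}{n^d} \ge \biggl(1 - \frac{m}{n}\biggr)^d \ge 1 - \frac{dm}{n}.
\]
Multiplying the non-positive number $F(B_m, g_0)$ by a factor at least $1 - dm/n$ yields a quantity at most $(1 - dm/n) F(B_m, g_0)$, proving the lemma with $C = d$. There is no substantial obstacle; the only subtle point is choosing the stride to be $m$ rather than $m - 1$ so that the translates are vertex-disjoint and the Haar product measure genuinely factorizes, rather than sharing boundary edges whose integration would be more delicate.
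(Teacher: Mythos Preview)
Your proof is correct and follows essentially the same approach as the paper's: partition $B_n$ into $\lfloor n/m\rfloor^d$ vertex-disjoint translates of $B_m$, drop the remaining plaquettes using $\phi\ge 0$ to obtain $Z(B_n,g_0)\le Z(B_m,g_0)^{\lfloor n/m\rfloor^d}$, and finish with $F(B_m,g_0)\le 0$ and the elementary bound $(\lfloor n/m\rfloor m/n)^d\ge 1-dm/n$. The only cosmetic difference is that you make the constant $C=d$ explicit via Bernoulli's inequality, whereas the paper leaves the final step implicit.
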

\begin{proof}
Suppose that $n = km + r$, where $0\le r\le m-1$. Then $B_{km}$ is contained in $B_n$ and is a disjoint union of $k^d$ translates of $B_m$. Since $\phi$ is a nonnegative function, this implies that
\[
Z(B_n, g_0) \le Z(B_{km}, g_0)\le (Z(B_m, g_0))^{k^d}\,.
\]
Take logarithm on both sides and dividing by $n^d$, we get
\[
F(B_n, g_0) \le \frac{k^dm^d}{n^d} F(B_m, g_0) = \frac{(n-r)^d}{n^d} F(B_m, g_0)\,.
\]
This completes the proof, since $r\le m-1$ and $F(B_m, g_0)\le 0$.
\end{proof}
Finally, the following lemma gives the required upper bound on the $\limsup$ of the free energy per site, without any restriction on the  growth rates of $n$ and $\beta$. 
\begin{lmm}\label{uplmmmain}
\[
\limsup_{\substack{n\ra\infty\\g_0\ra 0}} \biggl(F(B_n, g_0) + \frac{|E_n^1|}{2n^d}N^2\log \beta\biggr) \le (d-1)\log C_N+ N^2\lim_{n\ra\infty} F_M(B_n)\,.% K_d + \frac{(d-1)N^2}{2}\log (2\pi)\,.
\]
\end{lmm}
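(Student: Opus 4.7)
The plan is to combine Lemma~\ref{uplmm1} with Lemma~\ref{uplmm2} by introducing an auxiliary scale $m = m(n, \beta)$ that balances their respective error terms. Set $L := (d-1)\log C_N + N^2\lim_{n\to\infty} F_M(B_n)$; the limit exists by Theorem~\ref{cdthm}. Write $G(n, g_0) := F(B_n, g_0) + \frac{|E_n^1|}{2n^d} N^2 \log \beta$, so the claim is $\limsup G(n, g_0) \le L$. Lemma~\ref{en1lmm} gives $|E_m^1|/m^d = (d-1) + O(1/m)$, so $\frac{|E_m^1|}{m^d}\log C_N + N^2 F_M(B_m) \to L$ as $m \to \infty$; recasting the bound of Lemma~\ref{uplmm1} in this notation gives $G(m, g_0) \le h(m, \beta)$, where
\[
h(m, \beta) := \frac{|E_m^1|}{m^d}\log C_N + N^2 F_M(B_m) + \frac{C}{m^d} + C\delta_{m, \beta} + C\beta\delta_{m, \beta}^3.
\]

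The first step is to derive a master inequality valid for all $2 \le m \le n$:
\[
G(n, g_0) \le h(m, \beta) + \frac{C m \log \beta}{n} + \frac{C \log \beta}{m}.
\]
This rests on three ingredients: Lemma~\ref{uplmm2}, which combined with $F(B_m, g_0) \le 0$ yields $F(B_n, g_0) - F(B_m, g_0) \le (Cm/n)|F(B_m, g_0)|$; the a priori bound $|F(B_m, g_0)| \le C\log\beta$ that follows from Theorem~\ref{lowlmm}; and the elementary identity $\frac{|E_n^1|}{2n^d} - \frac{|E_m^1|}{2m^d} = \frac{d}{2m} + O(1/m^d)$ from Lemma~\ref{en1lmm}, which transfers the $\log\beta$ multiplier attached to the plaquette count from scale $n$ to scale $m$.

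Given an arbitrary joint sequence with $n_k \to \infty$ and $g_0^{(k)} \to 0$ (equivalently $\beta_k \to \infty$), I would choose $m_k$ by a case split on the relative growth of $n_k$ versus $\beta_k$. In the regime $n_k \ge (\log \beta_k)^4$, take $m_k = \lfloor (\log \beta_k)^2 \rfloor$: then $m_k \le n_k$, $m_k \to \infty$, and one checks that $\delta_{m_k, \beta_k}$, $\beta_k \delta_{m_k, \beta_k}^3$, $\log\beta_k/m_k = 1/\log\beta_k$, and $m_k \log \beta_k / n_k \le 1/\log\beta_k$ all tend to zero, so the master inequality yields $G(n_k, g_0^{(k)}) \le h(m_k, \beta_k) + o(1) \to L$. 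In the complementary regime $n_k < (\log \beta_k)^4$, apply Lemma~\ref{uplmm1} directly with $m_k = n_k$ (dropping the $m \log \beta / n$ and $\log \beta / m$ corrections altogether); since $n_k$ is at most polylogarithmic in $\beta_k$, both $\delta_{n_k, \beta_k}$ and $\beta_k \delta_{n_k, \beta_k}^3$ vanish, and $n_k \to \infty$ forces $h(n_k, \beta_k) \to L$. In either regime $\limsup G(n_k, g_0^{(k)}) \le L$, proving the lemma.

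The main obstacle is reconciling three competing demands on $m$: it must grow fast enough that the $\log \beta / m$ correction inherited from Lemma~\ref{uplmm2} vanishes, slowly enough that the Lie-group-to-Lie-algebra approximation errors $\delta_{m, \beta}$ and $\beta \delta_{m, \beta}^3$ (ultimately from Theorem~\ref{utohthm}) stay controlled, while $m \log \beta$ must be dominated by $n$. These constraints admit a simultaneous solution only when $n \gg (\log \beta)^2$; the remaining regime $n = O((\log \beta)^2)$ leaves no room for a nontrivial intermediate scale and must be handled by Lemma~\ref{uplmm1} alone, which is where the dichotomy above is forced.
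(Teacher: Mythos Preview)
Your proof is correct and follows essentially the same strategy as the paper: apply Lemma~\ref{uplmm1} directly when $n$ is small relative to $\beta$, and otherwise introduce an intermediate scale $m$ via Lemma~\ref{uplmm2}. The only differences are cosmetic---the paper splits cases according to whether $\delta_{n,\beta}$ and $\beta\delta_{n,\beta}^3$ vanish and takes $m$ to be a small positive power of $\beta$, whereas you use the explicit threshold $n\gtrless(\log\beta)^4$ and $m=\lfloor(\log\beta)^2\rfloor$; and the paper carries the factor $(1-Cm/n)$ through rather than invoking Theorem~\ref{lowlmm} to bound $|F(B_m,g_0)|$.
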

\begin{proof}
Let $\delta_{n,\beta}$ be as in Lemma \ref{uplmm1}. Suppose that $n\ra \infty$ and $g_0\ra 0$ simultaneously in such a way that both $\delta_{n,\beta}$ and $\beta \delta_{n,\beta}^3$ tend to zero. In this case, the result follows by Lemma~\ref{uplmm1}, Lemma~\ref{en1lmm} and Theorem~\ref{cdthm}. 

Next, suppose that at least one of $\delta_{n,\beta}$ and $\beta \delta_{n,\beta}^3$  remains bounded away from zero along a subsequence. We will show that the claimed inequality holds if the $\limsup$ is taken along that subsequence. This will complete the proof of the lemma. Without loss of generality, we may assume that the subsequence is the whole sequence. Then the assumed condition implies that $n$ must be growing at least as fast as a positive power of $\beta$. This allows us to define a third parameter, $m$, varying like a small positive power of $\beta$, so slowly that the following conditions hold:
\begin{align}\label{mcond1}
\delta_{m,\beta}\ra 0\,, \ \ \beta \delta_{m,\beta}^3 \ra 0 \,, \ \ \frac{m\log \beta}{n} \ra 0\,.
\end{align}
Note that the third condition in the above display implies that 
\begin{equation}\label{mcond2}
\frac{m}{n} \ra 0\,.
\end{equation}
By Lemma \ref{uplmm1} and Lemma \ref{uplmm2}, 
\begin{align*}
&F(B_n, g_0) + \frac{|E_n^1|}{2n^d}N^2\log \beta \le \biggl(1-\frac{Cm}{n}\biggr) F(B_m, g_0)+\frac{|E_n^1|}{2n^d}N^2\log \beta\\
&\le \biggl(1-\frac{Cm}{n}\biggr)\biggl(\frac{|E_m^1|}{m^d}\log C_N+ N^2F_M(B_m) +\frac{C}{m^d}+ C\delta_{m,\beta} + C\beta \delta_{m,\beta}^3\biggr)\\
&\qquad +\biggl(\frac{|E_n^1|}{2n^d} - \frac{|E_m^1|}{2m^d}\biggl(1-\frac{Cm}{n}\biggr)\biggr)N^2\log \beta \,.
\end{align*}
By \eqref{mcond1}, \eqref{mcond2}, Lemma \ref{en1lmm}, Theorem \ref{cdthm} and the fact that $m \ra \infty$, the first term tends to
\begin{align*}
%&\biggl(1-\frac{Cm}{n}\biggr)\biggl(\frac{|E_m^1|}{m^d}\log C_N+ N^2F_M(B_m) +\frac{C}{m^d}+ C\delta_{m,\beta} + C\beta \delta_{m,\beta}^3\biggr)\\
%&\ra 
(d-1)\log C_N + N^2\lim_{m\ra\infty} F_M(B_m)\,.
\end{align*}
By \eqref{mcond1}, Lemma \ref{en1lmm} and the fact that $m$ grows like a positive power of $\beta$,
\begin{align*}
\biggl(\frac{|E_n^1|}{2n^d} - &\frac{|E_m^1|}{2m^d}\biggl(1-\frac{Cm}{n}\biggr)\biggr)N^2\log \beta\le \frac{C\log \beta}{m}  + \frac{Cm\log \beta}{n} \ra 0\,.
\end{align*}
Combining the last three displays completes the proof.
\end{proof}
Next, we begin our quest for the lower bound. The following lemma shows that it suffices to work with $n$ that is not too large (depending on $\beta$).
\begin{lmm}\label{downlmm1}
For any $m$ and $n$,
\[
|F(B_m, g_0) - F(B_n, g_0)|\le \frac{C\beta}{m} + \frac{C\beta}{n}\,,
\]
where $C$ depends only on $N$ and $d$. 
%In particular, for any $g_0$, $\lim_{n\ra\infty} F(B_n, g_0)$ exists. 
\end{lmm}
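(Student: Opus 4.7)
My approach would be to first handle the special case in which $m$ divides $n$ by an exact decomposition and then pass to arbitrary $m,n$ via the least common multiple.

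Suppose $n=km$. Then $B_n$ is the vertex-disjoint union of the $k^d$ translates $B_{m,\bar j}:=\prod_{\ell=1}^d\{j_\ell m,\ldots,j_\ell m+m-1\}$ indexed by $\bar j\in\{0,\ldots,k-1\}^d$. Partition $B_n'=P^{\mathrm{in}}\sqcup P^{\mathrm{bd}}$, where $P^{\mathrm{in}}$ is the set of plaquettes whose four vertices all lie in a single $B_{m,\bar j}$ and $P^{\mathrm{bd}}$ is the rest. A direct count of the edges crossing the interfaces between adjacent $B_{m,\bar j}$'s yields $|P^{\mathrm{bd}}|\le C_0 n^d/m$ for a constant $C_0$ depending only on $d$. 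Write $S_{B_n}=S^{\mathrm{in}}+S^{\mathrm{bd}}$ accordingly. Since $\phi\ge 0$ and the within-box edges are disjoint across the $B_{m,\bar j}$'s, the integral $\int e^{-\beta S^{\mathrm{in}}(U)}\,d\sigma_{B_n}(U)$ factors, giving
\[
Z(B_n,g_0)\le\int e^{-\beta S^{\mathrm{in}}(U)}\,d\sigma_{B_n}(U)=Z(B_m,g_0)^{k^d}\,,
\]
hence $F(B_n,g_0)\le F(B_m,g_0)$. For the matching lower bound, the uniform estimate $\phi\le 2N$ gives $S^{\mathrm{bd}}\le 2NC_0n^d/m$, so
\[
Z(B_n,g_0)\ge e^{-C\beta n^d/m}\,Z(B_m,g_0)^{k^d}\,,
\]
which yields $F(B_n,g_0)\ge F(B_m,g_0)-C\beta/m$. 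Combining the two directions, $|F(B_n,g_0)-F(B_m,g_0)|\le C\beta/m$ whenever $m\mid n$.

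For arbitrary $m$ and $n$, set $L:=\operatorname{lcm}(m,n)$. Since both $m\mid L$ and $n\mid L$, the special case applied to the pairs $(m,L)$ and $(n,L)$ gives $|F(B_L,g_0)-F(B_m,g_0)|\le C\beta/m$ and $|F(B_L,g_0)-F(B_n,g_0)|\le C\beta/n$, and the triangle inequality completes the proof.

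The main obstacle is that a naive application of Lemma \ref{uplmm2} yields only $F(B_n,g_0)-F(B_m,g_0)\le C\beta m/n$, which is much too weak when $m$ is comparable to $n$: the decomposition used there writes $n=km+r$ with $0\le r<m$ and leaves an outer slab $B_n\setminus B_{km}$ containing $O(n^{d-1}m)$ plaquettes, contributing an unavoidable $O(\beta m/n)$ error on the lower bound side. Forcing exact divisibility eliminates the outer slab entirely, so that only the $O(n^d/m)$ between-plaquettes contribute, and routing the general case through the lcm avoids Lemma \ref{uplmm2} altogether and produces the desired symmetric bound $C\beta(1/m+1/n)$.
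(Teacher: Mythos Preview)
Your proof is correct and follows essentially the same strategy as the paper's: reduce to the case where one box size divides the other by counting the $O(n^d/m)$ interface plaquettes and using $0\le\phi\le 2N$, then handle arbitrary $m,n$ by passing through a common multiple and applying the triangle inequality. The only cosmetic difference is that you route through $L=\operatorname{lcm}(m,n)$ while the paper uses $l=mn$; either choice of common multiple works identically.
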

\begin{proof}
Take any two positive integers $m$ and $n$. Let $l := mn$. Then $B_l$ is a disjoint union of $m^d$ translates of $B_n$. Let $\mb$ denote this set of translates of $B_n$.  For any $U\in U(B_l)$ and any $B\in \mb$, let $U_B$ denote the restriction of the configuration $U$ to the box $B$. Note that $U_B\in U(B)$.  

Since the eigenvalues of a unitary matrix lie on the unit circle, the range of the function $\phi$ is contained in the interval $[0,2N]$. This implies that for any $U\in U(B_l)$,
\begin{align}\label{hbl}
0\le S_{B_l}(U) - \sum_{B\in \mb} S_B(U_B) &\le Cm^d n^{d-1}\,,
\end{align} 
since the number of plaquettes of $B_l$ that are not plaquettes of any element of $\mb$ is bounded by a constant times $m^d n^{d-1}$. Since 
\begin{align*}
\int_{U(B_l)} \exp\biggl(-\beta \sum_{B\in \mb} S_B(U_B)\biggr)\, d\sigma_{B_l}(U) &= \prod_{B\in \mb} \int_{U(B)} \exp(-\beta S_B(U))\, d\sigma_B (U)\\
&= Z(B_n, g_0)^{|\mb|} = Z(B_n, g_0)^{m^d}\,,
\end{align*}
the inequality \eqref{hbl} implies that 
\begin{equation}\label{zbl}
Z(B_n,g_0)^{m^d} e^{-C\beta m^d n^{d-1}}\le Z(B_l,g_0)\le Z(B_n,g_0)^{m^d}\,.
\end{equation}
Taking logarithm on both sides and dividing by $l^d$ gives
\[
F(B_n, g_0) - \frac{C\beta}{n}\le F(B_l,g_0)\le F(B_n, g_0)\,.
\] 
Therefore,
\[
|F(B_n,g_0)-F(B_l,g_0)|\le \frac{C\beta }{n}\,.
\]
Exchanging $m$ and $n$ in the above argument, we see that the following inequality also holds:
\[
|F(B_m,g_0)-F(B_l,g_0)|\le \frac{C\beta }{m}\,.
\]
The proof is now easily completed by combining the last two displays.
\end{proof}
The next lemma establishes the required lower bound if $n$ is not growing too fast with $\beta$. This lemma heavily uses the estimates for lattice Maxwell theory that we obtained earlier. 
\begin{lmm}\label{downlmm2}
If $\beta$ is large enough (depending on $N$ and $d$) and $n\le \beta^2$, then
\begin{align*}
F(B_n, g_0) &\ge \frac{|E_n^1|}{n^d}\log C_N -\frac{|E_n^1|}{2n^d}N^2\log \beta + N^2 F_M(B_n) - \frac{C\log \beta}{\beta^{1/(10d+20)}}- \frac{C\log n}{\sqrt{n}}\,,
\end{align*}
where $C$ is a positive constant that depends only on $N$ and $d$.
\end{lmm}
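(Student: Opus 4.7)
The plan is to lower-bound $Z(B_n, g_0)$ by executing, in reverse, the sequence of transformations used in the proof of Lemma \ref{uplmm1}, now without any a priori guarantee that the dominant configurations cluster near the identity. The missing ingredient, compared to the upper bound, is a lower bound on the probability that a random Hermitian configuration does in fact lie near $0$ on every edge, and this is supplied by the lattice Maxwell estimates of Section \ref{maxestsec}. Throughout I fix a parameter $r = \beta^{-\alpha}$ with $\alpha \in (1/3,\,1/2)$ to be optimized at the end, which is $\le r_0$ (the constant in Theorem \ref{liethm}) for $\beta$ sufficiently large.

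By Corollary \ref{uu0}, $Z(B_n, g_0)$ equals an integral over $U_0(B_n)$, which I lower-bound by restricting to configurations with $U(x,y) \in B(I, 2r)$ for every $(x,y) \in E_n^1$. Applying the lower bound of Theorem \ref{liethm} across the $|E_n^1|$ free edges converts this to an integral over $H \in H_0(B_n)$ with $\|H(x,y)\| \le r$, producing a prefactor $(2-e^{6r})^{N^2|E_n^1|} C_N^{|E_n^1|}$. Theorem \ref{utohthm} then lets me replace $\beta S_{B_n}(e^{\I H})$ by $\tfrac{\beta}{2} M_n(H)$ at the cost of an additive error $C\beta r^3 n^d$ in the exponent. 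Rescaling $G := \sqrt{\beta}\,H$ introduces a Jacobian $\beta^{-N^2|E_n^1|/2}$ and reduces the problem to lower-bounding
\[
\int_{\{\|G(x,y)\| \le r\sqrt{\beta}\text{ for all } (x,y) \in E_n^1\}} e^{-M_n(G)/2}\, d\lambda^0_n(G) \;=\; Z_H(B_n)\cdot \pi,
\]
where $\pi$ is the probability under the normalized Hermitian-valued Maxwell measure of the event $\{\|G(x,y)\| \le r\sqrt\beta\ \text{for all}\ (x,y) \in E_n^1\}$.

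The crucial step is the lower bound on $\pi$. Because the Hermitian Maxwell measure factors as a product of $N^2$ independent copies of the scalar measure $\tau_n$ (one per real parameter of each $H(x,y)$), and the event $\{\|G(x,y)\| \le r\sqrt\beta\}$ contains the intersection of the $N^2$ scalar events $\{|t_\alpha(x,y)| \le r\sqrt\beta/N\}$, I apply Theorem \ref{latticemain} to each scalar factor. Taking $m := \min\bigl(\lfloor\sqrt{n}\rfloor,\,\lfloor c(r\sqrt\beta)^{1/(d+2)}\rfloor\bigr)$ with $c$ small enough ensures both $m \le \sqrt n$ and $C_1 m^{d+2} \le r\sqrt\beta/N$, and gives $\pi \ge \exp(-CN^2 n^d \log n / m)$. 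Using $Z_H(B_n) = Z_M(B_n)^{N^2}$ (so $F_H(B_n) = N^2 F_M(B_n)$), $\log(2-e^{6r}) \ge -Cr$ for small $r$, and the uniform bound on $|E_n^1|/n^d$, I arrive at
\begin{align*}
F(B_n, g_0) \ge\;& \frac{|E_n^1|}{n^d}\log C_N - \frac{|E_n^1|}{2n^d} N^2 \log\beta + N^2 F_M(B_n) \\
&\;-\; Cr - C\beta r^3 - \frac{C \log n}{m}.
\end{align*}

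In the regime where $\lfloor\sqrt{n}\rfloor$ is the binding choice of $m$, the third error becomes $\tfrac{C\log n}{\sqrt n}$, matching the final term of the lemma. In the other regime, using $\log n \le 2\log\beta$ from the hypothesis $n \le \beta^2$, the three error terms become $\beta^{-\alpha}$, $\beta^{1-3\alpha}$, and $C\log\beta\cdot\beta^{-(1/2-\alpha)/(d+2)}$, and any $\alpha \in \bigl[\tfrac{1}{3}+\tfrac{1}{30(d+2)},\,\tfrac{2}{5}\bigr]$ makes all three at most $C\log\beta / \beta^{1/(10d+20)}$ (the stated exponent is a convenient, not sharp, sufficient value). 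The main obstacle is the lower bound on $\pi$: it requires a delicate match between the matrix-valued Gaussian and the scalar polynomial-volume small-ball estimate of Theorem \ref{latticemain} at precisely the scale $r\sqrt\beta$, and this scale must simultaneously balance the quadratic-approximation error $\beta r^3$ and the ball-volume loss $r$. The remaining steps are careful bookkeeping of Jacobians, prefactors, and error terms that closely mirror the upper bound argument in Lemma \ref{uplmm1}.
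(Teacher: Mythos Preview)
Your proposal is correct and follows essentially the same approach as the paper: axial gauge fixing via Corollary~\ref{uu0}, restriction to a small ball in $U_0(B_n)$, passage to the Lie algebra via the lower bound in Theorem~\ref{liethm}, replacement of the Wilson action by the Maxwell action via Theorem~\ref{utohthm}, the rescaling $G=\sqrt{\beta}H$, and then the crucial small-ball estimate from Theorem~\ref{latticemain} applied componentwise with $m$ chosen to balance $\sqrt{n}$ against $(r\sqrt{\beta})^{1/(d+2)}$. The paper likewise takes $r=\beta^{-c}$ with $c=2/5$, giving exactly the exponent $1/(10d+20)$.
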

\begin{proof}
Take any $n\le \beta^2$. Recall the constant $r_0$ from Theorem \ref{liethm}. Take any $c\in (1/3,1/2)$ and let $r:= \beta^{-c}$. Assume that $\beta$ is so large that $r\in (0,r_0]$. Observe that by Corollary~\ref{uu0},
\begin{align}
Z(B_n, g_0) &= \int_{U_0(B_n)}  e^{-\beta S_{B_n}(U)}\, d\sigma^0_{B_n}(U)  \nonumber\\
&\ge \int_{A}  e^{-\beta S_{B_n}(U)} \,d\sigma^0_{B_n}(U)\,,\label{lowstep1}
\end{align} 
where
\[
A:= \{U\in U_0(B_n): U(x,y)\in B(I, 2r)\text{ for all } (x,y)\in E_n\}\,.
\]
Let $T$ be the subset of $H_0(B_n)$ where $H(x,y)\in b(0,r)$ for each $(x,y)\in E_n^1$. Then by the lower bound from Theorem~\ref{liethm}, 
\begin{align*}
\int_{A}  e^{-\beta S_{B_n}(U)}\, d\sigma^0_{B_n}(U)&\ge C_N^{|E_n^1|}e^{-Crn^d}\int_{T}e^{-\beta S_{B_n}(e^{\I H})}\, d\lambda_{n}^0(H)\,.
\end{align*}
But by Lemma \ref{utohlmm}, for any $H\in T$,
\[
\biggl|S_{B_n}(e^{\I H}) - \frac{1}{2}M_n(H)\biggr|\le Cr^3n^d\,.
\]
Thus,
\begin{align}
\int_{A}  e^{-\beta S_{B_n}(U)}\, d\sigma^0_{B_n}(U) &\ge C_N^{|E_n^1|}e^{-C(r+\beta r^3 )n^d}\int_{T}e^{-\frac{1}{2}\beta M_n(H)}\, d\lambda_{n}^0(H)\,. \label{lowstep2}
\end{align}
Making the change of variable $G=\sqrt{\beta} H$, we get
\begin{align}
\int_{T}e^{-\beta M_n(H)} \,d\lambda_{n}^0(H) &= \beta^{-\frac{1}{2}N^2 |E_n^1|}\int_{T'}e^{-\frac{1}{2}M_n(G)}\, d\lambda_{n}^0(G)\,.\label{lowstep3}
\end{align}
where $T'$ is the set of all $G\in H(B_n)$ such that $G(x,y)\in b(0,\sqrt{\beta} r)$ for each $(x,y)$. Let
\[
\eta := \min\biggl\{ \frac{\sqrt{\beta} r}{2N}\, , \, C_1 n^{(d+2)/2}\biggr\}\,,
\]
where $C_1$ is the constant from Theorem \ref{latticemain}. 
Choose an integer $m$ such that 
\begin{equation}\label{c1kd}
\frac{\eta}{2}\le C_1m^{d+2}\le \eta\,.
\end{equation}
 It is possible to choose such an $m$ if $n$ and $\beta$ are large enough, since $r=\beta^{-c}$ and $c<1/2$. Let $T''$ be the set of all $G\in H(B_n)$ such that for any $(x,y)\in E_n$, the real and imaginary parts of every entry of the matrix $G(x,y)$ have magnitudes $\le C_1m^{d+2}$. By \eqref{c1kd}, it follows that  
\[
T''\subseteq T'\,.
\] 
Also, by \eqref{c1kd}, $m\le \sqrt{n}$. Therefore by the lower bound from \eqref{c1kd}, the probability inequality from Theorem~\ref{latticemain}, and the product structures of the set $T''$ and the measure $e^{-\frac{1}{2}M_n(G)}\, d\lambda_{n}^0(G)$, it follows that
\begin{align}
\frac{1}{Z_H(B_n)}\int_{T'}e^{-\frac{1}{2}M_n(G)}\, d\lambda_{n}^0(G)&\ge \frac{1}{Z_H(B_n)}\int_{T''}e^{-\frac{1}{2}M_n(G)}\, d\lambda_{n}^0(G) \nonumber\\
&\ge \exp\biggl(-\frac{Cn^d \log n}{m}\biggr)\nonumber\\
&\ge \exp\biggl(-\frac{Cn^d \log n}{(\sqrt{\beta} r)^{1/(d+2)}}-\frac{Cn^d \log n}{\sqrt{n}}\biggr)\,.\label{lowstep4}
\end{align}
Combining \eqref{lowstep1}, \eqref{lowstep2}, \eqref{lowstep3} and \eqref{lowstep4}, and applying \eqref{zhzm}, we get
\begin{align*}
F(B_n, g_0) &\ge \frac{|E_n^1|}{n^d}\log C_N -\frac{|E_n^1|}{2n^d}N^2\log \beta + N^2 F_M(B_n) \\
&\qquad  - C(r +\beta r^3) - \frac{C\log n}{(\sqrt{\beta} r)^{1/(d+2)}} - \frac{C\log n}{\sqrt{n}}\,.
\end{align*}
Since $n\le \beta^2$ and $r = \beta^{-c}$ for some $c\in (1/3,1/2)$, this gives
\begin{align*}
F(B_n, g_0) &\ge \frac{|E_n^1|}{n^d}\log C_N -\frac{|E_n^1|}{2n^d}N^2\log \beta + N^2 F_M(B_n)\\
&\qquad  -  C(\beta^{-c} +\beta^{1-3c}) - \frac{C\log \beta}{\beta^{(1-2c)/2(d+2)}}- \frac{C\log n}{\sqrt{n}}\,.
\end{align*}
The proof is completed by taking $c = 2/5$ and observing that the fourth term on the right is dominated by the fifth.
\end{proof}
Finally, the following lemma combines Lemma \ref{downlmm1} and Lemma \ref{downlmm2} to remove all constraints on the growth rates of $n$ and $\beta$.
\begin{lmm}\label{downlmmmain}
\begin{align*}
\liminf_{\substack{n\ra\infty\\ g_0\ra 0}}\biggl(F(B_n, g_0) + \frac{|E_n^1|}{2n^d}N^2\log \beta \biggr) \ge (d-1)\log C_N + N^2 \lim_{n\ra\infty} F_M(B_n)\,.
\end{align*}
\end{lmm}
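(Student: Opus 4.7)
The plan is to extend Lemma \ref{downlmm2}, which establishes the desired inequality in the regime $n \le \beta^2$, to arbitrary joint growth of $n$ and $\beta$ by employing Lemma \ref{downlmm1} to interpose an auxiliary scale $m$ whenever $n$ outgrows $\beta^2$. Let $K_M := \lim_{n\ra\infty} F_M(B_n)$, whose existence is guaranteed by Theorem \ref{cdthm}. Fix a sequence $(n_j, g_{0,j})$ along which the $\liminf$ on the left-hand side is attained, write $\beta_j := g_{0,j}^{-2}$ (so $\beta_j \ra \infty$), and pass to a subsequence so that either $n_j \le \beta_j^2$ for every $j$ or $n_j > \beta_j^2$ for every $j$.

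In the first subcase, Lemma \ref{downlmm2} applied at $(n_j, g_{0,j})$ directly gives
\[
F(B_{n_j}, g_{0,j}) + \frac{|E_{n_j}^1|}{2n_j^d}N^2\log\beta_j \ge \frac{|E_{n_j}^1|}{n_j^d}\log C_N + N^2 F_M(B_{n_j}) - \epsilon_j,
\]
with $\epsilon_j = C(\log \beta_j)\beta_j^{-1/(10d+20)} + C(\log n_j)/\sqrt{n_j} \ra 0$. Since $|E_{n_j}^1|/n_j^d \ra d-1$ by Lemma \ref{en1lmm} and $F_M(B_{n_j}) \ra K_M$, the claimed bound follows along this subsequence.

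In the second subcase, $n_j > \beta_j^2 \ra \infty$ automatically, and I introduce the intermediate scale $m_j := \lfloor \beta_j^{3/2}\rfloor$. This choice ensures $m_j \le \beta_j^2$ (so Lemma \ref{downlmm2} applies at $m_j$), $m_j \ra \infty$, and both $\beta_j/m_j \ra 0$ and $(\log\beta_j)/m_j \ra 0$. Lemma \ref{downlmm1} yields $F(B_{n_j}, g_{0,j}) \ge F(B_{m_j}, g_{0,j}) - C\beta_j/m_j - C\beta_j/n_j$, and the last error obeys $\beta_j/n_j < 1/\beta_j \ra 0$. Composing with Lemma \ref{downlmm2} at $m_j$ and adding $\frac{|E_{n_j}^1|}{2n_j^d}N^2\log\beta_j$ to both sides,
\[
F(B_{n_j}, g_{0,j}) + \frac{|E_{n_j}^1|}{2n_j^d}N^2\log\beta_j \ge \frac{|E_{m_j}^1|}{m_j^d}\log C_N + \Delta_j \, N^2\log\beta_j + N^2 F_M(B_{m_j}) - o(1),
\]
where $\Delta_j := \frac{|E_{n_j}^1|}{2n_j^d} - \frac{|E_{m_j}^1|}{2m_j^d}$. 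Using the closed form $|E_n^1|/n^d = (d-1) - d/n + n^{-d}$ from Lemma \ref{en1lmm}, one sees $\Delta_j = \frac{d}{2}(1/m_j - 1/n_j) + O(m_j^{-d})$, so $\Delta_j \, N^2\log\beta_j = O((\log\beta_j)/m_j) \ra 0$. Combined with $|E_{m_j}^1|/m_j^d \ra d-1$ and $F_M(B_{m_j}) \ra K_M$, this settles the second subcase and hence the lemma.

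The main (and rather mild) obstacle is controlling the correction $\Delta_j \, N^2 \log\beta_j$, which arises because the prefactor of $\log\beta$ in the statement is pinned at scale $n$ while Lemma \ref{downlmm2} only furnishes the prefactor at scale $m$. What makes this workable is the \emph{explicit} formula of Lemma \ref{en1lmm}: it guarantees that the mismatch decays like $1/m + 1/n$ rather than merely $o(1)$, forcing only the modest requirement $m_j \gg \log\beta_j$, which the choice $m_j \sim \beta_j^{3/2}$ satisfies with room to spare.
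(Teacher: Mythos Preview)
Your proof is correct and follows essentially the same strategy as the paper's: split according to whether $n$ outgrows $\beta^2$, apply Lemma \ref{downlmm2} directly in the small-$n$ regime, and in the large-$n$ regime interpose an intermediate scale $m$ via Lemma \ref{downlmm1}, controlling the mismatch $\Delta_j\,N^2\log\beta_j$ through the explicit formula of Lemma \ref{en1lmm}. The only cosmetic difference is that the paper takes $m=\lfloor\beta^2\rfloor$ whereas you take $m_j=\lfloor\beta_j^{3/2}\rfloor$; both choices satisfy the needed conditions $m\le\beta^2$, $\beta/m\to 0$, and $(\log\beta)/m\to 0$.
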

\begin{proof}
If $n$ grows slower than $\beta^2$, then the claim follows by Lemma \ref{downlmm2}, Lemma \ref{en1lmm} and Theorem~\ref{cdthm}. Suppose that $n$ grows faster than $\beta^2$. Let $m$ be the integer part of $\beta^2$. Then by Lemma~\ref{downlmm1}, 
\begin{align*}
&F(B_n, g_0) + \frac{|E_n^1|}{2n^d}N^2\log \beta \ge F(B_m, g_0) - \frac{C}{\beta} + \frac{|E_n^1|}{2n^d}N^2\log \beta \\
&= \biggl(F(B_m, g_0) + \frac{|E_m^1|}{2m^d}N^2\log \beta\biggr)-\frac{C}{\beta}+\biggl(\frac{|E_n^1|}{2n^d} - \frac{|E_m^1|}{2m^d}\biggr)N^2\log \beta \,.
\end{align*}
We have already argued that the $\liminf$ of the first term has the desired lower bound, since $m$ is growing like $\beta^2$. To complete the proof, note that since $n$ and $m$ are both growing at least as fast as $\beta^2$, Lemma~\ref{en1lmm} ensures that the third term tends to zero.
\end{proof}
We are now ready to complete the proof of Theorem \ref{mainthm}.
\begin{proof}[Proof of Theorem \ref{mainthm}]
By Lemma \ref{uplmmmain} and Lemma \ref{downlmmmain}, we get
\[
\lim_{\substack{n\ra\infty\\g_0\ra 0}} \biggl(F(B_n, g_0) + \frac{|E_n^1|}{2n^d}N^2\log \beta\biggr) = (d-1)\log C_N+ N^2\lim_{n\ra\infty} F_M(B_n)\,.% K_d + \frac{(d-1)N^2}{2}\log (2\pi)\,.
\]
Lemma \ref{en1lmm} gives the size of $E_n^1$. On the other hand by \eqref{fmkn}, Theorem \ref{cdthm} and Lemma \ref{en1lmm},
\[
\lim_{n\ra\infty}F_M(B_n) = K_{d} + \frac{d-1}{2}\log (2\pi)\,.
\]
Lastly, note that 
\begin{align*}
\log C_N + \frac{N^2}{2}\log(2\pi) = \log \biggl( \frac{\prod_{j=1}^{N-1} j!}{(2\pi)^{N/2}}\biggr)\,.
\end{align*}
This completes the proof of Theorem \ref{mainthm}.
\end{proof}

\section*{Acknowledgments} 
I thank  Erik Bates, Persi Diaconis, Bruce Driver, Alex Dunlap,  Jafar Jafarov, Todd Kemp, Tim Nguyen, Erhard Seiler, Ambar Sengupta, Steve Shenker, Lenny Susskind and Mithat \"Unsal for helpful discussions and communications.  I am particularly indebted to David Brydges for clarifying many aspects of the literature on constructive quantum field theory, and to Len Gross for a long list of useful comments.
%\vskip.5in

\bibliographystyle{plainnat}

\begin{thebibliography}{99}

\bibitem[Ba\l aban(1983)]{balaban83} {\sc Ba\l aban, T.} (1983). Regularity and decay of lattice Green's functions. {\it Comm. Math. Phys.,} {\bf 89} no. 4, 571--597.

\bibitem[Ba\l aban(1984a)]{balaban84a} {\sc Ba\l aban, T.} (1984a). Renormalization group methods in non-abelian gauge theories. {\it Harvard preprint, HUTMP B134.}

\bibitem[Ba\l aban(1984b)]{balaban84b} {\sc Ba\l aban, T.} (1984b). Propagators and renormalization transformations for lattice gauge theories. I. {\it Comm. Math. Phys.,} {\bf 95} no. 1, 17--40.

\bibitem[Ba\l aban(1984c)]{balaban84c} {\sc Ba\l aban, T.} (1984c). Propagators and renormalization transformations for lattice gauge theories. II. {\it Comm. Math. Phys.,} {\bf 96} no. 2, 223--250.

\bibitem[Ba\l aban(1984d)]{balaban84d} {\sc Ba\l aban, T.} (1984d). Recent results in constructing gauge fields. {\it Physica A,} {\bf 124} no. 1-3, 79--90.

\bibitem[Ba\l aban(1985a)]{balaban85a} {\sc Ba\l aban, T.} (1985a). Averaging operations for lattice gauge theories. {\it Comm. Math. Phys.,} {\bf 98} no. 1, 17--51.

\bibitem[Ba\l aban(1985b)]{balaban85b} {\sc Ba\l aban, T.} (1985b). Spaces of regular gauge field configurations on a lattice and gauge fixing conditions.
{\it Comm. Math. Phys.,} {\bf 99} no. 1, 75--102.

\bibitem[Ba\l aban(1985c)]{balaban85c} {\sc Ba\l aban, T.} (1985c). Propagators for lattice gauge theories in a background field. {\it Comm. Math. Phys.,} {\bf 99} no. 3, 389--434.

\bibitem[Ba\l aban(1985d)]{balaban85d} {\sc Ba\l aban, T.} (1985d). Ultraviolet stability of three-dimensional lattice pure gauge field theories. {\it Comm. Math. Phys.,} {\bf 102} no. 2, 255--275. 

\bibitem[Ba\l aban(1985e)]{balaban85e} {\sc Ba\l aban, T.} (1985e). The variational problem and background fields in renormalization group method for lattice gauge theories. {\it Comm. Math. Phys.,} {\bf 102} no. 2, 277--309.

\bibitem[Ba\l aban(1987)]{balaban87} {\sc Ba\l aban, T.} (1987). Renormalization group approach to lattice gauge field theories. I. Generation of effective actions in a small field approximation and a coupling constant renormalization in four dimensions. {\it Comm. Math. Phys.,} {\bf 109} no. 2, 249--301.

\bibitem[Ba\l aban(1988)]{balaban88} {\sc Ba\l aban, T.} (1988). Convergent renormalization expansions for lattice gauge theories. {\it Comm. Math. Phys.,} {\bf 119} no. 2, 243--285.

\bibitem[Ba\l aban(1989a)]{balaban89a} {\sc Ba\l aban, T.} (1989a). Large field renormalization. I. The basic step of the {\bf R} operation. {\it Comm. Math. Phys.,} {\bf 122} no. 2, 175--202. 

\bibitem[Ba\l aban(1989b)]{balaban89b} {\sc Ba\l aban, T.} (1989b). Large field renormalization. II. Localization, exponentiation, and bounds for the {\bf R} operation. {\it Comm. Math. Phys.,} {\bf 122} no. 3, 355--392.

\bibitem[Borgs and Seiler(1983)]{bs83} {\sc Borgs, C.} and {\sc Seiler, E.} (1983). Lattice Yang-Mills theory at nonzero temperature and the confinement problem. {\it Comm. Math. Phys.,} {\bf 91} no. 3, 329--380.

\bibitem[Brali\'c(1980)]{bralic80} {\sc Brali\'c, N.~E.} (1980). Exact computation of loop averages in two-dimensional Yang-Mills theory. {\it Phys. Rev. D (3),} {\bf 22}, no. 12, 3090--3103.

\bibitem[Brydges, Fr\"ohlich and Seiler(1979)]{bfs79} {\sc Brydges, D., Fr\"ohlich, J.} and {\sc Seiler, E.} (1979). On the construction of quantized gauge fields. I. General results. {\it Ann. Physics,} {\bf 121} nos. 1-2, 227--284.

\bibitem[Brydges, Fr\"ohlich and Seiler(1980)]{bfs80} {\sc Brydges, D., Fr\"ohlich, J.} and {\sc Seiler, E.} (1980). Construction of quantised gauge fields. II. Convergence of the lattice approximation. {\it Comm. Math. Phys.,} {\bf 71} no. 2, 159--205.

\bibitem[Brydges, Fr\"ohlich and Seiler(1981)]{bfs81} {\sc Brydges, D., Fr\"ohlich, J.} and {\sc Seiler, E.} (1981). On the construction of quantized gauge fields. III. The two-dimensional abelian Higgs model without cutoffs. {\it Comm. Math. Phys.,} {\bf 79} no. 3, 353--399. 

\bibitem[Charalambous and Gross(2013)]{cg13} {\sc Charalambous, N.} and {\sc Gross, L.} (2013). The Yang-Mills heat semigroup on three-manifolds with boundary. {\it Comm. Math. Phys.,} {\bf 317} no. 3, 727--785. 

\bibitem[Charalambous and Gross(2015)]{cg15} {\sc Charalambous, N.} and {\sc Gross, L.} (2015). Neumann domination for the Yang--Mills heat equation. {\it J. Math. Phys.,} {\bf 56} no. 7, 073505, 21 pp.


\bibitem[Chatterjee(2015)]{chatterjee15} {\sc Chatterjee, S.} (2015). Rigorous solution of strongly coupled $SO(N)$ lattice gauge theory in the large $N$ limit. {\it arXiv preprint arXiv:1502.07719.}

\bibitem[Costello(2011)]{costello11} {\sc Costello, K.} (2011). {\it Renormalization and effective field theory.} American Mathematical Society, Providence, RI. 

\bibitem[Diaconis and Forrester(2016)]{df16} {\sc Diaconis, P.} and {\sc Forrester, P.~J.} (2016). A. Hurwitz and the origins of random matrix theory in mathematics. {\it arXiv preprint arXiv:1512.09229.}


\bibitem[Diaconis and Shahshahani(1994)]{ds94} {\sc Diaconis, P.} and {\sc  Shahshahani, M.} (1994). On the eigenvalues of random matrices. {\it J. Appl. Probab.,} {\bf 31A} 49--62.

\bibitem[Douglas(2004)]{douglas04} {\sc Douglas, M.~R.} (2004). Report on the Status of the Yang-Mills Millenium Prize Problem. {\it Preprint.} Available at \verb|http://www.claymath.org/sites/default/files/ym2.pdf|

\bibitem[Driver(1989a)]{driver89a} {\sc Driver, B.~K.} (1989a). Classifications of bundle connection pairs by parallel translation and lassos. {\it J. Funct. Anal.,} {\bf 83} no. 1, 185--231. 

\bibitem[Driver(1989b)]{driver89b} {\sc Driver, B.~K.} (1989b). $\mathrm{YM}_2$: continuum expectations, lattice convergence, and lassos. {\it Comm. Math. Phys.,} {\bf 123} no. 4, 575--616.

\bibitem[Dunne and \"Unsal(2016)]{du16} {\sc Dunne, G.~V.} and {\sc \"Unsal, M.} (2016). New Methods in QFT and QCD: From Large-N Orbifold Equivalence to Bions and Resurgence. {\it arXiv preprint arXiv:1601.03414.}

\bibitem[Federbush(1986)]{federbush86} {\sc Federbush, P.} (1986). A phase cell approach to Yang-Mills theory. I. Modes, lattice-continuum duality. {\it Comm. Math. Phys.,} {\bf 107} no. 2, 319--329. 

\bibitem[Federbush(1987a)]{federbush87a} {\sc Federbush, P.} (1987a). A phase cell approach to Yang-Mills theory. III. Local stability, modified renormalization group transformation. {\it Comm. Math. Phys.,} {\bf  110} no. 2, 293--309.

\bibitem[Federbush(1987b)]{federbush87b} {\sc Federbush, P.} (1987b). A phase cell approach to Yang-Mills theory. VI. Nonabelian lattice-continuum duality. {\it Ann. Inst. H. Poincar\'e Phys. Th\'eor.,} {\bf 47} no. 1, 17--23.

\bibitem[Federbush(1988)]{federbush88} {\sc Federbush, P.} (1988). A phase cell approach to Yang-Mills theory. IV. The choice of variables. {\it Comm. Math. Phys.,} {\bf 114} no. 2, 317--343.

\bibitem[Federbush(1990)]{federbush90} {\sc Federbush, P.} (1990). A phase cell approach to Yang-Mills theory. V. Analysis of a chunk. {\it Comm. Math. Phys.,} {\bf 127} no. 3, 433--457.

\bibitem[Federbush and Williamson(1987)]{fw87} {\sc Federbush, P.} and {\sc Williamson, C.} (1987). A phase cell approach to Yang-Mills theory. II. Analysis of a mode. {\it J. Math. Phys.,} {\bf 28} no. 6, 1416--1419. 

\bibitem[Fine(1990)]{fine90} {\sc Fine, D.~S.} (1990).
Quantum Yang-Mills on the two-sphere. 
{\it Comm. Math. Phys.,} {\bf 134} no. 2, 273--292. 

\bibitem[Fine(1991)]{fine91} {\sc Fine, D.~S.} (1991). 
Quantum Yang-Mills on a Riemann surface. 
{\it Comm. Math. Phys.,} {\bf 140} no. 2, 321--338. 

\bibitem[Forrester and Warnaar(2008)]{fw08} {\sc Forrester, P.~J.} and {\sc Warnaar, S.~O.} (2008). 
The importance of the Selberg integral. 
{\it Bull. Amer. Math. Soc. (N.S.),} {\bf 45} no. 4, 489--534.

\bibitem[Glimm and Jaffe(1987)]{gj87} {\sc Glimm, J.} and {\sc Jaffe, A.} (1987). {\it Quantum physics. A functional integral point of view.} Second edition. Springer-Verlag, New York.

\bibitem[Gross(1983)]{gross83} {\sc Gross, L.} (1983). Convergence of $U(1)_3$ lattice gauge theory to its continuum limit. {\it Comm. Math. Phys.,} {\bf 92} no. 2, 137--162. 

\bibitem[Gross, King and Sengupta(1989)]{gks89} {\sc Gross, L., King, C.} and {\sc Sengupta, A.} (1989). Two-dimensional Yang-Mills theory via stochastic differential equations. {\it Ann. Physics,} {\bf 194} no. 1, 65--112.

\bibitem[Hurwitz(1897)]{hurwitz} {\sc Hurwitz, A.} (1897). \"Uber die Erzeugung der Invarianten durch Integration. {\it Nachr. Ges. Wiss.
G\"ottingen,} 71--90. 

\bibitem[Jaffe and Witten(2006)]{jaffewitten} {\sc Jaffe, A.} and {\sc  Witten, E.} (2006). Quantum Yang--Mills theory. {\it The millennium prize problems,} 129--152, Clay Math. Inst., Cambridge, MA.


\bibitem[Johnson, Kotz and Balakrishnan(1995)]{jkb95} {\sc Johnson, N.~L., Kotz, S.} and {\sc Balakrishnan, N.} (1995). {\it Continuous univariate distributions. Vol. 2.} Second edition. John Wiley \& Sons, Inc., New York.

\bibitem[King(1986a)]{king86a} {\sc King, C.} (1986a). The $U(1)$ Higgs model. I. The continuum limit. {\it Comm. Math. Phys.,} {\bf 102} no. 4, 649--677.

\bibitem[King(1986b)]{king86b} {\sc King, C.} (1986b). The $U(1)$ Higgs model. II. The infinite volume limit. {\it Comm. Math. Phys.,} {\bf 103} no. 2, 323--349.

\bibitem[Klimek and Kondracki(1987)]{kk87} {\sc Klimek, S.} and {\sc Kondracki, W.} (1987). A construction of two-dimensional quantum chromodynamics. 
{\it Comm. Math. Phys.,} {\bf 113}, no. 3, 389--402. 

\bibitem[L\'evy(2003)]{levy03} {\sc L\'evy, T.} (2003). Yang-Mills measure on compact surfaces. {\it Mem. Amer. Math. Soc.,} {\bf 166} no. 790.

\bibitem[L\'evy(2010)]{levy10} {\sc L\'evy, T.} (2010). Two-dimensional Markovian holonomy fields. {\it Ast\'erisque,} No. 329. 

\bibitem[L\'evy(2011)]{levy11} {\sc L\'evy, T.} (2011). Topological quantum field theories and Markovian random fields. {\it Bull. Sci. Math.,} {\bf 135} no. 6-7, 629--649. 

\bibitem[Magnen, Rivasseau and S\'en\'eor(1993)]{mrs93} {\sc Magnen, J.,  Rivasseau, V.} and {\sc S\'en\'eor, R.} (1993).
Construction of $\mathrm{YM}_4$ with an infrared cutoff. 
{\it Comm. Math. Phys.,} {\bf 155} no. 2, 325--383. 

\bibitem[Mehta(1991)]{mehta91} {\sc Mehta, M.~L.} (1991). {\it Random matrices.} Second edition. Academic Press, Inc., Boston, MA. 

\bibitem[Mehta and Dyson(1963)]{md63} {\sc Mehta, M.~L.} and {\sc Dyson, F.~J.} (1963). Statistical theory of the energy levels of complex systems. V. 
{\it J. Mathematical Phys.,} {\bf 4}, 713--719. 

\bibitem[Nguyen(2015)]{nguyen15} {\sc Nguyen, T.} (2015). Quantum Yang--Mills Theory in Two Dimensions: Exact versus Perturbative. {\it arXiv preprint arXiv:1508.06305}

\bibitem[Osterwalder and Seiler(1978)]{os78} {\sc Osterwalder, K.} and {\sc Seiler, E.} (1978). Gauge field theories on a lattice. {\it Ann. Physics,} {\bf 110} no. 2, 440--471.


\bibitem[Rudin(1976)]{rudin} {\sc Rudin, W.} (1976). {\it Principles of mathematical analysis.} Third edition.  McGraw-Hill Book Co., New York-Auckland-D\"usseldorf.


\bibitem[Rudin(1987)]{rudin87} {\sc Rudin, W.} (1987). {\it Real and complex analysis.} Third edition. McGraw-Hill Book Co., New York.


\bibitem[Selberg(1944)]{selberg} {\sc Selberg, A.} (1944). Remarks on a multiple integral. (Norwegian) {\it Norsk Mat. Tidsskr.,} {\bf 26}, 71--78.

\bibitem[Sengupta(1992)]{sengupta92} {\sc Sengupta, A.} (1992). 
The Yang-Mills measure for $S^2$. {\it J. Funct. Anal.,} {\bf 108} no. 2, 231--273. 

\bibitem[Sengupta(1993)]{sengupta93} {\sc Sengupta, A.} (1993). 
Quantum gauge theory on compact surfaces. {\it Ann. Physics,} {\bf 221} no. 1, 17--52. 

\bibitem[Sengupta(1997)]{sengupta97} {\sc Sengupta, A.} (1997). 
Gauge theory on compact surfaces. {\it Mem. Amer. Math. Soc.,} {\bf 126} no. 600.

\bibitem[\"Unsal and Yaffe(2008)]{uy08} {\sc \"Unsal, M.} and {\sc Yaffe, L.~G.}  (2008). Center-stabilized Yang-Mills theory: confinement and large $N$ volume independence. {\it Phys. Rev. D,} {\bf 78} no. 6, 065035.

\bibitem[Weyl(1939)]{weyl39} {\sc Weyl, H.} (1939). {\it The Classical Groups. Their Invariants and Representations.} Princeton University Press, Princeton, N.J.

\bibitem[Wilson(1974)]{wilson74} {\sc Wilson, K.~G.} (1974). Confinement of quarks. {\it Phys. Rev. D,} {\bf 10} no. 8, 2445--2459.


\bibitem[Witten(1991)]{witten91} {\sc Witten, E.} (1991). 
On quantum gauge theories in two dimensions. 
{\it Comm. Math. Phys.,} {\bf 141} no. 1, 153--209. 

\bibitem[Witten(1992)]{witten92} {\sc Witten, E.} (1992).
Two-dimensional gauge theories revisited.  
{\it J. Geom. Phys.,} {\bf 9} no. 4, 303--368. 


\end{thebibliography}

\end{document}